\theoremstyle{plain}
\newtheorem{theorem}{Theorem}[section]
\newtheorem{lemma}[theorem]{Lemma}
\newtheorem{proposition}[theorem]{Proposition}
\newtheorem{corollary}[theorem]{Corollary}
\newtheorem{remark}[theorem]{Remark}
\newtheorem{example}[theorem]{Example}
\theoremstyle{definition}
\newtheorem{definition}[theorem]{Definition}
\newcommand{\norm}[2][]{\left\|#2\right\|_{#1}}
\newcommand{\ball}[2]{B_{#1}(#2)}
\newcommand{\ket}[1]{\left|#1\right\rangle}
\newcommand{\ketbra}[2]{\left|#1\middle\rangle\!\middle\langle#2\right|}
\newcommand{\setbuild}[2]{\left\{#1\middle|#2\right\}}
\DeclareMathOperator{\domain}{dom}
\DeclareMathOperator{\relativeinterior}{relint}
\DeclareMathOperator{\boundeds}{\mathcal{B}}
\DeclareMathOperator{\states}{\mathcal{S}}
\DeclareMathOperator{\Tr}{Tr}
\DeclareMathOperator{\GL}{GL}
\DeclareMathOperator{\Hom}{Hom}
\DeclareMathOperator{\support}{supp}
\DeclareMathOperator{\principalminor}{pm}
\DeclareMathOperator{\id}{id}
\newcommand{\ed}{\mathop{}\!\mathrm{d}}
\let\Re\undefined
\DeclareMathOperator{\Re}{Re}
\newcommand{\ratefunction}[1]{I_{#1}}
\newcommand{\nonlinearpairing}[2]{{(#1,#2)_K}}
\newcommand{\pairing}[2]{{\langle #1,#2\rangle}}
\newcommand{\action}[2]{{#1\cdot #2}}
\newcommand{\positivechamber}{i\mathfrak{t}^*_+}
\DeclareMathOperator{\Ad}{Ad}
\title{Large deviation principle for moment map estimation}
\author[1]{Alonso Botero}
\author[2]{Matthias Christandl}
\author[3,4]{P\'eter Vrana}
\affil[1]{Departamento de F\'isica, Universidad de los Andes, Cra 1 No 18A-12, Bogot\'a, Colombia}
\affil[2]{QMATH, Department of Mathematical Sciences, University of Copenhagen, Universitetsparken 5, 2100 Copenhagen, Denmark}
\affil[3]{Institute of Mathematics, Budapest University of Technology and Economics, Egry J\'ozsef u. 1., 1111 Budapest, Hungary}
\affil[4]{MTA-BME Lend\"ulet Quantum Information Theory Research Group}
\date{}
\begin{document}
\maketitle
\begin{center}
\emph{We dedicate this work to the memory of Graeme Mitchison}
\end{center}

\begin{abstract}
Given a representation of a compact Lie group and a state we define a probability measure on the coadjoint orbits of the dominant weights by considering the decomposition into irreducible components. For large tensor powers and independent copies of the state we show that the induced probability distributions converge to the value of the moment map. For faithful states we prove that the measures satisfy the large deviation principle with an explicitly given rate function.
\end{abstract}

\section{Introduction}

This paper is concerned with probability distributions related to decompositions of tensor power representations into irreducibles. More precisely, given a representation $\pi$ of a compact Lie group $K$ on a finite dimensional Hilbert space $\mathcal{H}$ as well as a positive operator $\rho$ on $\mathcal{H}$ with unit trace (a state), for every $n$ the values $\Tr P_\lambda\rho^{\otimes n}$ determine a probability distribution on the set of dominant weights $\lambda$ of $K$.

The asymptotic behaviour of such distributions has been studied by many authors with different motivations. In the context of random walks, it was shown in \cite{grabiner1993random} that counting the multiplicities of irreducible representations in certain tensor power representations is equivalent to enumerating the number of walks for ``reflectable'' walk types, conditioned on staying within a Weyl chamber. This class of random walks was introduced by Gessel and Zeilberger in \cite{gessel1992random} as a generalisation of the classical ballot problem \cite{andre1887solution,zeilberger1983andre} to finite reflection groups. In \cite{tate2004lattice}, Tate and Zelditch analysed the multiplicities in high tensor powers and proved a central limit theorem as well as large deviation results by relating them to the (much simpler) weight multiplicity asymptotics. Moving away from the tracial state to positive operators arising from the representation of the complexified group, Postnova and Reshetikhin \cite{postnova2020multiplicities} generalised these asymptotic formulas to character distributions.

A different viewpoint is provided by interpreting these processes as restrictions of a random walk on a noncommutative space, the dual of the compact Lie group $K$, to a classical subalgebra \cite{biane1991quantum,parthasarathy1990generalized}. More precisely, given a state on the group von Neumann algebra $\mathop{vN}(K)$ one constructs a quantum Markov chain on the infinite tensor product $\mathop{vN}(K)^{\otimes\infty}$, and the probability distributions in question are obtained by restriction to the center $Z(\mathop{vN}(K))$, which can be identified with the $\ell^\infty$ space on the set of isomorphism classes of irreducible representations of $K$ (see also \cite{biane1995permutation}).

In the context of statistical mechanics, Ceg\l a, Lewis and Raggio investigated the multiplicities arising from the isotypic decomposition of tensor products of representations of the group $SU(2)$, and proved a large deviation principle \cite{cegla1988free}. Duffield \cite{duffield1990large} extended their result to an arbitrary compact semisimple Lie group using the G\"artner--Ellis theorem \cite[Theorem II.6.1.]{ellis1985entropy}.

In quantum statistics, Alicki, Rudicki and Sadowski \cite{alicki1988symmetry}, and later Keyl and Werner \cite{keyl2001estimating} proposed an estimator for the spectrum of the density operator, which is based on the decomposition of the tensor powers of the defining representation of $SU(d)$ (or $U(d)$). Based on Duffield's result, Keyl and Werner found the rate function for the exponential decay to be the relative entropy between the normalised Young diagram labelling the irreducible representation and the nonincreasingly ordered spectrum of the state. In \cite{keyl2006quantum} Keyl refined the estimator to a continuous positive operator valued measure (POVM) estimating both the spectrum and the eigenvectors of an unknown state and proved a large deviation principle in that setting. The appearance of the relative entropy in the result of Keyl and Werner suggests that similar large deviation rate functions should be viewed as information quantities. In \cite{vrana2023family} a family of entanglement measures have been constructed based on the rate function corresponding to the standard representation of products of unitary groups. While studying a tripartite extension of the Matsumoto--Hayashi universal distortion-free entanglement concentration protocol \cite{matsumoto2007universal}, Botero and Mej\'ia \cite{botero2018universal} found formulas for the probabilities induced by the isotypic decomposition or tensor powers of the $U(2)^n$-representation ${\mathbb{C}^2}^{\otimes n}$ when the state is in the $W$ class.

We consider the following common generalisation of these problems, including Keyl's refinement \cite{keyl2006quantum}. Let $K$ be a compact connected Lie group and $\pi:K\to U(\mathcal{H})$ a finite dimensional unitary representation. Then $\frac{1}{2\pi i}J:\states(\mathcal{H})\to\mathfrak{k}^*$ where $\pairing{J(\rho)}{\xi}=\Tr(\rho (T_e\pi)(\xi))$ can be regarded as an equivariant moment map ($T_e\pi$ is the derivative of $\pi$ at the identity). We wish to construct a measurement on $\mathcal{H}^{\otimes m}$ with outcomes in $i\mathfrak{k}^*$ that estimates $J(\rho)$ when applied to $m$ independent copies of a state $\rho$. The tensor powers of $\mathcal{H}$ can be decomposed into isotypic components as
\begin{equation}
\mathcal{H}^{\otimes m}\simeq\bigoplus_{\lambda}\mathcal{H}_\lambda\otimes\Hom_K(\mathcal{H}_\lambda,\mathcal{H}^{\otimes m}),
\end{equation}
where the sum is over dominant integral weights $\lambda$, $\mathcal{H}_\lambda$ is an irreducible representation of $K$ with highest weight $\lambda$ and $\Hom_K(\mathcal{H}_\lambda,\mathcal{H}^{\otimes m})$ is the multiplicity space. Let $\ket{v_\lambda}$ be a highest weight vector of norm $1$. Then $\action{K}{\ketbra{v_\lambda}{v_\lambda}}$ can be identified (via the moment map) with the orbit of $\lambda$ in $i\mathfrak{k}^*$ under the (complexified) coadjoint action. Weighted with the suitably normalised invariant measure on the orbit and tensored with the identity operator on the multiplicity space, these projections give rise to a POVM $E_{\mathcal{H}^{\otimes m}}$ from the Borel $\sigma$-algebra of $i\mathfrak{k}^*$ to $\boundeds(\mathcal{H}^{\otimes m})$. In the special case $K=U(d)$ and $\pi$ the standard representation, this measure is the same as the one proposed in \cite{keyl2006quantum}.

If $\rho$ is a state on $\mathcal{H}$ then we can form the sequence of probability measures $\mu_m(A)=\Tr\rho^{\otimes m}E_{\mathcal{H}^{\otimes m}}(mA)$. These measures are interpreted as the probability distribution of the (rescaled) random classical outcome of the measurements that are described by the POVM $E_{\mathcal{H}^{\otimes m}}$. We find that the measures $\mu_m$ converge weakly to the Dirac measure concentrated at $J(\rho)$ and the convergence is exponentially fast.

To formulate a more precise statement, choose a Borel subgroup $B$ in the complexification of $K$, let $N$ be its maximal unipotent subgroup, let $\mathfrak{a}=i\mathfrak{t}$ where $\mathfrak{t}$ is the Lie algebra of the maximal torus $T=K\cap B$, and let $\positivechamber\subseteq i\mathfrak{t}^*$ be the closure of the positive Weyl chamber. Then every element $x\in i\mathfrak{k}^*$ can be written as $x=\action{h}{x_0}$ with a unique $x_0\in \positivechamber$ and a non-unique $h\in K$. Let
\begin{equation}
\ratefunction{\rho}(\action{h}{x_0})=\sup_{\alpha\in\mathfrak{a}}\max_{n\in N}\pairing{x_0}{\alpha}-\ln\Tr\pi(n)^*\pi(\exp\alpha/2)\pi(h)^*\rho\pi(h)\pi(\exp\alpha/2)\pi(n).
\end{equation}
We will show in Section~\ref{sec:ratefunction} that $\ratefunction{\rho}$ is well defined and is a good rate function (i.e. has compact sub-level sets).
\begin{theorem}[Large deviation principle]\label{thm:LDP}
Let $\mathcal{H}$ be a finite dimensional Hilbert space, $K$ a compact connected group, $\pi:K\to U(\mathcal{H})$ and $\mu_m$ as above.
\begin{enumerate}[(i)]
\item\label{it:LDPupper} For every state $\rho$ and closed subset $C\subseteq i\mathfrak{k}^*$ we have
\begin{equation}
\limsup_{m\to\infty}\frac{1}{m}\ln\mu_m(C)\le-\inf_{x\in C}\ratefunction{\rho}(x).
\end{equation}
\item\label{it:LDPlower} For every faithful state $\rho$ and open subset $O\subseteq i\mathfrak{k}^*$ we have
\begin{equation}\label{eq:LDPlower}
\liminf_{m\to\infty}\frac{1}{m}\ln\mu_m(O)\ge-\inf_{x\in O}\ratefunction{\rho}(x).
\end{equation}
\end{enumerate}
\end{theorem}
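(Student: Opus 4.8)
Here is the plan.

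\medskip

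The starting point is the explicit form of $\mu_m^\rho$ obtained from the isotypic decomposition. Writing $P_\lambda$ for the projection onto $\mathcal{H}_\lambda\otimes\Hom_K(\mathcal{H}_\lambda,\mathcal{H}^{\otimes m})$ and using that $E_{\mathcal{H}^{\otimes m}}$ is built from the invariant measure on $\action{K}{\ketbra{v_\lambda}{v_\lambda}}$ tensored with the identity on the multiplicity space, one has, for every Borel set $C$,
\begin{equation*}
\mu_m^\rho(C)=\sum_\lambda\dim\mathcal{H}_\lambda\int_K\mathbf 1\!\left[\action{k}{\tfrac\lambda m}\in C\right]\Tr\!\Big((\pi(k)^*\rho\,\pi(k))^{\otimes m}\big(\ketbra{v_\lambda}{v_\lambda}\otimes\id\big)\Big)\,dk ,
\end{equation*}
the integrand depending on $k$ only through $\action{k}{\lambda}$. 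Two structural facts drive everything: first, $\sum_\lambda\dim\mathcal{H}_\lambda$ over the dominant weights occurring in $\mathcal{H}^{\otimes m}$ grows only polynomially in $m$ (Weyl's dimension formula and a box count); second, for $b=\exp(\alpha/2)n$ with $\alpha\in\mathfrak{a}$, $n\in N$, the highest weight vector satisfies $\pi(b)^{\otimes m}\ket{v_\lambda}=e^{\pairing{\lambda}{\alpha/2}}\ket{v_\lambda}$, so that
\begin{equation*}
\ketbra{v_\lambda}{v_\lambda}\otimes\id=e^{-\pairing{\lambda}{\alpha}}\,\pi(b)^{\otimes m}\big(\ketbra{v_\lambda}{v_\lambda}\otimes\id\big)\pi(b)^{\otimes m*}.
\end{equation*}

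For the upper bound I would combine these two identities. Inserting the second one into the summand and using $\ketbra{v_\lambda}{v_\lambda}\otimes\id\le\id$ together with $(\pi(k)\pi(b))^*\rho\,(\pi(k)\pi(b))\ge0$ gives, for every $\alpha\in\mathfrak{a}$ and $n\in N$,
\begin{equation*}
\Tr\!\Big((\pi(k)^*\rho\,\pi(k))^{\otimes m}\big(\ketbra{v_\lambda}{v_\lambda}\otimes\id\big)\Big)\le e^{-\pairing{\lambda}{\alpha}}\big(\Tr\pi(n)^*\pi(\exp\alpha/2)\pi(k)^*\rho\,\pi(k)\pi(\exp\alpha/2)\pi(n)\big)^m .
\end{equation*}
Taking the infimum over $(\alpha,n)$ and recognising the definition of the rate function (with $k$ in the role of $h$, legitimate since $\lambda/m\in\positivechamber$) bounds the summand by $e^{-m\ratefunction{\rho}(\action{k}{\lambda/m})}$; since on the support of the integral $\action{k}{\lambda/m}\in C$ and $\sum_\lambda\dim\mathcal{H}_\lambda$ is polynomial, one obtains $\mu_m^\rho(C)\le e^{-m\inf_{x\in C}\ratefunction{\rho}(x)+o(m)}$ for \emph{every} $C$. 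No exponential tightness or faithfulness is needed, which matches part~(i).

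For the lower bound it suffices to show, for faithful $\rho$ and every $x=\action{h}{x_0}$ with $x_0\in\positivechamber$ and $\ratefunction{\rho}(x)<\infty$, that $\liminf_m\tfrac1m\ln\mu_m^\rho(\ball{\delta}{x})\ge-\ratefunction{\rho}(x)$ for each $\delta>0$. By $K$-equivariance of $E_{\mathcal{H}^{\otimes m}}$ and of the rate function we may replace $\rho$ by $\sigma=\pi(h)^*\rho\,\pi(h)$ and $x$ by $x_0$. Faithfulness makes the supremum in $\ratefunction{\sigma}(x_0)$ attained at some $(\alpha_\star,n_\star)$ (the trace is coercive in $(\alpha,n)$ because $\sigma\ge c\id$, and bounded above via the weight polytope of $\mathcal{H}$; the relevant quantitative statements come from Section~\ref{sec:ratefunction}). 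Put $b_\star=\exp(\alpha_\star/2)n_\star$ and $\tau=\pi(b_\star)^*\sigma\,\pi(b_\star)/Z$ with $Z=\Tr\sigma\,\pi(b_\star b_\star^*)=e^{\pairing{x_0}{\alpha_\star}-\ratefunction{\sigma}(x_0)}$. The stationarity conditions identify $\tau$ as the faithful state whose moment map is exactly $x_0$, and the tilting identity of the first paragraph yields $\braket{v_\lambda}{\sigma^{\otimes m}v_\lambda}=e^{m(\pairing{x_0-\lambda/m}{\alpha_\star}-\ratefunction{\sigma}(x_0))}\braket{v_\lambda}{\tau^{\otimes m}v_\lambda}$.

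Now I would invoke the convergence theorem of the earlier section, applied to $\tau$: $\mu_m^\tau\to\delta_{x_0}$. Since only polynomially many $\lambda$ contribute near $x_0$, some $\lambda^{(m)}$ with $\lambda^{(m)}/m\to x_0$ carries isotypic weight $\Tr(\tau^{\otimes m}P_{\lambda^{(m)}})\ge m^{-O(1)}$; and the concentration of the \emph{full} law at the dominant point $x_0$ forces the associated reduced state on $\mathcal{H}_{\lambda^{(m)}}$ to have moment map near the highest weight $\lambda^{(m)}$, hence (the highest weight being an extreme point of the moment polytope of $\mathcal{H}_{\lambda^{(m)}}$ with one–dimensional preimage) to lie within $o(1)$ in trace norm of $\ketbra{v_{\lambda^{(m)}}}{v_{\lambda^{(m)}}}$. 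Thus $\braket{v_{\lambda^{(m)}}}{\tau^{\otimes m}v_{\lambda^{(m)}}}\ge m^{-O(1)}$ and, by the tilting identity, $\tfrac1m\ln\braket{v_{\lambda^{(m)}}}{\sigma^{\otimes m}v_{\lambda^{(m)}}}\to-\ratefunction{\sigma}(x_0)$. Retaining only the $\lambda^{(m)}$–term in $\mu_m^\sigma(\ball{\delta}{x_0})$ and integrating $k$ over a ball of radius $c/m$ about the identity — on which the corresponding coherent states stay inside the relevant region and the overlap drops by at most a constant factor — gives $\mu_m^\sigma(\ball{\delta}{x_0})\ge\dim\mathcal{H}_{\lambda^{(m)}}\cdot\mathrm{vol}\cdot\tfrac12\braket{v_{\lambda^{(m)}}}{\sigma^{\otimes m}v_{\lambda^{(m)}}}\ge e^{-m\ratefunction{\sigma}(x_0)-o(m)}$, as required. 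The delicate point of the whole argument is precisely this transfer: passing from the weak convergence $\mu_m^\tau\to\delta_{x_0}$ to a subexponential lower bound on the single highest–weight overlap $\braket{v_{\lambda^{(m)}}}{\tau^{\otimes m}v_{\lambda^{(m)}}}$ — equivalently, controlling $d\mu_m^\rho/d\mu_m^\tau$ near $x$, which carries the Jacobian of the complexified group action on the coadjoint orbit — together with verifying the stationarity and attainment claims that make $\tau$ the correctly aligned tilted state. I expect this to be the main technical obstacle.
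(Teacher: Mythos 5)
Your upper-bound argument is essentially the paper's: you rewrite the integrand using the eigenvalue factor of a highest weight vector under the Borel, drop $\ketbra{v_\lambda}{v_\lambda}\otimes\id\le\id$, optimize over $B$, and sum up a polynomially bounded number of polynomially bounded dimensions. That is Proposition~\ref{prop:LDPupper} almost verbatim (the paper optimizes over $g\in G$ and invokes Definition~\ref{def:ratefunctiondef}, but since $x_0\in\positivechamber$ these coincide by Proposition~\ref{prop:ratefunctionexpressions}).

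For the lower bound your overall strategy (tilt $\rho$ by a Borel element, apply the law of large numbers to the tilted state, then translate back) is the same as the paper's, but the crucial link is missing, and you identify it yourself: passing from $\mu_m^\tau(\ball{\delta}{x_0})\to 1$ to a subexponential lower bound on the single overlap $\braket{v_{\lambda^{(m)}}}{\tau^{\otimes m}v_{\lambda^{(m)}}}$. The LLN tells you that mass accumulates in $\ball{\delta}{x_0}$; pulled back along $J_\lambda$ for $\lambda/m\approx x_0$ this is a neighbourhood of the highest weight ray whose $\nu_\lambda$-measure is \emph{not} small for fixed $\delta$ (the set $m\ball{\delta}{x_0}$ has radius $m\delta$, comparable to the diameter of $K\cdot\lambda$), so the conclusion that the reduced state on $\mathcal{H}_{\lambda^{(m)}}$ is $o(1)$-close to $\ketbra{v_{\lambda^{(m)}}}{v_{\lambda^{(m)}}}$ in trace norm does not follow from this alone. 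The paper closes this gap with a structurally different device: Lemma~\ref{lem:limequalssup} shows (via a superadditivity/compounding argument using tensoring of highest-weight vectors and an analytic--density trick) that $\lim_m\frac1m\ln\mu_m(O)$ \emph{equals the supremum over all finite $m,\lambda,h$} of the normalized log-overlap $\frac1m\ln\Tr\rho^{\otimes m}p_{\lambda,\pi^m}(\action{h}{[v_\lambda]})$ with $\action{h}{\lambda/m}\in O$. Once you have this, the LLN (which gives that the corresponding limit for $\tau$ is $0$) immediately yields a lower bound via the tilting identity (Proposition~\ref{prop:tilting}), with no need to locate a single good $\lambda^{(m)}$. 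Without Lemma~\ref{lem:limequalssup} (or an equivalent supermultiplicativity statement) your argument does not go through.

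Two further issues. First, you assert that for faithful $\rho$ the supremum in $\ratefunction{\sigma}(x_0)$ is attained at some $(\alpha_\star,n_\star)$; the attainment over $N$ is Lemma~\ref{lem:unipotentminlength}, but attainment over $\mathfrak{a}$ is only guaranteed when $x_0$ lies in the \emph{interior} of the weight polytope. For boundary points of $\domain\ratefunction{\rho}$ it can fail, which is exactly why the paper proves Proposition~\ref{prop:specialx} and~\ref{prop:specialxdense} with a \emph{sequence} $g_j$ and then needs the continuity statement of Proposition~\ref{prop:invertiblecontinuity}. Second, your final step ``integrate $k$ over a ball of radius $c/m$'' produces a factor of order $m^{-\dim(K/K_\lambda)}$, which is fine, but it relies on the unproven pointwise lower bound discussed above and does not by itself recover all $x\in\domain\ratefunction{\rho}$ (you would still need the density and continuity argument to reach the closure).

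So: part~(i) is correctly reconstructed; for part~(ii) the plan is recognizable but the central technical lemma (the limit-equals-supremum identity) and the density-plus-continuity endgame are missing, and your substitute for them (pointwise overlap concentration from the LLN) does not work in the form stated.
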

For faithful states the theorem says that the measures $\mu_m$ satisfy the large deviation principle with rate function $\ratefunction{\rho}$. For general states we can only prove a weaker version of \eqref{eq:LDPlower}, replacing $O$ on the right hand side with $O\cap\mathcal{M}_\rho$ where $\mathcal{M}_\rho$ is a dense subset of $\domain\ratefunction{\rho}$. In the examples below $\ratefunction{\rho}$ is continuous on its domain, therefore the stronger conclusion \eqref{eq:LDPlower} still holds even if $\rho$ is not invertible.

In addition, we show that $\ratefunction{\rho}$ only vanishes at $J(\rho)$, which identifies the weak limit as a Dirac measure:
\begin{theorem}[Law of large numbers]\label{thm:LLN}
For every state $\rho$ and open set $O\subseteq i\mathfrak{k}^*$ such that $J(\rho)\in O$ we have
\begin{equation}
\lim_{m\to\infty}\mu_m(O)=1.
\end{equation}
\end{theorem}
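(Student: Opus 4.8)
The plan is to obtain Theorem~\ref{thm:LLN} as a soft consequence of the large deviation upper bound in Theorem~\ref{thm:LDP} (valid for every state, not only faithful ones), together with one extra ingredient: that $\ratefunction{\rho}$ vanishes only at $J(\rho)$. Granting that ingredient, fix an open $O\subseteq i\mathfrak{k}^*$ with $J(\rho)\in O$ and let $C=i\mathfrak{k}^*\setminus O$, which is closed and omits $J(\rho)$. If $C=\emptyset$ then $\mu_m(O)=1$ for all $m$; otherwise put $\delta=\inf_{x\in C}\ratefunction{\rho}(x)$, and note $\delta>0$: were $\delta=0$, a sequence $x_k\in C$ with $\ratefunction{\rho}(x_k)\to 0$ would eventually lie in the compact sub-level set $\{\ratefunction{\rho}\le 1\}$ (here $\ratefunction{\rho}$ being a good rate function is used), a subsequential limit $x^\star$ would lie in $C$ (closed) with $\ratefunction{\rho}(x^\star)\le\liminf_k\ratefunction{\rho}(x_k)=0$ by lower semicontinuity, forcing $x^\star=J(\rho)\in C$, a contradiction. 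The upper bound applied to $C$ then gives $\limsup_{m\to\infty}\tfrac1m\ln\mu_m(C)\le-\delta<0$, so $\mu_m(C)\to 0$ exponentially fast, and as each $\mu_m$ is a probability measure $\mu_m(O)=1-\mu_m(C)\to 1$.

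It remains to prove that $\ratefunction{\rho}$ vanishes exactly at $J(\rho)$. Non-negativity is immediate (take $\alpha=0$, $n=e$ in the definition). For the rest, write a point as $\action{h}{x_0}$ with $x_0\in\positivechamber$, set $\sigma=\pi(h)^*\rho\pi(h)$ (so $J(\sigma)=\action{h^{-1}}{J(\rho)}$ by equivariance), and keep only $n=e$ in the defining supremum to get $\ratefunction{\rho}(\action{h}{x_0})\ge\sup_{\alpha\in\mathfrak{a}}\big(\pairing{x_0}{\alpha}-\psi_\sigma(\alpha)\big)$, the convex conjugate at $x_0$ of $\psi_\sigma(\alpha)=\ln\Tr\pi(\exp\alpha)\sigma=\ln\sum_w\sigma_{ww}e^{w(\alpha)}$ (diagonal entries in a weight basis), which is smooth and convex, vanishes at $\alpha=0$, and by the moment-map identity $\pairing{J(\tau)}{\xi}=\Tr(\tau\,(T_e\pi)(\xi))$ has differential at $0$ equal to the $i\mathfrak{t}^*$-part of $J(\sigma)$. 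This conjugate is therefore strictly positive unless $x_0$ equals that $i\mathfrak{t}^*$-part; and when it does, either $J(\sigma)\in i\mathfrak{t}^*$, whence $\action{h}{x_0}=\action{h}{J(\sigma)}=J(\rho)$, or $J(\sigma)\notin i\mathfrak{t}^*$ (equivalently $\action{h}{x_0}\neq J(\rho)$), in which case, taking $\alpha=0$, any $X$ in the Lie algebra of $N$ with $\Tr(\sigma\,(T_e\pi)(X))\neq 0$ makes $t\mapsto\ln\Tr\pi(\exp tX)^*\sigma\pi(\exp tX)$ dip below its value $0$ at $t=0$, again forcing $\ratefunction{\rho}(\action{h}{x_0})>0$. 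This establishes $\ratefunction{\rho}>0$ off $J(\rho)$ and reduces $\ratefunction{\rho}(J(\rho))=0$ to the inequality $\Tr\pi(n)^*\pi(\exp\alpha/2)\rho\pi(\exp\alpha/2)\pi(n)\ge e^{\pairing{x_0}{\alpha}}$ holding for all $\alpha\in\mathfrak{a}$ and $n\in N$ whenever $J(\rho)=x_0\in\positivechamber$. The case $n=e$ is just $\psi_\rho(\alpha)\ge\pairing{x_0}{\alpha}$, from convexity of $\psi_\rho$ and $d\psi_\rho(0)=x_0$; and if $\rho$ is diagonal in a weight basis the general case follows since then the left-hand side equals $\sum_w\rho_{ww}e^{w(\alpha)}(\pi(n)\pi(n)^*)_{ww}$ and $(\pi(n)\pi(n)^*)_{ww}=\sum_{w'}|\pi(n)_{ww'}|^2\ge|\pi(n)_{ww}|^2=1$, $\pi(n)$ being unipotent.

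The one genuine obstacle is this last inequality for a general state $\rho$ that cannot be conjugated into a weight basis, where the elementary computation breaks down; one must instead invoke the structural analysis of $\ratefunction{\rho}$ carried out in Section~\ref{sec:ratefunction}. The natural framework is to read $\action{h}{x_0}\mapsto\ratefunction{\rho}(\action{h}{x_0})$, after the substitution $\sigma=\pi(h)^*\rho\pi(h)$, as a Legendre-type transform of the geodesically convex function $g\mapsto\ln\Tr(\sigma\pi(gg^*))$ on the Riemannian symmetric space attached to the complexification of $K$ (with $\pairing{x_0}{\cdot}$ entering through the geodesically convex Iwasawa $\mathfrak{a}$-projection); the defining supremum is attained, and a Kempf--Ness/Hilbert--Mumford type argument identifies the minimiser of the associated convex problem with the moment map, which pins the value at $J(\rho)$ to $0$. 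Once that is available, the rest of the proof of Theorem~\ref{thm:LLN} is precisely the soft argument of the first paragraph.
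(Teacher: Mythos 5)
Your first paragraph is the paper's argument in slightly different clothing: the paper invokes Proposition~\ref{prop:hasminimum} (existence of a minimum of $\ratefunction{\rho}$ on $C=i\mathfrak{k}^*\setminus O$, obtained from lower semicontinuity and Proposition~\ref{prop:precompactdomain}) and then Proposition~\ref{prop:ratefunctionnonnegative} to conclude that this minimum is strictly positive, whereas you reach the same conclusion by combining compactness of the sub-level sets with lower semicontinuity. These are equivalent, and the deduction from the upper bound is the same.

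Your second paragraph proves the key ingredient $\ratefunction{\rho}(x)>0$ for $x\neq J(\rho)$ by a route that differs somewhat from the paper's Proposition~\ref{prop:ratefunctionnonnegative}. There, one checks that $e$ is not a critical point of the smooth map $g\mapsto\ln\chi_x(g)-\ln\Tr\pi(g)^*\rho\pi(g)$ on the parabolic $G_x$, by computing the derivatives along $\action{h}{\mathfrak{a}}$ and $\action{h}{\mathfrak{n}}$. You instead bound $\ratefunction{\rho}$ from below by the Legendre conjugate of $\psi_\sigma$ at $x_0$ (keeping $n=e$), and handle the $N$-direction separately when that bound degenerates. The case split is sound, although some verifications are left implicit -- notably that $J(\sigma)\notin i\mathfrak{t}^*$ really supplies an $X\in\mathfrak{n}$ with $\Re\Tr(\sigma\,T_e\pi(X))\neq 0$; this holds because $\mathfrak{n}$ is a complex subspace and the negative root spaces are obtained from the positive ones by taking adjoints, so a nonvanishing component of $J(\sigma)$ off $\mathfrak{t}$ is detected already by the positive root vectors and one of $X,iX$ has the required nonzero real part.

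Your third paragraph chases a statement that is not needed for the law of large numbers. The soft argument of your first paragraph requires only $\ratefunction{\rho}\ge 0$ and the implication $\ratefunction{\rho}(x)=0\Rightarrow x=J(\rho)$, which is precisely what paragraph two establishes; you never need to know that $\ratefunction{\rho}(J(\rho))=0$. The incomplete Kempf--Ness/Hilbert--Mumford sketch and the normalization inequality you were trying to push through can therefore simply be dropped. Incidentally, the paper does record $\ratefunction{\rho}(J(\rho))=0$ as Corollary~\ref{cor:ratezeroatmomentmapvalue}, but obtains it painlessly as a by-product of the upper bound: applying Proposition~\ref{prop:LDPupper} to $C=i\mathfrak{k}^*$ yields $0\le-\min\ratefunction{\rho}$, the minimum is attained and nonnegative, and Proposition~\ref{prop:ratefunctionnonnegative} forces it to sit at $J(\rho)$. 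That sidesteps the inequality that stalled you entirely.
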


If one is only interested in the way $\rho^{\otimes m}$ distributes the probabilty among the isotypic components, then it is possible to extract this coarse-grained information by taking the pushforward of the measure $\mu_m$ along the continuous function that sends $x\in i\mathfrak{k}^*$ to the unique element $x_0\in (\action{K}{x})\cap \positivechamber$. According to the contraction principle, for invertible states these measures also satisfy the large deviation principle with rate function
\begin{equation}\label{eq:contraction}
\begin{split}
\tilde{I}_\rho(x_0)
 & = \inf_{h\in K}\ratefunction{\rho}(\action{h}{x_0})  \\
 & = \inf_{h\in K}\sup_{\alpha\in\mathfrak{a}}\max_{n\in N}\pairing{x_0}{\alpha}-\ln\Tr\pi(n)^*\pi(\exp\alpha/2)\pi(h)^*\rho\pi(h)\pi(\exp\alpha/2)\pi(n).
\end{split}
\end{equation}

Our result reduces to the formula of Ceg\l a, Lewis and Raggio \cite{cegla1988free} and to \cite[Theorem 2.1.]{duffield1990large} and \cite[Corollary 4.]{tate2004lattice} when $\rho=\frac{I}{\dim\mathcal{H}}$ and to Keyl's rate function \cite[Theorem 3.2]{keyl2006quantum} when $K=U(d)$, $\mathcal{H}=\mathbb{C}^d$ and $\pi$ is the standard representation. From the latter the result of Keyl--Werner \cite{keyl2001estimating} follows by the contraction principle. Setting $K=U(1)^d$ we also recover Cram\'er's large deviation theorem \cite{cramer1938nouveau} in the special case of finitely supported integer-valued random variables.

The key element of our proof can be viewed as a noncommutative generalisation of the exponential tilting method, applied directly to the quantum states \emph{before} projecting onto the classical subalgebra (or pairing with the POVM). More precisely, we replace the state $\rho^{\otimes n}$ with the transformed state $(\pi(g)^*\rho\pi(g))^{\otimes n}$, where $g$ is an element of the complexification of $K$. Consequently, the transformed measures are not obtained by multiplication with a suitable function and must be related to the original one more carefully, also taking into account the (complexified) coadjoint action of $K$. A substantial part of our work is the development of these techniques in Section~\ref{sec:coadjoint}.

The paper is organised as follows. In Section~\ref{sec:preliminaries} we fix the notation and collect some facts related to the representations and structure of compact Lie groups and their complexifications, and to positive operator valued measures. Section~\ref{sec:estimation} contains the proof of our main results: in Section~\ref{sec:measurements} we define the POVM used in the estimation scheme, in Section~\ref{sec:coadjoint} we introduce an action of the complexification of $K$ on $i\mathfrak{k}^*$, in Section~\ref{sec:ratefunction} we give several formulas for the rate function and prove some of its key properties, in Section~\ref{sec:upperbound} we prove the large deviation upper bound, in Section~\ref{sec:lawoflargenumbers} we prove weak convergence to the value of the moment map and in Section~\ref{sec:lowerbound} we address the large deviation lower bound.

\paragraph{Related work.}
Closely related work has been done independently by Cole Franks and Michael Walter \cite{franks2020minimal}.

\section{Preliminaries}\label{sec:preliminaries}

Throughout Hilbert spaces are assumed to be finite dimensional and the inner product $\langle\cdot,\cdot\rangle$ is linear in the second argument. $U(\mathcal{H})$ denotes the group of unitary operators on $\mathcal{H}$. We denote by $\states(\mathcal{H})$ the set of positive semidefinite operators on $\mathcal{H}$ with trace equal to $1$ (states). Every state $\rho\in\states(\mathcal{H})$ admits a purification, i.e. $\rho=\Tr_{\mathbb{C}^d}\ketbra{\psi}{\psi}$ for some unit vector $\psi\in\mathcal{H}\otimes\mathbb{C}^d$, where $\Tr_{\mathbb{C}^d}$ is the partial trace and $\ketbra{\psi}{\psi}$ is the orthogonal projection onto the subspace spanned by $\psi$. Below we will collect the necessary standard results on compact Lie groups and their complexifications. More details can be found in many textbooks, e.g. \cite{knapp2001representation}.

\subsection{Complexification}

Let $K$ be a compact connected Lie group, $\mathfrak{k}=T_eK$ its Lie algebra ($T_e$ stands for the tangent space at the identity when applied to a Lie group and the derivative at the identity when applied to homomorphisms). The complexification $G=K_\mathbb{C}$ is a complex Lie group together with an inclusion $K\to G$, defined by the property that every smooth homomorphism from $K$ to a complex Lie group extends uniquely to a holomorphic homomorphism from $G$. $G$ is a reductive group and its Lie algebra is $\mathfrak{g}=\mathbb{C}\otimes_{\mathbb{R}}\mathfrak{k}$. We identify $\mathfrak{k}^*$ with the subspace of $\mathfrak{g}^*$ that consists of functionals that are real on $\mathfrak{k}$. We use angle brackets $\pairing{\cdot}{\cdot}$ for the pairing between a vector space and its dual.

The group multiplication gives a diffeomorphism $K\times P\to G$ where $P=\exp(i\mathfrak{k})$. The (global) Cartan involution is a group homomorphism $\Theta:G\to G$ that fixes $K$ and acts on $p\in P$ as $\Theta(p)=p^{-1}$. $K$ is the fixed point set of $\Theta$. For $g\in G$ we define $g^*=\Theta(g)^{-1}$. The exponential map provides a diffeomorphism between $i\mathfrak{k}$ and $P$.

A finite dimensional unitary representation $\pi:K\to U(\mathcal{H})$ extends uniquely to a homomorphism $G\to\GL(\mathcal{H})$ as complex Lie groups. The extension, denoted with the same symbol, satisfies $\pi(g^*)=\pi(g)^*$.

\begin{example}\label{ex:complexification:torus}
Let $K=U(1)^d$ be a torus. Then $\mathfrak{k}$ can be identified with $i\mathbb{R}^d$, $G=(\mathbb{C}^\times)^d$ and $\mathfrak{g}$ is the Lie algebra $\mathbb{C}^d$. $i\mathfrak{k}\simeq\mathbb{R}^d$ and $P\simeq\mathbb{R}_{>0}^d$. $\Theta$ sends a $d$-tuple $g=(g_1,\ldots,g_d)\in G$ to $(\overline{g_1}^{-1},\ldots,\overline{g_d}^{-1})$ and $g^*$ is the (componentwise) conjugate of $g$.
\end{example}

\begin{example}\label{ex:complexification:unitary}
Let $K=U(d)$ be the group of $d\times d$ unitary matrices. Then $\mathfrak{k}$ consists of skew-hermitian matrices, $G=\GL(d,\mathbb{C})$ and $\mathfrak{g}$ is the Lie algebra of all $d\times d$ complex matrices. We identify $\mathfrak{g}^*$ with $\mathfrak{g}$ via the pairing $(x,\xi)\mapsto\Tr(x\xi)$. Under this identification $\mathfrak{k}^*$ corresponds to the space of skew-hermitian matrices.

$i\mathfrak{k}$ is the space of hermitian matrices and $P$ is the set of positive definite matrices. $\Theta$ sends a matrix to the conjugate transpose of its inverse, and $g^*$ is the conjugate transpose of $g$.
\end{example}

\subsection{Moment map}\label{sec:momentmap}

Let $\pi:K\to U(\mathcal{H})$ be a representation on a finite dimensional Hilbert space. We consider the map $J:\states(\mathcal{H})\to i\mathfrak{k}^*$ defined as
\begin{equation}
\pairing{J(\rho)}{\xi}=\Tr\left(T_e\pi(\xi)\rho\right).
\end{equation}
We identify the projective space $P\mathcal{H}$ with the set of rank one orthogonal projections. For $v\in\mathcal{H}\setminus\{0\}$ we denote the corresponding projection (or equivalence class) by $[v]$. In particular, the restriction of $J$ to $P\mathcal{H}$ is
\begin{equation}
\pairing{J([v])}{\xi}=\frac{\langle v,T_e\pi(\xi)v\rangle}{\norm{v}^2}.
\end{equation}
$P\mathcal{H}$ is a symplectic manifold with the Fubini--Study symplectic form and the action of $K$ is Hamiltonian with $\frac{1}{2\pi i}J$ as a moment map \cite[2.7]{kirwan1984cohomology}.

In the special case when $\pi:K\to U(\mathcal{H}_\lambda)$ is an irreducible representation with highest weight $\lambda$ we will use the notation $J_{\lambda}$ for the above map. Its value on the ray through a highest weight vector $v_\lambda$ is $J_{\lambda}([v_\lambda])=\lambda$. The restriction of $J_{\lambda}$ to $\action{K}{[v_\lambda]}$ is injective.

\begin{remark}
Let $\rho\in\states(\mathcal{H})$ and $\psi\in\mathcal{H}\otimes\mathbb{C}^d$ be any purification of $\rho$. Consider the representation $K\to U(\mathcal{H}\otimes\mathbb{C}^d)$ given by $k\mapsto\pi(k)\otimes I$. Then
\begin{equation}
\pairing{J_{\mathcal{H}\otimes\mathbb{C}^d}([\psi])}{\xi}=\langle \psi,\left(T_e\pi(\xi)\otimes I\right)\psi\rangle=\Tr\left(T_e\pi(\xi)\rho\right)=\pairing{J(\rho)}{\xi}.
\end{equation}
Thus in general $\frac{1}{2\pi i}J\circ\Tr_{\mathbb{C}^d}$ is a moment map for $K$ acting on the larger projective space.
\end{remark}

$K$ acts on $\states(\mathcal{H})$ as $\action{k}{\rho}:=\pi(k)\rho\pi(k)^*$. For $k\in K$ we have
\begin{equation}
\begin{split}
\pairing{J(\action{k}{\rho})}{\xi}
 & = \Tr\left(T_e\pi(\xi)\pi(k)\rho\pi(k)^*\right)  \\
 & = \Tr\left(\pi(k^{-1})T_e\pi(\xi)\pi(k)\rho\right)  \\
 & = \Tr\left(T_e\pi(\action{k^{-1}}{\xi})\rho\right)  \\
 & = \pairing{J(\rho)}{\action{k^{-1}}{\xi}}  \\
 & = \pairing{\action{k}{J(\rho)}}{\xi},
\end{split}
\end{equation}
i.e. $J$ is equivariant with respect to the coadjoint action of $K$ on $i\mathfrak{k}^*$.
\begin{remark}
$P\mathcal{H}$ is connected, therefore any two moment maps differ by a constant, and for any two equivariant moment maps the difference is fixed by the coadjoint action. Thus when $K$ is semisimple, $\frac{1}{2\pi i}J$ is the unique equivariant moment map.
\end{remark}

\begin{example}\label{ex:momentmap:torus}
Let $K=U(1)^d$. Irreducible representations are one dimensional and are of the form $\pi_n(g_1,\ldots,g_d)v=g_1^{n_1}g_2^{n_2}\cdots g_d^{n_d}v$ for $n=(n_1,\ldots,n_d)\in\mathbb{Z}^d$. Let $\pi:K\to U(\mathcal{H})$ be an arbitrary representation and decompose $\mathcal{H}$ into isotypic subspaces as $\mathcal{H}=\bigoplus_{n\in\mathbb{Z}^d}\mathcal{H}_n\otimes\mathbb{C}^{d_n}$ ($d_n$ is nonzero for only finitely many terms). If $P_n$ denotes the orthogonal projection onto these subspaces, then $J(\rho)$ is determined by the numbers $r_n=\Tr P_n\rho$ as $\pairing{J(\rho)}{\xi}=\sum_{n\in\mathbb{Z}^d}r_n\sum_{i=1}^d n_i\xi_i$ for $\xi=(\xi_1,\ldots,\xi_d)\in\mathbb{C}^d\simeq\mathfrak{g}$.
\end{example}

\begin{example}\label{ex:momentmap:unitary}
Let $K=U(d)$ and $\pi$ the standard representation on $\mathbb{C}^d$. Under the identification of $i\mathfrak{k}^*$ with the space of hermitian matrices (see Example~\ref{ex:complexification:unitary}) $J$ maps every state $\rho$ to itself.
\end{example}

\subsection{Borel subgroups}

A Borel subgroup $B\le G$ is a maximal solvable subgroup. Any two such subgroups are conjugate by an element of $K$. From now on we fix a Borel subgroup $B$ and use the following notations: $T=B\cap K$, $\mathfrak{t}=T_eT$, $\mathfrak{a}=i\mathfrak{t}\subseteq\mathfrak{g}$, $A=\exp\mathfrak{a}$, $N=[B,B]$, $\mathfrak{b}=T_eB$, $\mathfrak{n}=T_eN$. $T$ is a maximal torus in $K$, $\exp:\mathfrak{a}\to A$ is a diffeomorphism, $N$ is the maximal unipotent subgroup of $B$ and $T$ normalises $N$. For any element $g\in G$ we have the Iwasawa decomposition $g=kan$ with uniquely determined elements $k\in K$, $a\in A$ and $n\in N$. For $g\in G$ we write $\alpha(g)$ for the element of $\mathfrak{a}$ that exponentiates to $a$. The map $\alpha:G\to\mathfrak{a}$ is smooth. In a similar way we write $k(g)$ for the $K$-component of $g$ in the Iwasawa decomposition.

\begin{example}\label{ex:Borel:torus}
Let $K=U(1)^d$. Then $B=G$, $T=K$, $A=P$ and $N=\{e\}$. The Iwasawa decomposition is the same as the componentwise polar decomposition, more precisely $g=(g_1,\ldots,g_d)=(g_1/|g_1|,\ldots,g_d/|g_d|)\cdot(|g_1|,\ldots,|g_d|)\cdot e$.
\end{example}

\begin{example}\label{ex:Borel:unitary}
Let $K=U(d)$. Then one choice for $B$ is the subgroup of upper triangular matrices, $T$ is the group of diagonal unitaries, $A$ is the group of positive definite diagonal matrices, $N$ consists of upper triangular matrices with $1$ on the main diagonal. The Iwasawa decomposition is essentially the QR decomposition, but with the triangular part decomposed into its diagonal and another upper triangular matrix with $1$ entries on the main diagonal.
\end{example}

If $\pi:K\to U(\mathcal{H})$ is a representation, then $v\in\mathcal{H}\setminus\{0\}$ is a highest weight vector for $B$ if $[v]=[\pi(b)v]$ for every $b\in B$ or equivalently, $v=\pi(n)v$ for every $n\in N$ and $v$ is an eigenvector of $A$. Let $g\in G$ and let $g=kan$ be its Iwasawa decomposition. Then $[\pi(g)v]=[\pi(k)\pi(an)v]=[\pi(k)v]$, therefore $g$ is in the stabiliser of $[v]$ iff $k=k(g)$ is in the stabiliser.

Let $\positivechamber\subseteq i\mathfrak{k}^*$ be the closure of the dominant Weyl chamber, using an $\Ad$-equivariant inner product to identify $i\mathfrak{t}^*$ with a subspace of $i\mathfrak{k}^*$. Then every coadjoint orbit in $i\mathfrak{k}^*$ intersects $\positivechamber$ in a unique point. For $x\in i\mathfrak{k}^*$ let $K_x\le K$ be the stabiliser subgroup with respect to the coadjoint action. Writing $x=\action{h}{x_0}$ with $x_0\in\positivechamber$ and $h\in K$, the subgroup of $G$ generated by $K_x$ and $hBh^{-1}$ is a parabolic subgroup and its intersection with $K$ is $K_x$.

\subsection{Positive operator valued measures}\label{sec:POVM}

Let $(X,\mathcal{X})$ be a measurable space and $\mathcal{H}$ a Hilbert space. A positive operator valued measure is a $\sigma$-additive map $E:\mathcal{X}\to\boundeds(\mathcal{H})$ such that $E(A)\ge0$ for every $A\in\mathcal{X}$, $E(X)=I$.

We will construct positive operator valued measures in the following way. Let $\mu_0$ be a measure on $(X,\mathcal{X})$ and let $f:X\to\boundeds(\mathcal{H})$ be a function that is measurable, its values are ($\mu_0$-almost everywhere) positive operators and
\begin{equation}
\int_X f\ed\mu_0=I.
\end{equation}
Then
\begin{equation}
E(A)=\int_A f\ed\mu_0
\end{equation}
defines a positive operator valued measure, which will be denoted $f\mu_0$.

Given a state $\rho\in\states(\mathcal{H})$ and a positive operator valued measure $E:\mathcal{X}\to\boundeds(\mathcal{H})$ we can form a probability measure $\mu:\mathcal{X}\to[0,1]$ as $\mu(A)=\Tr E(A)\rho$. In particular, when $E=f\mu_0$ with $\mu_0$ and $f$ as above, then
\begin{equation}
\mu(A)=\int_A\Tr(f(x)\rho)\ed\mu_0(x).
\end{equation}

\section{Moment map estimation}\label{sec:estimation}

In this section we define precisely our estimation scheme (Section~\ref{sec:measurements}) and prove our main results, a large deviation principle and a law of large numbers for the induced measures. The proofs are divided into separate sections: in Section~\ref{sec:coadjoint} we introduce an action of $G$ on $i\mathfrak{k}^*$ as well as a function (Definitions~\ref{defn:extendedaction} and~\ref{def:chi}) which encode the $G$-actions on the highest weight orbits of irreducible $K$-representations and extend them in a continuous and scale-equivariant way; in Section~\ref{sec:ratefunction} we present various expressions for the rate function and prove that it is a good rate function; Proposition~\ref{prop:LDPupper} in Section~\ref{sec:upperbound} proves part~\ref{it:LDPupper} of Theorem~\ref{thm:LDP}; Section~\ref{sec:lawoflargenumbers} contains the proof of Theorem~\ref{thm:LLN}; part~\ref{it:LDPlower} of Theorem~\ref{thm:LDP} is proved in Section~\ref{sec:lowerbound}.

From now on $K$ will be an arbitrary but fixed compact connected Lie group, $B$ a Borel subgroup of its complexification $G$, $\pi:K\to U(\mathcal{H})$ a finite dimensional representation and $\rho\in\states(\mathcal{H})$. In this generality we can prove a large deviation upper bound and a law of large numbers, whereas for the matching lower bound we will make the additional assumption $\support\rho=\mathcal{H}$.

\subsection{The measurements}\label{sec:measurements}

Let $\mathcal{H}_{\lambda}$ be an irreducible representation of $K$ with highest weight $\lambda$. The orbit of the highest weight ray in $P\mathcal{H}_\lambda$ has a unique $K$-invariant probability measure $\nu_\lambda$. Every representation $K\to U(\mathcal{K})$ can be decomposed as
\begin{equation}
\mathcal{K}=\bigoplus_{\lambda}\mathcal{H}_\lambda\otimes\Hom_K(\mathcal{H}_\lambda,\mathcal{K}),
\end{equation}
where the sum is over the integral weights in $\positivechamber$. We define $p_{\lambda,\mathcal{K}}:P\mathcal{H}_\lambda\to\boundeds(\mathcal{H}_\lambda\otimes\Hom_K(\mathcal{H}_\lambda,\mathcal{K}))\subseteq\boundeds(\mathcal{K})$ to be the function which sends the equivalence class of the unit vector $v$ to $\ketbra{v}{v}\otimes\id_{\Hom_K(\mathcal{H}_\lambda,\mathcal{K})}$. With the notation of Section~\ref{sec:POVM}, we take $X$ to be the orbit of the highest weight ray with its Borel $\sigma$-algebra as $\mathcal{X}$, the probability measure $\nu_\lambda$ plays the role of $\mu_0$ and $(\dim\mathcal{H}_\lambda)p_{\lambda,\mathcal{K}}$ is the measurable function $f$. Thus $(\dim\mathcal{H}_\lambda)p_{\lambda,\mathcal{K}}\nu_\lambda$ is a $\boundeds(\mathcal{H}_\lambda\otimes\Hom_K(\mathcal{H}_\lambda,\mathcal{K}))$-valued POVM.

Next we glue these together into a POVM on $i\mathfrak{k}^*$ by taking the pushforward and summing over the isomorphism classes of irreducible representations. Let $J_\lambda:P\mathcal{H}_\lambda\to i\mathfrak{k}^*$ be the map as in Section~\ref{sec:momentmap}. We define the positive operator valued measure
\begin{equation}\label{eq:measurementdef}
E_{\mathcal{K}}=\sum_{\lambda}(J_\lambda)_*((\dim\mathcal{H}_\lambda)p_{\lambda,\mathcal{K}}\nu_\lambda).
\end{equation}

For every $m\in\mathbb{N}$, $E_{\mathcal{H}^{\otimes m}}$ corresponds to a measurement that can be performed on $m$ copies of the state $\rho$, with values in $i\mathfrak{k}^*$. As we will see, the typical values behave in an extensive way. For this reason we will include a $\frac{1}{m}$ rescaling in the probability measures. Explicitly, for Borel sets $A\subseteq i\mathfrak{k}^*$ we define
\begin{equation}\label{eq:muintegral}
\begin{split}
\mu_m(A)
 & = \Tr\rho^{\otimes m}E_{\mathcal{H}^{\otimes m}}(mA)  \\
 & = \sum_{\lambda}\dim(\mathcal{H}_\lambda)\int_{J_\lambda^{-1}(mA)}\Tr\rho^{\otimes m}p_{\lambda,\mathcal{H}^{\otimes m}}([v_\lambda])\ed\nu_\lambda([v_\lambda]).
\end{split}
\end{equation}

\subsection{Extension of the coadjoint action}\label{sec:coadjoint}

The aim of this section is to define a continuous action of $G$ on $i\mathfrak{k}^*$ such that the restriction of $J_\lambda$ to the highest weight orbit is $G$-equivariant for every dominant weight $\lambda$, and the action commutes with scaling by nonnegative real numbers.

Consider an irreducible representation $\pi_\lambda:K\to U(\mathcal{H}_\lambda)$ and let $v_\lambda$ be a highest weight vector of norm $1$. The $K$-orbit of $[v_\lambda]$ is $G$-invariant (since $B$ fixes $[v_\lambda]$). While the action of $K$ keeps the norm fixed, elements of $G$ will change the norm. For $g\in G$ and an element $\ketbra{\action{h}{v_\lambda}}{\action{h}{v_\lambda}}$ in the orbit, let us write $gh=kan$ for the Iwasawa decomposition. We have
\begin{equation}\label{eq:GactionHWorbit}
\begin{split}
\pi_\lambda(g)\ketbra{\action{h}{v_\lambda}}{\action{h}{v_\lambda}}\pi_\lambda(g)^*
 & = \pi_\lambda(gh)\ketbra{v_\lambda}{v_\lambda}\pi_\lambda(gh)^*  \\
 & = \pi_\lambda(kan)\ketbra{v_\lambda}{v_\lambda}\pi_\lambda(kan)^*  \\
 & = \pi_\lambda(ka)\ketbra{v_\lambda}{v_\lambda}\pi_\lambda(ka)^*  \\
 & = e^{2\pairing{\lambda}{\alpha(gh)}}\pi_\lambda(k)\ketbra{v_\lambda}{v_\lambda}\pi_\lambda(k)^*  \\
 & = e^{2\pairing{\lambda}{\alpha(gh)}}\ketbra{\action{k(gh)}{v_\lambda}}{\action{k(gh)}{v_\lambda}},
\end{split}
\end{equation}
in the last line emphasizing the dependence $k=k(gh)$.

The restriction of $J_\lambda$ to $K\cdot[v_\lambda]$ is injective and $K$-equivariant, therefore we can use it to define a $G$-action on $K\cdot\lambda=J_\lambda(K\cdot[v_\lambda])$ and the constant factor in eq. \eqref{eq:GactionHWorbit} gives rise to a function $(g,\action{h}{\lambda})\mapsto 2\pairing{\lambda}{\alpha(gh)}$ on $\action{K}{\lambda}$.

For different representations these actions and functions are additive in the sense that $\action{k(gh)}{(\lambda_1+\lambda_2)}=\action{k(gh)}{\lambda_1}+\action{k(gh)}{\lambda_2}$ and $2\pairing{\lambda_1+\lambda_2}{\alpha(gh)}=2\pairing{\lambda_1}{\alpha(gh)}+2\pairing{\lambda_2}{\alpha(gh)}$. This is a consequence of the fact that $\pi_{\lambda_1}\otimes\pi_{\lambda_2}$ has an irreducible subrepresentation isomorphic to $\pi_{\lambda_1+\lambda_2}$ and $v_{\lambda_1}\otimes v_{\lambda_2}$ is a highest weight vector for this subrepresentation.

In what follows we extend both the $G$-actions and the functions giving the constant factor to $i\mathfrak{k}^*$ in a continuous way. The extensions will still be additive in the above sense (i.e. on $\action{h}{\positivechamber}$ for every $h\in K$) as well as positively homogeneous of degree $1$. Note that there is at most one such extension since the positive Weyl chamber is the cone generated by the dominant weights. For the action, the only possibility is that $g$ maps $\action{h}{x_0}$ to $\action{k(gh)}{x_0}$ when $h\in K$ and $x_0\in\positivechamber$. The next proposition shows that this is indeed well defined and determines an action of $G$ on $i\mathfrak{k}^*$
\begin{proposition}\label{prop:welldefinedaction}\leavevmode
\begin{enumerate}[(i)]
\item Let $x=\action{h_1}{x_0}=\action{h_2}{x_0}$ and $g\in G$. Then $\action{k(gh_1)}{x_0}=\action{k(gh_2)}{x_0}$.
\item\label{it:Gaction} Let $g_1,g_2\in G$, $h\in K$. Then $k(g_2g_1h)=k(g_2k(g_1h))$
\end{enumerate}
\end{proposition}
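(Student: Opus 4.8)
The plan is to prove (ii) first and then derive (i) as a consequence, since (ii) is really a statement about the Iwasawa projection alone and does not involve the Weyl chamber. For (ii), fix $g_1,g_2\in G$ and $h\in K$ and write the Iwasawa decomposition $g_1h=k_1a_1n_1$ with $k_1=k(g_1h)$. Then $g_2g_1h=g_2k_1a_1n_1$, and writing the Iwasawa decomposition $g_2k_1=k'a'n'$ we get $g_2g_1h=k'a'n'a_1n_1=k'a'(n'a_1n_1)$. Here the point is that $a'a_1\in A$ (the group $A$ is abelian) and $a_1^{-1}n'a_1\in N$ (since $A$ normalizes $N$), so $n'a_1n_1=a_1(a_1^{-1}n'a_1)n_1\in a_1 N = a_1 N$, hence $g_2g_1h=k'(a'a_1)(a_1^{-1}n'a_1 n_1)$ is an Iwasawa decomposition of $g_2g_1h$ with $K$-part $k'$. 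By uniqueness of the Iwasawa decomposition, $k(g_2g_1h)=k'=k(g_2k_1)=k(g_2k(g_1h))$, which is exactly the claim.

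For (i), suppose $\action{h_1}{x_0}=\action{h_2}{x_0}$ with $x_0\in\positivechamber$; then $h_2^{-1}h_1$ stabilizes $x_0$ under the coadjoint action, i.e. $h_2^{-1}h_1\in K_{x_0}$. I want to show $\action{k(gh_1)}{x_0}=\action{k(gh_2)}{x_0}$, equivalently $k(gh_2)^{-1}k(gh_1)\in K_{x_0}$. The idea is to reduce to the case where $x_0=\lambda$ is a dominant integral weight: since the positive Weyl chamber is the cone generated by the dominant integral weights and the coadjoint stabilizer $K_{x_0}$ equals the intersection of the stabilizers of the dominant weights in the relative interior of the face of $\positivechamber$ containing $x_0$ — or, more directly, since for a dominant weight $\lambda$ one has $K_\lambda = K\cap \text{(stabilizer of }[v_\lambda]\text{)}$, and a general $x_0$ is a positive combination of such $\lambda$ with $K_{x_0}=\bigcap_\lambda K_\lambda$ over the weights appearing — it suffices to check that for each relevant dominant weight $\lambda$, $\action{k(gh_1)}{\lambda}=\action{k(gh_2)}{\lambda}$ as points of $\action{K}{\lambda}$. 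But this follows from the computation in eq.~\eqref{eq:GactionHWorbit}: both $\action{k(gh_1)}{[v_\lambda]}$ and $\action{k(gh_2)}{[v_\lambda]}$ are the normalized version of $\pi_\lambda(g)\ketbra{\action{h_i}{v_\lambda}}{\action{h_i}{v_\lambda}}\pi_\lambda(g)^*$, and since $\action{h_1}{[v_\lambda]}=\action{h_2}{[v_\lambda]}$ (as $h_2^{-1}h_1$ fixes $[v_\lambda]$), these two rank-one projections coincide, hence so do their normalizations; now apply the injectivity of $J_\lambda$ on $\action{K}{[v_\lambda]}$ to transfer this back to $i\mathfrak{k}^*$.

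The main obstacle I expect is the bookkeeping in the reduction step of (i): one must be careful that the stabilizer $K_{x_0}$ of an arbitrary point $x_0$ in the closed Weyl chamber is really captured by the stabilizers of the dominant weights, and that ``$\action{k(gh_1)}{x_0}=\action{k(gh_2)}{x_0}$'' can genuinely be tested weight-by-weight — this uses positive homogeneity and additivity of the would-be action on a single chamber face, which were discussed informally before the proposition but need to be invoked cleanly here. The computation in (ii) is essentially just the standard ``$A$ normalizes $N$, $A$ abelian'' manipulation and should be routine once the Iwasawa uniqueness is cited; the only care needed is to make sure every rearranged factor lands in the correct subgroup before appealing to uniqueness.
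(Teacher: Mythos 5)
Your proof of part (ii) is essentially the paper's: write the two Iwasawa decompositions, use that $A$ is abelian and normalizes $N$, and invoke uniqueness of the Iwasawa decomposition. For part (i) you take a genuinely different route. The paper computes $k(gh_2)^{-1}k(gh_1)=a_2n_2(h_2^{-1}h_1)n_1^{-1}a_1^{-1}$ directly from the two Iwasawa decompositions and concludes via the fact that the parabolic subgroup generated by $B$ and $K_{x_0}$ meets $K$ in $K_{x_0}$. You instead decompose $x_0$ as a positive combination of dominant integral weights $\lambda$, check $\action{k(gh_1)}{\lambda}=\action{k(gh_2)}{\lambda}$ one weight at a time using \eqref{eq:GactionHWorbit} together with the injectivity and equivariance of $J_\lambda$ on the highest weight orbit, and then sum by linearity of the coadjoint action. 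Both routes are correct; yours is longer but makes visible that the extended action is forced by the individual highest weight orbits, trading the parabolic fact for the moment map injectivity (both are stated in the preliminaries). The lemma you flag as needing care --- $K_{x_0}\subseteq K_\lambda$ for each $\lambda$ appearing with positive coefficient $c_\lambda$ --- does hold and does not require a careful choice of decomposition: any simple root orthogonal to $x_0$ (under an $\Ad$-invariant inner product) pairs nonnegatively with every dominant $\lambda$, so $0=\sum c_\lambda\cdot(\text{pairing with }\lambda)$ forces the pairing with each $\lambda$ in the support to vanish; since coadjoint stabilizers of points of $\positivechamber$ in a connected compact group are connected and generated by $T$ and the root subgroups of the orthogonal simple roots, $K_{x_0}\subseteq K_\lambda$ follows. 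With that one lemma supplied your argument for (i) is complete.
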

\begin{proof}\leavevmode
\begin{enumerate}[(i)]
\item Let $k_1a_1n_1=gh_1$ and $k_2a_2n_2=gh_2$ be the Iwasawa decompositions. Then
\begin{equation}
\begin{split}
k_2^{-1}k_1
 & = (gh_2n_2^{-1}a_2^{-1})^{-1}gh_1n_1^{-1}a_1^{-1}  \\
 & = a_2n_2h_2^{-1}g^{-1}gh_1n_1^{-1}a_1^{-1}  \\
 & = a_2n_2h_2^{-1}h_1n_1^{-1}a_1^{-1}.
\end{split}
\end{equation}
Since $h_2^{-1}h_1\in K_{x_0}$, the product is in the subgroup generated by $B$ and $K_{x_0}$. This is a parabolic subgroup whose intersection with $K$ is $K_{x_0}$, therefore $k_2^{-1}k_1\in K_{x_0}$, i.e. $\action{k_1}{x_0}=\action{k_2}{x_0}$.
\item Let $g_1h=k_1a_1n_1$ and $g_2k_1=k_2a_2n_2$ be the Iwasawa decompositions (so $k_1=k(g_1h)$ and $k_2=k(g_2k_1)=k(g_2k(g_1h))$). Then
\begin{equation}
g_2g_1h=k_2a_2n_2k_1^{-1}k_1a_1n_1=k_2a_2a_1(a_1^{-1}n_2a_1)n_1.
\end{equation}
The last two factors are in $N$ since $A$ normalises $N$, therefore $k_2=k(g_2g_1h)$.
\end{enumerate}
\end{proof}
This proposition justifies the following definition and also shows that it defines an action of $G$ on $i\mathfrak{k}^*$. Notice that we use the same notation for the newly introduced map as for the (complexified) coadjoint action of $K$. This will not lead to a confusion as the two agree on $K$.
\begin{definition}\label{defn:extendedaction}
We define a map $G\times i\mathfrak{k}^*\to i\mathfrak{k}^*$, $(g,x)\mapsto\action{g}{x}$ as follows. Let $x=\action{h}{x_0}$ with $x_0\in\positivechamber$ and $h\in K$ (coadjoint action). Then we set $\action{g}{x}:=\action{k(gh)}{x_0}$.
\end{definition}
By construction, the restriction of $J_{\lambda}$ to the highest weight orbit is $G$-equivariant for every dominant weight $\lambda$.

For $x\in i\mathfrak{k}^*$ we will denote by $G_x$ the stabiliser subgroup with respect to this action. If $x=\action{h}{x_0}$ with $x_0\in\positivechamber$ and $h\in K$ then $G_x$ is the (parabolic) subgroup generated by $K_x$ and $hBh^{-1}$ and satisfies $G_x\cap K=K_x$.

Since the $G$-action is defined in terms of the $K$-action, the $G$-orbits and the $K$-orbits are clearly the same. On the other hand, for general $x,x'\in i\mathfrak{k}^*$ the orbits $\action{K_x}{x'}$ and $\action{G_x}{x'}$ are not the same. We will need the following condition for equality. 
\begin{lemma}\label{lem:KxorbitGxinvariant}
Suppose that $x,x'\in\action{h}{\positivechamber}$ for some $h\in K$. Then $\action{K_x}{x'}=\action{G_x}{x'}$. In particular, for every $x\in\action{h}{\positivechamber}$ there is a neighbourhood in $\action{h}{i\mathfrak{t}^*}$ where the orbits under $K_x$ and $G_x$ agree.
\end{lemma}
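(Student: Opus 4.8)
The plan is to reduce the statement to a purely local computation inside the parabolic $G_x$, using the structure provided by Proposition~\ref{prop:welldefinedaction} and the fact noted just before it that for $h\in K$ and $x_0\in\positivechamber$ the action sends $g$ to $k(gh)$ (an Iwasawa $K$-component). Since both orbits live inside $\action{h}{\positivechamber}$'s ambient orbit and since everything is conjugate by $h$ to the standard situation, I would first reduce to the case $h=e$, i.e. $x,x'\in\positivechamber$, replacing $B$ by $hBh^{-1}$, $G_x$ by the parabolic generated by $K_x$ and $B$, etc. The inclusion $\action{K_x}{x'}\subseteq\action{G_x}{x'}$ is immediate from $K_x\subseteq G_x$, so the content is the reverse inclusion: every $\action{g}{x'}$ with $g\in G_x$ already lies in $\action{K_x}{x'}$.

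The key step is to understand $G_x$ via a Levi/Langlands-type decomposition. Write $P=G_x$ for the parabolic (with $x\in\positivechamber$); then $P=L\cdot U$ where $U$ is the unipotent radical and $L$ is a Levi subgroup containing $A$, and $L$ is itself the complexification of $K_x$ (this is the standard fact that the Levi of a parabolic attached to a point of the closed chamber is $(K_x)_\mathbb{C}$, and $K_x\subseteq L$ is a maximal compact subgroup). Now take $g\in P$ and a point $\action{h'}{x'_0}$ of the $K_x$-orbit of $x'$ written with $x'_0\in\positivechamber$, $h'\in K_x$; I want to compute $\action{g}{\action{h'}{x'_0}}=\action{k(gh')}{x'_0}$ using Proposition~\ref{prop:welldefinedaction}\ref{it:Gaction}, which gives $k(gh')=k(g\,k(h'))=k(gh')$ — more usefully, I factor $g=\ell u$ with $\ell\in L$, $u\in U$, so $gh'=\ell u h' = \ell\,(uh')$, and I must show the Iwasawa $K$-component of $gh'$, viewed through its action on $x'_0$, agrees with that of $\ell h'$, equivalently that the unipotent radical $U$ acts trivially on the whole $G_x$-orbit of any such point. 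This last claim — triviality of $U$ — should follow because $U$ fixes $[v_\mu]$-type configurations relevant here; concretely, $U\subseteq N'$ (the maximal unipotent of the chosen Borel after the reduction), and $N'$ acts trivially on $\positivechamber$ by construction of the extended action and by continuity/homogeneity, while more care is needed off $\positivechamber$. So the real point is: the Iwasawa $K$-component of $\ell u h'$ and of $\ell h'$ induce the same transformation of $x'_0$, because $u h' (h')^{-1}\in$ a group that stabilizes $x'_0$ up to $A N'$.

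Assembling these: once $U$ is shown to act trivially, we are left with $g$ effectively in $L=(K_x)_\mathbb{C}$ acting on $x'\in\positivechamber$ with stabilizer-parabolic inside $L$ equal to $(K_{x'}\cap K_x)_\mathbb{C}$, and then the orbit statement $\action{L}{x'}=\action{K_x}{x'}$ is exactly the defining property of the extended action (Definition~\ref{defn:extendedaction}) applied \emph{internally} to the group $K_x$ and its complexification $L$: every $G$-orbit equals the corresponding $K$-orbit, a fact already recorded in the excerpt ("the $G$-orbits and the $K$-orbits are clearly the same"). The final sentence about a neighbourhood in $\action{h}{i\mathfrak{t}^*}$ follows because $\positivechamber$ is a full-dimensional cone in $i\mathfrak{t}^*$ with nonempty relative interior, so any $x\in\action{h}{\positivechamber}$ has a whole neighbourhood (in $\action{h}{i\mathfrak{t}^*}$) still contained in $\action{h}{\positivechamber}$ when $x$ is in the relative interior, and for boundary $x$ one intersects with nearby chamber walls — in any case points $x'$ near $x$ can be taken in $\action{h}{\positivechamber}$, which is the hypothesis needed.

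The main obstacle I anticipate is the triviality of the unipotent radical's action on the relevant points $x'$: the extended $G$-action is defined only indirectly (through Iwasawa components on highest-weight orbits and then by continuity and homogeneity), so showing $U$ acts trivially on $\action{G_x}{x'}$ for $x'\in\positivechamber$ requires unwinding that definition carefully — probably by checking it first on dominant integral weights $\mu$ (where $U\subseteq P=G_x$ fixes $[v_\mu]$ since $v_\mu$ is a highest weight vector and $U\subseteq N$ up to the $h$-conjugation, hence $\pi_\mu(u)v_\mu$ is proportional to $v_\mu$ or at least $[\pi_\mu(u)v_\mu]$ lies in the $K_x$-orbit of $[v_\mu]$) and then extending by the positive homogeneity and additivity already established.
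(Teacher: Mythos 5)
Your plan is correct in essence, but it is a heavier repackaging of the paper's one‑step argument, and a couple of the worries you flag are actually non‑issues once the reduction to $h=e$ is made. The paper simply writes $\action{g}{x'}=\action{hk(h^{-1}gh)h^{-1}}{x'}$ and observes that $k(h^{-1}gh)\in K_{x_0}$ because $A,N\subseteq G_{x_0}$: for any $g'=kan\in G_{x_0}$ one has $k=g'(an)^{-1}\in G_{x_0}\cap K=K_{x_0}$. Your route via the Levi decomposition $G_x=LU$ decomposes this same fact into two steps, $U\subseteq N$ acts trivially and $k(\ell)\in K_x$ for $\ell\in L$, and is valid, but the extra structure (Levi, $L=(K_x)_\mathbb{C}$, ``applied internally'') is not needed. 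Two places where you should tighten the argument: first, the triviality of $U$ on $x'$ requires no appeal to dominant integral weights or to continuity/homogeneity and no special care ``off $\positivechamber$'' — after reducing to $h=e$ you have $x'\in\positivechamber$, and $\action{u}{x'}=\action{k(u)}{x'}=x'$ is immediate from Definition~\ref{defn:extendedaction} since $k(u)=e$ for $u\in N$. Second, the step $\action{L}{x'}=\action{K_x}{x'}$ is not literally ``the defining property applied internally'' (which would concern an intrinsic action of $L$ on $i\mathfrak{k}_x^*$); what is actually used is that the $G$-Iwasawa $K$-component of any $\ell\in L$ lies in $K_x$, which you should state explicitly — it follows from $A\subseteq L$ and $N\cap L\subseteq N$ (so the $L$- and $G$-Iwasawa decompositions of $\ell$ agree), or directly from $A,N\subseteq G_x$ as above. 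Finally, for the neighbourhood statement you write that nearby $x'$ ``can be taken in $\action{h}{\positivechamber}$'', but for boundary $x$ this is false as stated; the point, as in the paper, is that a small enough neighbourhood of $x$ in $\action{h}{i\mathfrak{t}^*}$ meets only chambers whose closure contains $x$, and each such chamber is carried into $\action{h}{\positivechamber}$ by an element of $K_x$, which allows one to reduce to the first part of the lemma.
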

\begin{proof}
It is clear that $\action{K_x}{x'}\subseteq\action{G_x}{x'}$ since $K_x\le G_x$. For the other direction, let $x_0=\action{h^{-1}}{x}$ and $x'_0=\action{h^{-1}}{x'}$ (so $x_0,x_0'\in\positivechamber$) and let $g\in G_x$. We have
\begin{equation}
\action{g}{x'}=\action{k(gh)}{x'_0}=\action{hk(h^{-1}gh)h^{-1}h}{x'_0}=\action{hk(h^{-1}gh)h^{-1}}{x'}.
\end{equation}
We have $h^{-1}gh\in G_{x_0}$, therefore $k(h^{-1}gh)\in K_{x_0}$ (since $B$ fixes $x_0$). This implies $hk(h^{-1}gh)h^{-1}\in K_x$, so $\action{g}{x'}\in\action{K_x}{x'}$.

The second statement follows from the fact that a sufficiently small neighbourhood of $x$ in $\action{h}{i\mathfrak{t}^*}$ only intersects those Weyl chambers whose closure contains $x$, and these can be moved into $\action{h}{\positivechamber}$ with an element of $K_x$.
\end{proof}

We turn to the constant factor in \eqref{eq:GactionHWorbit}. Again, there is at most one extension, which should map $(g,\action{h}{x_0})$ to $e^{2\pairing{x_0}{\alpha(gh)}}$ and our next goal is to verify that this is well defined.
\begin{lemma}\label{lem:alphaproductinvariant}
Let $x_0\in \positivechamber$, $g\in G$ and $u\in K_{x_0}$. Then $\pairing{x_0}{\alpha(g)}=\pairing{x_0}{\alpha(gu)}$.

In particular, if $h_1,h_2\in K$ such that $x=\action{h_1}{x_0}=\action{h_2}{x_0}$ then $\pairing{x_0}{\alpha(gh_1)}=\pairing{x_0}{\alpha(gh_2)}$.
\end{lemma}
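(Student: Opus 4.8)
The plan is to show that $\pairing{x_0}{\alpha(g)}$ depends only on the coset $gK_{x_0}$ (equivalently on $g$ modulo right multiplication by the stabilizer $K_{x_0}$), and more precisely only on the larger coset $gG_{x_0}$, where $G_{x_0}$ is the parabolic subgroup generated by $K_{x_0}$ and $B$. The first statement of the lemma is the special case of this observation where we multiply by $u\in K_{x_0}$, and the "in particular" clause follows because $h_2^{-1}h_1\in K_{x_0}$ forces $gh_1$ and $gh_2$ to lie in the same such coset: writing $gh_1 = (gh_2)(h_2^{-1}h_1)$ with $h_2^{-1}h_1\in K_{x_0}$, the claimed invariance gives $\pairing{x_0}{\alpha(gh_1)}=\pairing{x_0}{\alpha(gh_2)}$.

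The key steps are as follows. First I would recall that for the element $u\in K_{x_0}\le G$, the subgroup $G_{x_0}$ generated by $K_{x_0}$ and $B$ is parabolic and contains $B$; in particular $u$ can be written (using that $K_{x_0}$ is the fixed-point group of $x_0$ and $B\supseteq T$) in a way compatible with the Iwasawa decomposition, so that $k(u)\in K_{x_0}$ and $\alpha(u)$ lies in the part of $\mathfrak{a}$ on which $\pairing{x_0}{\cdot}$ vanishes. More cleanly: the character $\chi_{x_0}\colon B\to\mathbb{R}_{>0}$, $b\mapsto e^{\pairing{x_0}{\alpha(b)}}$, extends to a character of $G_{x_0}$ that is trivial on $K_{x_0}$, because $\pairing{x_0}{\cdot}$ annihilates the coroots of the Levi factor of $G_{x_0}$ — this is exactly the statement that $x_0$ is orthogonal to the roots fixing it. Second, using Proposition~\ref{prop:welldefinedaction}\eqref{it:Gaction}-style manipulation of Iwasawa decompositions: write $g = k\cdot a\cdot n$ and $u = k' a' n'$ with $k'\in K_{x_0}$; then $gu = k(ana^{-1}\cdots)$ — expand the product and push the $A$- and $N$-factors to the right (using that $A$ normalizes $N$), concluding that $\alpha(gu) = \alpha(g) + \alpha(u) + (\text{correction in the Levi direction})$, where every term other than $\alpha(g)$ is killed by $\pairing{x_0}{\cdot}$. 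Third, conclude $\pairing{x_0}{\alpha(gu)} = \pairing{x_0}{\alpha(g)}$, and then deduce the "in particular" clause as above.

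The main obstacle is the bookkeeping in the second step: when one forms the Iwasawa decomposition of a product $gu$, the $K$-part of $u$ does not simply commute past the $A$- and $N$-parts of $g$, so one genuinely needs the structure theory — specifically that $K_{x_0}$, $A$ and $N$ interact controllably because $K_{x_0}$ normalizes the subgroup $N_{x_0} = N\cap G_{x_0}$ and $\pairing{x_0}{\alpha(\cdot)}$ factors through the quotient $G_{x_0}/[\text{Levi}, \text{Levi}]$. An alternative and perhaps cleaner route that avoids explicit Iwasawa juggling: observe that $e^{\pairing{x_0}{\alpha(g)}}$ equals the norm ratio $\norm{\pi_\lambda(g) v_\lambda}^2 / \norm{v_\lambda}^2 = \norm{\pi_\lambda(g)v_\lambda}^2$ (for a dominant weight $\lambda$ in the ray through $x_0$, using $\norm{v_\lambda}=1$), as is visible from \eqref{eq:GactionHWorbit} with $h=e$; since $u\in K_{x_0}$ fixes $[v_\lambda]$ and is unitary, $\pi_\lambda(u) v_\lambda$ is a unit vector on the same ray, so $\norm{\pi_\lambda(gu)v_\lambda} = \norm{\pi_\lambda(g)\pi_\lambda(u)v_\lambda} = \norm{\pi_\lambda(g)v_\lambda}$. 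Taking logarithms and using positive homogeneity to pass from the dominant weight $\lambda$ to an arbitrary $x_0\in\positivechamber$ gives the result directly, with the "in particular" clause following exactly as before.
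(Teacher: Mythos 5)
Your second route (the representation-theoretic norm argument) is a genuinely different and cleaner approach than the paper's. The paper proves the first claim by splitting $K_{x_0}$ into its maximal torus $T$ (where the Iwasawa $A$-component is manifestly unchanged, since $T$ commutes with $A$ and normalizes $N$) and the infinitesimal generators $\mathfrak{k}\cap(\mathfrak{g}_\omega+\mathfrak{g}_{-\omega})$ for simple roots $\omega\perp x_0$; for the latter it differentiates the Iwasawa decomposition along $\eta\in\mathfrak{g}_\omega$ and checks that the $\mathfrak{a}$-derivative lands in $[\mathfrak{g}_{-\omega},\mathfrak{g}_\omega]$, which $x_0$ annihilates. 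Your observation that $e^{2\pairing{\lambda}{\alpha(g)}}=\norm{\pi_\lambda(g)v_\lambda}^2$ (note the factor $2$, from \eqref{eq:GactionHWorbit}), combined with the fact that $u$ fixes $[v_\lambda]$ and is unitary, bypasses this differential computation entirely and is more transparent. It also makes the ``in particular'' clause immediate, exactly as you say.

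However, there is a concrete gap in the passage from dominant weights to arbitrary $x_0\in\positivechamber$. You invoke ``a dominant weight $\lambda$ in the ray through $x_0$'' and then say ``positive homogeneity'' finishes the argument; but if $x_0$ is irrational, no such $\lambda$ exists, and scaling alone cannot reach $x_0$. What you actually need is linearity together with a spanning argument inside the appropriate face of the chamber: let $S$ be the set of simple roots orthogonal to $x_0$, note that $K_{x_0}\subseteq K_\lambda$ for every dominant integral weight $\lambda$ orthogonal to all of $S$ (not just those on the ray through $x_0$), and then observe that the linear functional $y\mapsto\pairing{y}{\alpha(g)}-\pairing{y}{\alpha(gu)}$ vanishes on all such $\lambda$. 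Since these $\lambda$ span (over $\mathbb{R}$) the rational subspace $\{y\in i\mathfrak{t}^*:\pairing{y}{\omega^\vee}=0\text{ for }\omega\in S\}$, which contains $x_0$, the functional vanishes at $x_0$. With that fix the argument is complete. Your first route, by contrast, does not really get off the ground: the assertion that $\chi_{x_0}$ extends from $B$ to a character of $G_{x_0}$ trivial on $K_{x_0}$ is essentially a restatement of the lemma, and the promised Iwasawa bookkeeping is left unresolved (you correctly identify the obstruction but do not overcome it), so the second route is the one to keep.
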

\begin{proof}
$K_{x_0}$ is generated by $T$ and the infinitesimal generators $\mathfrak{k}\cap(\mathfrak{g}_\omega+\mathfrak{g}_{-\omega})$ where $\omega\in i\mathfrak{k}^*$ are those simple roots which are orthogonal to $x_0$ (with respect to any $\Ad$-invariant inner product) and $\mathfrak{g}_\omega$ is the corresponding root space.

Let $u\in T$. $T$ normalises $N$ and commutes with $A$, therefore if $g=kan$ is the Iwasawa decomposition of $g$, then $gu=kua(u^{-1}nu)$ is the Iwasawa decomposition of $gu$, so $\alpha(gu)=\alpha(g)$.

Let $\eta\in\mathfrak{g}_{\omega}$. We wish to know how the $\alpha$-component of $g$ changes under right multiplication in the direction of $\eta$. The Iwasawa decomposition gives a diffeomorphism between $K\times A\times N$ and $G$, therefore there exist uniquely $\xi\in\mathfrak{k}$, $\beta\in\mathfrak{a}$ and $\nu\in\mathfrak{n}$ such that
\begin{equation}
\left.\frac{\ed}{\ed s}k\exp(\alpha)n\exp(s\eta)\right|_{s=0}=\left.\frac{\ed}{\ed s}k\exp(s\xi)\exp(\alpha+s\beta)n\exp(s\nu)\right|_{s=0},
\end{equation}
where $g=k\exp(\alpha)n$. Let $L_g:G\to G$ be the left translation by $g$ and $T_eL_g$ its derivative at the identity. We calculate both sides of the above equation as
\begin{equation}
(T_eL_g)\eta=(T_eL_g)\left[\Ad_{n^{-1}\exp(-\alpha)}\xi+\Ad_{n^{-1}}\beta+\nu\right].
\end{equation}
From this we read off $\Ad_{\exp(\alpha)n}\eta=\xi+\Ad_{\exp(\alpha)}\beta+\Ad_{\exp(\alpha)n}\nu=\xi+\beta+\Ad_{\exp(\alpha)n}\nu$ since $A$ acts trivially on $\mathfrak{a}$. This is the Iwasawa decomposition (on the Lie-algebra level), since $\Ad_{\exp(\alpha)n}\nu\in\mathfrak{n}$ (we use that $A$ normalises $N$), therefore $\beta=\left.\frac{\ed}{\ed s}\alpha(g\exp(s\eta))\right|_{s=0}$ is the $\mathfrak{a}$-component of the Iwasawa decomposition of $\Ad_{\exp(\alpha)n}\eta$. We have
\begin{equation}
\Ad_{\exp(\alpha)n}\eta\in\mathfrak{g}_{\omega}\oplus\bigoplus_{\delta}[\mathfrak{g}_{\delta},\mathfrak{g}_{\omega}]\subseteq\mathfrak{g}_{\omega}\oplus\bigoplus_{\delta}\mathfrak{g}_{\delta+\omega},
\end{equation}
where the sum is over the positive roots. If $\omega$ is positive then the $\mathfrak{a}$-component vanishes, whereas if $-\omega$ is a simple positive root then the $\mathfrak{a}$-component is in $[\mathfrak{g}_{-\omega},\mathfrak{g}_{\omega}]$. But this subspace is annihilated by $x_0$ if $\omega$ is orthogonal to $x_0$.

For the second statement, note that the condition $\action{h_1}{x_0}=\action{h_2}{x_0}$ is equivalent to $x_0=\action{h_1^{-1}h_2}{x_0}$, i.e. $h_1^{-1}h_2\in K_{x_0}$. Using the first part with $gh_1$ and $u=h_1^{-1}h_2$ we get $\pairing{x_0}{\alpha(gh_1)}=\pairing{x_0}{\alpha(gh_1h_1^{-1}h_2)}=\pairing{x_0}{\alpha(gh_2)}$ as claimed.
\end{proof}
This justifies the following definition.
\begin{definition}\label{def:chi}
For $x\in i\mathfrak{k}^*$ we define the map $\chi_x:G\to(0,\infty)$ as follows. If $x=\action{h}{x_0}$ (with $h\in K$ and $x_0\in \positivechamber$) and $g\in G$ then we set
\begin{equation}
\chi_x(g)=e^{2\pairing{x_0}{\alpha(gh)}}.
\end{equation}
\end{definition}
In terms of this map we may rewrite \eqref{eq:GactionHWorbit} as
\begin{equation}\label{eq:GactionHWorbitchi}
\pi_\lambda(g)\ketbra{\action{h}{v_\lambda}}{\action{h}{v_\lambda}}\pi_\lambda(g)^* = \chi_{\lambda}(g)\ketbra{\action{k(gh)}{v_\lambda}}{\action{k(gh)}{v_\lambda}}.
\end{equation}

\begin{proposition}
The map $\chi$ is multiplicative in the following sense. If $x\in i\mathfrak{k}^*$ and $g_1,g_2\in G$, then
\begin{equation}
\chi_x(g_2g_1)=\chi_{\action{g_1}{x}}(g_2)\chi_x(g_1).
\end{equation}
\end{proposition}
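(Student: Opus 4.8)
The plan is to reduce the cocycle identity for $\chi$ to the corresponding cocycle identity for the $\mathfrak{a}$-valued Iwasawa map $\alpha$, exactly as the composition law for the $G$-action (Proposition~\ref{prop:welldefinedaction}\eqref{it:Gaction}) was reduced to the composition law $k(g_2g_1h)=k(g_2k(g_1h))$. Write $x=\action{h}{x_0}$ with $h\in K$ and $x_0\in\positivechamber$, so that by definition $\chi_x(g_1)=e^{2\pairing{x_0}{\alpha(g_1h)}}$ and $\chi_x(g_2g_1)=e^{2\pairing{x_0}{\alpha(g_2g_1h)}}$. The subtle point is evaluating $\chi_{\action{g_1}{x}}(g_2)$: by Definition~\ref{defn:extendedaction} we have $\action{g_1}{x}=\action{k(g_1h)}{x_0}$, and $k(g_1h)\in K$ while $x_0\in\positivechamber$, so this is already in the standard form required by Definition~\ref{def:chi}. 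Hence $\chi_{\action{g_1}{x}}(g_2)=e^{2\pairing{x_0}{\alpha(g_2k(g_1h))}}$. Taking logarithms, the claim becomes the additive identity
\begin{equation}\label{eq:chiadditive}
\pairing{x_0}{\alpha(g_2g_1h)}=\pairing{x_0}{\alpha(g_2k(g_1h))}+\pairing{x_0}{\alpha(g_1h)}.
\end{equation}

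To prove \eqref{eq:chiadditive} I would unwind the Iwasawa decompositions just as in the proof of Proposition~\ref{prop:welldefinedaction}\eqref{it:Gaction}. Put $g_1h=k_1a_1n_1$ and $g_2k_1=k_2a_2n_2$, with $a_i=\exp\alpha_i$. Then
\begin{equation}
g_2g_1h=k_2a_2n_2k_1^{-1}k_1a_1n_1=k_2a_2a_1\bigl(a_1^{-1}n_2a_1\bigr)n_1,
\end{equation}
and since $A$ normalizes $N$ the last two factors lie in $N$, so the Iwasawa $A$-component of $g_2g_1h$ is $a_2a_1=\exp(\alpha_2+\alpha_1)$; that is, $\alpha(g_2g_1h)=\alpha(g_2k_1)+\alpha(g_1h)$ as elements of $\mathfrak{a}$. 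Pairing with $x_0$ and noting $k_1=k(g_1h)$ gives exactly \eqref{eq:chiadditive}, hence the multiplicativity of $\chi$. One small caveat to address: $\alpha$ is genuinely well defined (the Iwasawa decomposition is unique), so the identity $\alpha(g_2g_1h)=\alpha(g_2k(g_1h))+\alpha(g_1h)$ holds on the nose in $\mathfrak{a}$, not merely after pairing with $x_0$; but the only thing we need is the paired version, and one should check that $\chi_{\action{g_1}{x}}(g_2)$ does not depend on the choice of representative $h$ — this is precisely guaranteed by Lemma~\ref{lem:alphaproductinvariant} applied to $\action{g_1}{x}$ and its decomposition $\action{k(g_1h)}{x_0}$.

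I do not expect a serious obstacle here; the content is the normalization-by-$A$ trick that already appeared twice above. The one place requiring a moment's care is making sure that the representative chosen for $\action{g_1}{x}$ in the definition of $\chi_{\action{g_1}{x}}$ is compatible with the one chosen for $x$ — i.e. that we may take $h'=k(g_1h)$ and the same $x_0$ — which follows because $\action{g_1}{x}=\action{k(g_1h)}{x_0}$ with $k(g_1h)\in K$ and $x_0\in\positivechamber$ is literally of the form demanded by Definition~\ref{def:chi}, and because Lemma~\ref{lem:alphaproductinvariant} shows the value is independent of which such representative we pick. Everything else is a direct Iwasawa computation.
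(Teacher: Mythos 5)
Your proof is correct and follows essentially the same route as the paper: both reduce the cocycle identity for $\chi$ to the additive identity $\alpha(g_2g_1h)=\alpha(g_2k(g_1h))+\alpha(g_1h)$, which is proved by the same Iwasawa manipulation $g_2g_1h=k_2a_2a_1(a_1^{-1}n_2a_1)n_1$ already used in Proposition~\ref{prop:welldefinedaction}\eqref{it:Gaction}. Your extra remark about the choice of representative for $\action{g_1}{x}$ and its independence via Lemma~\ref{lem:alphaproductinvariant} is a reasonable precaution but is not strictly needed, since the paper's Definition~\ref{def:chi} has already been shown to be independent of the representative and $\action{k(g_1h)}{x_0}$ is one valid representative of $\action{g_1}{x}$ in the required form.
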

\begin{proof}
Write $x=\action{h}{x_0}$ with $x_0\in i\mathfrak{k}^*$ and $h\in K$. In the same way as in the proof of~\ref{it:Gaction} in Proposition~\ref{prop:welldefinedaction}, if $g_1h=k_1a_1n_1$ and $g_2k(g_1h)=g_2k_1=k_2a_2n_2$ then the $A$-component of $g_2g_1h$ is $a_2a_1$, therefore
\begin{equation}
\alpha(g_2g_1h)=\alpha(g_2k(g_1h))+\alpha(g_1h).
\end{equation}
Using this we compute
\begin{equation}
\begin{split}
\chi_x(g_2g_1)
 & = e^{2\pairing{x_0}{\alpha(g_2g_1h)}}  \\
 & = e^{2\pairing{x_0}{\alpha(g_2k(g_1h))+\alpha(g_1h)}}  \\
 & = e^{2\pairing{x_0}{\alpha(g_2k(g_1h))}}e^{2\pairing{x_0}{\alpha(g_1h)}}  \\
 & = \chi_{\action{g_1}{x}}(g_2)\chi_x(g_1).
\end{split}
\end{equation}
\end{proof}
In particular, the restriction of $\chi_x$ to $G_x$ is a one dimensional character. From $\chi_x(e)=1$ and multiplicativity it follows that $\chi_x(g^{-1})=\chi_{\action{g^{-1}}{x}}(g)^{-1}$.

\begin{proposition}\label{prop:continuity}
Let $K$ act on the space $i\mathfrak{k}^*\times G$ as $\action{k}{(x,g)}=(\action{k}{x},gk^{-1})$. Then $\chi:i\mathfrak{k}^*\times G\to\mathbb{R}$ is continuous and invariant.
\end{proposition}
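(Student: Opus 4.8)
The plan is to reduce continuity and invariance to known smoothness facts about the Iwasawa decomposition. Invariance is the easier half and I would dispatch it first. Fix $(x,g)$ and $k\in K$. Write $x=\action{h}{x_0}$ with $h\in K$ and $x_0\in\positivechamber$; then $\action{k}{x}=\action{kh}{x_0}$ is also of the required form, with the same $x_0$. By definition $\chi_{\action{k}{x}}(gk^{-1})=e^{2\pairing{x_0}{\alpha((gk^{-1})(kh))}}=e^{2\pairing{x_0}{\alpha(gh)}}=\chi_x(g)$, using only that $(gk^{-1})(kh)=gh$. So the map factors through the quotient $(i\mathfrak{k}^*\times G)/K$.

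For continuity the main obstacle is that the decomposition $x=\action{h}{x_0}$ is not continuous in $x$: the pair $(h,x_0)$ can only be chosen locally and $h$ jumps where the stabilizer type changes. I would get around this by working locally and exploiting that, by Lemma~\ref{lem:alphaproductinvariant}, the value $\pairing{x_0}{\alpha(gh)}$ depends only on $(x,g)$, not on the choice of $h$ with $\action{h}{x_0}=x$. Concretely, let $(x^{(0)},g^{(0)})$ be a point where we want continuity, and write $x^{(0)}=\action{h^{(0)}}{x_0^{(0)}}$. I claim there is a neighbourhood $U$ of $x^{(0)}$ in $i\mathfrak{k}^*$ and a continuous (indeed smooth) map $U\to K\times\positivechamber$, $x\mapsto(h(x),x_0(x))$, with $\action{h(x)}{x_0(x)}=x$ and $(h(x^{(0)}),x_0(x^{(0)}))=(h^{(0)},x_0^{(0)})$. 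This is a standard local-slice statement for the coadjoint action: near a fixed orbit the map $K\times(\text{slice})\to i\mathfrak{k}^*$ is a submersion, so it admits a smooth local section; composing with the projection of the slice to $\positivechamber$ along $K_{x^{(0)}}$-orbits (which is smooth because, by Lemma~\ref{lem:KxorbitGxinvariant} and the Weyl-chamber structure, a small enough slice meets only chambers whose closure contains $x^{(0)}$) yields the desired map. Granting this, on $U\times G$ we have $\chi_x(g)=e^{2\pairing{x_0(x)}{\alpha(g\,h(x))}}$, which is a composition of the smooth maps $x\mapsto(x_0(x),h(x))$, $(g,h)\mapsto gh$, the smooth Iwasawa map $\alpha:G\to\mathfrak{a}$, the pairing, and $\exp$; hence continuous on $U\times G$. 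Since $U$ was an arbitrary neighbourhood of an arbitrary point, $\chi$ is continuous on $i\mathfrak{k}^*\times G$.

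The one place that needs care is the construction of the smooth local section $x\mapsto(h(x),x_0(x))$, i.e. making precise the "local slice" claim; I expect this to be the main technical point, handled by the tube theorem for the compact group action of $K$ on $i\mathfrak{k}^*$ together with the observation that the sweep of a small transversal to the orbit $\action{K}{x^{(0)}}$ can be taken inside $\action{h^{(0)}}{i\mathfrak{t}^*}$, where the retraction onto $\positivechamber$ is given by a (locally constant choice of) Weyl group element and is therefore smooth near $x^{(0)}$. Everything downstream is then just the smoothness of the Iwasawa projection and of the algebraic operations, plus the invariance already checked, which guarantees the locally defined formulas agree on overlaps.
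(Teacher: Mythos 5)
Your verification of invariance is fine and is essentially the same computation the paper uses at the end of its proof.

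The continuity argument, however, rests on a claim that fails: there is in general no continuous (let alone smooth) local section $x\mapsto(h(x),x_0(x))\in K\times\positivechamber$ with $\action{h(x)}{x_0(x)}=x$ near a point $x^{(0)}$ with non-minimal stabilizer. Take $K=SU(2)$, so that $i\mathfrak{k}^*\cong\mathbb{R}^3$ with $K$ acting by rotations, $\positivechamber=\{(0,0,z):z\ge 0\}$, and $x^{(0)}=0$. Then $x_0(x)$ is forced to be $(0,0,|x|)$, and for $x\neq 0$ the element $h(x)$ must send the north pole to $x/|x|$; restricting to a small sphere around $0$, a continuous $h$ would give a global section of $SU(2)\to SU(2)/T\cong S^2$, which does not exist. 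The tube theorem does give local sections of $K\to K/K_{x^{(0)}}$, but at $x^{(0)}=0$ the stabilizer is all of $K$, the slice is a full neighbourhood of $0$, and the obstruction to writing $x=\action{h(x)}{x_0(x)}$ continuously lies in the topology of the sphere, not in the slice. Your remark that the retraction of the slice onto $\positivechamber$ is ``given by a (locally constant choice of) Weyl group element'' is exactly where this breaks: a neighbourhood of $0$ in $i\mathfrak{t}^*$ meets every chamber, and no locally constant Weyl element, nor any continuous $K_{x^{(0)}}$-valued function, can move them all into $\positivechamber$.

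The paper's proof avoids the issue by never attempting a section. It defines $\varphi(x_0,g)=2\pairing{x_0}{\alpha(g)}$ on the closed set $\positivechamber\times G$, checks via Lemma~\ref{lem:alphaproductinvariant} that $\varphi$ is constant on intersections of $K$-orbits with this set, and invokes Bredon's extension theorem \cite[3.3.~Theorem in Chapter~I]{bredon1972introduction}: a continuous function on a closed set whose $K$-saturation is the whole space and which is constant on orbit-intersections extends uniquely to a continuous $K$-invariant function. The extension is then identified with $\chi$ by the same invariance computation you carried out. This route is robust precisely at the non-generic points where your section-based argument fails, because it requires no local trivialisation of the orbit decomposition. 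If you want to keep a local approach, you would need to argue continuity of the composite $e^{2\pairing{x_0(x)}{\alpha(gh(x))}}$ directly, allowing $h(x)$ to be discontinuous and using the invariance from Lemma~\ref{lem:alphaproductinvariant} to control the jumps---which in effect reproduces the Bredon argument by hand.
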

\begin{proof}
Consider the closed set $\positivechamber\times G\subseteq i\mathfrak{k}^*\times G$. On this set the function simplifies to $(x,g)\mapsto 2\pairing{x}{\alpha(g)}=:\varphi(x,g)$, which is continuous. If $(x_0,g)\in \positivechamber\times G$ and $k\in K$ such that $\action{k}{(x_0,g)}\in \positivechamber\times G$, then $x_0=\action{k}{x_0}$, i.e. $k\in K_{x_0}$, therefore by Lemma~\ref{lem:alphaproductinvariant} we have
\begin{equation}
\varphi(\action{k}{(x_0,g)})=\varphi(x_0,gk^{-1})=2\pairing{x_0}{\alpha(gk^{-1})}=2\pairing{x_0}{\alpha(g)}=\varphi(x_0,g).
\end{equation}
Since $\action{K}{(\positivechamber\times G)}=i\mathfrak{k}^*\times G$, there is a unique $K$-invariant extension, continuous by \cite[3.3. Theorem in Chapter I]{bredon1972introduction}. To see that this extension is $\chi$ we need to verify that $\chi$ is $K$-invariant. Let $x_0\in\positivechamber$, $k,h\in K$, $x=\action{h}{x_0}$ and $g\in G$. Then
\begin{equation}
\chi_{\action{k}{x}}(gk^{-1})=\chi_{\action{kh}{x_0}}(gk^{-1})=e^{2\pairing{x_0}{\alpha(gk^{-1}kh)}}=e^{2\pairing{x_0}{\alpha(gh)}}=\chi_{x}(g).
\end{equation}
\end{proof}

We conclude this section with a closely related real valued map on $i\mathfrak{k}^*\times i\mathfrak{k}$ that can be viewed as a non-bilinear modification of the duality pairing, and reduces to it when $K$ is abelian. This map will appear in one of the equivalent expressions for the rate function below.
\begin{definition}
We define a map $\nonlinearpairing{\cdot}{\cdot}:i\mathfrak{k}^*\times i\mathfrak{k}\to\mathbb{R}$ as follows. Let $x\in i\mathfrak{k}^*$ and $\xi\in i\mathfrak{k}$. Write $x=\action{h}{x_0}$ where $h\in K$ and $x_0\in \positivechamber$, and set
\begin{equation}
\nonlinearpairing{x}{\xi}=-\ln\chi_x(\exp(-\xi/2))=-2\pairing{x_0}{\alpha\left(\exp\left(-\action{h^{-1}}{\xi}/2\right)\right)}.
\end{equation}
\end{definition}

\begin{example}
Let $K=SU(2)$. Then $i\mathfrak{su}(2)$ and its dual can be identified with the space of traceless hermitian $2\times 2$ matrices. Let $x$ and $\xi$ be such matrices. Then
\begin{equation}
\nonlinearpairing{x}{\xi}=-2\norm[\infty]{x}\ln\left[\cosh\norm[\infty]{\xi}-\frac{\Tr x\xi}{2\norm[\infty]{x}\norm[\infty]{\xi}}\sinh\norm[\infty]{\xi}\right].
\end{equation}
\end{example}

\subsection{Rate function and some properties}\label{sec:ratefunction}

We now give the definition of the rate function appearing in Theorem~\ref{thm:LDP}, then we prove its equivalence with several other expressions.
\begin{definition}\label{def:ratefunctiondef}
For $\rho\ge 0$ (not necessarily normalised), we define the function $\ratefunction{\rho}:i\mathfrak{k}^*\to(-\infty,\infty]$ as
\begin{equation}\label{eq:ratefunctiondef}
\ratefunction{\rho}(x)=\sup_{g\in G}-\ln\chi_x(g^{-1})-\ln\Tr\pi(g)^*\rho\pi(g).
\end{equation}
\end{definition}
\begin{proposition}\label{prop:ratefunctionexpressions}
Let $x=\action{h}{x_0}$ with $x_0\in \positivechamber$ and $h\in K$. Then
\begin{subequations}
\begin{align}
\ratefunction{\rho}(x)
 & = \sup_{g\in G_x}\ln\chi_x(g)-\ln\Tr\pi(g)^*\rho\pi(g)  \label{eq:alternativeexpressionGx}  \\
 & = \sup_{b\in hBh^{-1}}\ln\chi_x(b)-\ln\Tr\pi(b)^*\rho\pi(b)  \label{eq:alternativeexpressionB}  \\
 & = \sup_{\alpha\in\mathfrak{a}}\sup_{n\in N}\pairing{x_0}{\alpha}-\ln\Tr\pi(n)^*\pi(\exp\alpha/2)\pi(h)^*\rho\pi(h)\pi(\exp\alpha/2)\pi(n)  \label{eq:alternativeexpressionAN}  \\
 & = \sup_{\alpha\in\mathfrak{a}}\sup_{n\in N}\pairing{x_0}{\alpha}-\ln\Tr\pi(\exp\alpha/2)\pi(n)^*\pi(h)^*\rho\pi(h)\pi(n)\pi(\exp\alpha/2)  \label{eq:alternativeexpressionNA}  \\
 & = \sup_{\xi\in i\mathfrak{k}}\nonlinearpairing{x}{\xi}-\ln Z_\rho(\xi)  \label{eq:alternativeexpressionZ}
\end{align}
\end{subequations}
where
\begin{equation}\label{eq:momentgenerating}
Z_\rho(\xi)=\Tr\rho\pi(\exp\xi).
\end{equation}
\end{proposition}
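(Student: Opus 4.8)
The plan is to establish the chain of equalities by moving through the expressions roughly in the order listed, exploiting the multiplicativity of $\chi$ and the Iwasawa decomposition. First I would show \eqref{eq:alternativeexpressionGx}. Using $\chi_x(g^{-1}) = \chi_{\action{g^{-1}}{x}}(g)^{-1}$ and writing $g = g'u$ with $u \in G_x$ and $g'$ ranging over a set of coset representatives for $G/G_x$ (really one only needs that the supremum can be computed by first fixing the orbit point $\action{g^{-1}}{x}$ and then optimizing over the stabilizer), one checks that $-\ln\chi_x(g^{-1}) - \ln\Tr\pi(g)^*\rho\pi(g)$ is maximized when $\action{g^{-1}}{x} = x$, i.e. $g \in G_x$. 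The intuition: conjugating $\rho$ by $\pi(g)$ can only spread probability away from the relevant highest-weight directions unless $g$ stabilizes $x$; more concretely, if $g = k'a'n'$ then replacing $g$ by $a'n'$ (dropping the unitary $k'$) leaves the trace term unchanged but can only increase (or keep) the $\chi$ term since $k' \in K$ contributes trivially to $\alpha$. This reduces the sup over $G$ to a sup over $AN$-type factors, and a short argument using Lemma~\ref{lem:alphaproductinvariant} and the structure $G_x = \langle K_x, hBh^{-1}\rangle$ upgrades this to the sup over $G_x$.

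Next, for \eqref{eq:alternativeexpressionB}, I would argue that in the supremum over $g \in G_x$ one may restrict to $g \in hBh^{-1}$: writing $g = u \cdot b$ with $u \in K_x$ and $b \in hBh^{-1}$ (using that $G_x$ is generated by these and, more carefully, that $G_x = K_x \cdot (hBh^{-1})$ as sets — a standard parabolic fact), the character $\chi_x$ restricted to $K_x$ is trivial (since $\chi_x(u) = e^{2\pairing{x_0}{\alpha(uh)}}$ and $uh \in hK_{x_0}\cdot(\text{stuff})$... here one uses Lemma~\ref{lem:alphaproductinvariant} to see $\pairing{x_0}{\alpha(uh)} = \pairing{x_0}{\alpha(h)} = 0$), while $\Tr\pi(ub)^*\rho\pi(ub) = \Tr\pi(b)^*\pi(u)^*\rho\pi(u)\pi(b)$, and since $K_x$ is compact and the sup is over all of $hBh^{-1}$ anyway, absorbing $\pi(u)^*\rho\pi(u)$ does not change the supremum — more precisely one shows the two suprema are equal by noting $K_x$-conjugates of $\rho$ can be reached and the $K_x$-action commutes appropriately. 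Then \eqref{eq:alternativeexpressionAN} is just the change of variables $b = h\exp(\alpha)n\,h^{-1}$: writing a general element of $hBh^{-1}$ this way with $\alpha \in \mathfrak{a}$, $n \in N$, and using $\chi_x(h\exp(\alpha)nh^{-1}) = e^{2\pairing{x_0}{\alpha(h\exp(\alpha)n)}} = e^{2\pairing{x_0}{\alpha}}$ (the $\alpha(\cdot)$ map reads off the $A$-part, which is $\exp\alpha$ since $n \in N$), together with $\pi(h\exp(\alpha)nh^{-1}) = \pi(h)\pi(\exp\alpha)\pi(n)\pi(h)^{-1}$ and cyclicity of the trace, one gets exactly the stated expression (noting $\pi(\exp\alpha) = \pi(\exp\alpha/2)^2$ and $\pi(\exp\alpha/2)^* = \pi(\exp\alpha/2)$ since $\exp\alpha/2 \in P$). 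Expression \eqref{eq:alternativeexpressionNA} follows from \eqref{eq:alternativeexpressionAN} by cyclicity of the trace together with the substitution $n \mapsto$ a suitable conjugate (using that $\exp(\alpha/2)$ normalizes $N$, so $\pi(\exp\alpha/2)\pi(n)\pi(\exp\alpha/2)^{-1} = \pi(n')$ for $n' \in N$, and $n'$ ranges over all of $N$ as $n$ does), rearranging the four factors.

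Finally, for \eqref{eq:alternativeexpressionZ}, I would recall the definition $\nonlinearpairing{x}{\xi} = -\ln\chi_x(\exp(-\xi/2)) = -2\pairing{x_0}{\alpha(\exp(-\action{h^{-1}}{\xi}/2))}$ and $Z_\rho(\xi) = \Tr\rho\pi(\exp\xi)$. Starting from \eqref{eq:ratefunctiondef} and restricting the supremum to elements of the form $g = \exp(-\xi/2)$ with $\xi \in i\mathfrak{k}$ (equivalently $g \in P$, since $\exp: i\mathfrak{k} \to P$ is a diffeomorphism and $g^{-1} = \exp(\xi/2)$ with $\pi(g)^*\rho\pi(g) = \pi(\exp(-\xi/2))^*\rho\pi(\exp(-\xi/2)) = \pi(\exp(-\xi/2))\rho\pi(\exp(-\xi/2))$, whose trace equals $\Tr\rho\pi(\exp(-\xi)) = Z_\rho(-\xi)$): after renaming $\xi \to -\xi$ we get $\sup_{\xi}\, [-\ln\chi_x(\exp(-\xi/2)) - \ln Z_\rho(\xi)] = \sup_\xi \nonlinearpairing{x}{\xi} - \ln Z_\rho(\xi)$, which is $\le \ratefunction{\rho}(x)$. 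For the reverse inequality — and this is the step I expect to be the main obstacle — one must show that restricting from $G$ to $P$ in \eqref{eq:ratefunctiondef} does not decrease the supremum. The Cartan decomposition $g = kp$ with $k \in K$, $p \in P$ gives $\Tr\pi(g)^*\rho\pi(g) = \Tr\pi(p)\pi(k)^*\rho\pi(k)\pi(p)$, and one would like to say that the $K$-part can be discarded; but $\chi_x(g^{-1})$ genuinely depends on $k$ in general (through $k(g^{-1})$ in the Iwasawa decomposition), so the reduction is not immediate — the resolution is to show, using equivariance of $\chi$ (Proposition~\ref{prop:continuity}), that the sup over $G$ of the full expression equals the sup over $hBh^{-1}$ from \eqref{eq:alternativeexpressionB}, and then within $hBh^{-1}$ one can further reduce to $hAh^{-1} \subseteq P$-type contributions by an $N$-optimization argument, or alternatively relate \eqref{eq:alternativeexpressionZ} directly to \eqref{eq:alternativeexpressionAN} by observing that both the $\nonlinearpairing{x}{\xi}$ term and $Z_\rho$ are $K$-invariant under simultaneous conjugation, letting one diagonalize the optimization over the $KAK$-type parametrization of $P$ and match it with the $\alpha, n$ optimization. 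I would carry out this last matching carefully, as the non-bilinearity of $\nonlinearpairing{\cdot}{\cdot}$ makes it the least routine part of the argument.
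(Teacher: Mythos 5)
Your proposal attempts to prove each equality directly, but the paper's proof is structured quite differently and more efficiently: it establishes only one-way inequalities along the chain and then closes a cycle. Specifically, the paper observes that $G \supseteq G_x \supseteq hBh^{-1}$ immediately gives $\ratefunction{\rho}(x) \ge \eqref{eq:alternativeexpressionGx} \ge \eqref{eq:alternativeexpressionB} = \eqref{eq:alternativeexpressionAN} = \eqref{eq:alternativeexpressionNA}$ (using only that $-\ln\chi_x(g^{-1}) = \ln\chi_x(g)$ on $G_x$ by multiplicativity); then it shows $\eqref{eq:alternativeexpressionNA} \ge \eqref{eq:alternativeexpressionZ}$ by matching any $\xi$ to an $(\alpha,n)$ via the Iwasawa decomposition of $\exp(-\action{h^{-1}}{\xi/2})$; and finally it closes the loop by showing $\eqref{eq:alternativeexpressionZ} \ge \ratefunction{\rho}(x)$ via the polar decomposition $g = pu$ (with $p = \exp(\xi/2) \in P$, $u \in K$), using left-$K$-invariance of $\chi_x$ so that $\chi_x(p^{-1}) = \chi_x(ug^{-1}) = \chi_x(g^{-1})$ and $\Tr\rho\,\pi(\exp\xi) = \Tr\pi(g)^*\rho\pi(g)$. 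This cyclic structure means no single equality needs a direct proof.

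There are two concrete problems with your direct approach. First, your argument for reducing the supremum from $G$ to $G_x$ contains an error: writing $g = k'a'n'$ and ``replacing $g$ by $a'n'$'' does not leave $\Tr\pi(g)^*\rho\pi(g)$ unchanged, since $\pi(k'a'n')^*\rho\,\pi(k'a'n') = \pi(a'n')^*\bigl(\pi(k')^*\rho\,\pi(k')\bigr)\pi(a'n')$, and $\pi(k')^*\rho\,\pi(k') \neq \rho$ in general. So the claim that the trace term is $k'$-independent is false, and the proposed reduction to $G_x$ does not go through as stated. Second, for \eqref{eq:alternativeexpressionZ} you correctly identify the reverse inequality (showing the $G$-supremum is dominated by the $\xi$-supremum) as the main obstacle, but the sketch you give — first reduce to $hBh^{-1}$, then try to push $N$ into $P$, or match a $KAK$-parametrization — is not carried out and is more convoluted than the paper's single observation that any $g$ and its polar part $p = \exp(\xi/2)$ give the same value of the objective. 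You come close to the right ingredient (left-$K$-invariance of $\chi_x$, which follows from $\alpha(kgh) = \alpha(gh)$), but never deploy it to close the cycle. The lesson is that it is much easier to prove $A \ge B \ge C \ge D \ge E \ge A$ than to prove $A = B$, $B = C$, etc., individually.
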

\begin{proof}
$\chi_x$ is multiplicative on $G_x$, therefore $-\ln\chi_x(g^{-1})=\ln\chi_x(g)$ for $g\in G_x$. From this equality and $G\ge G_x\ge hBh^{-1}=hANh^{-1}hTh^{-1}=hNAh^{-1}hTh^{-1}$ we obtain the inequalitites
\begin{equation}
\begin{split}
\ratefunction{\rho}(x)
 & \ge \sup_{g\in G_x}\ln\chi_x(g)-\ln\Tr\pi(g)^*\rho\pi(g)  \\
 & \ge \sup_{b\in hBh^{-1}}\ln\chi_x(b)-\ln\Tr\pi(b)^*\rho\pi(b)  \\
 & = \sup_{\alpha\in\mathfrak{a}}\sup_{n\in N}\pairing{x_0}{\alpha}-\ln\Tr\pi(n)^*\pi(\exp\alpha/2)\pi(h)^*\rho\pi(h)\pi(\exp\alpha/2)\pi(n)  \\
 & = \sup_{\alpha\in\mathfrak{a}}\sup_{n\in N}\pairing{x_0}{\alpha}-\ln\Tr\pi(\exp\alpha/2)\pi(n)^*\pi(h)^*\rho\pi(h)\pi(n)\pi(\exp\alpha/2).
\end{split}
\end{equation}

For $\xi\in i\mathfrak{k}$ let $\exp(-\action{h^{-1}}{\xi/2})=k_0a_0n_0$ be the Iwasawa decomposition. Let $\alpha\in\mathfrak{a}$ such that $\exp(-\alpha/2)=a_0$ and $n=n_0^{-1}$. Then
\begin{equation}
\begin{split}
& \pairing{x_0}{\alpha}-\ln\Tr\pi(\exp\alpha/2)\pi(n)^*\pi(h)^*\rho\pi(h)\pi(n)\pi(\exp\alpha/2)  \\
 & = -2\pairing{x_0}{\alpha(\exp(-\action{h^{-1}}{\xi/2}))}-\ln\Tr\pi(a_0^{-1})\pi(n_0^{-1})^*\pi(h)^*\rho\pi(h)\pi(n_0^{-1})\pi(a_0^{-1})  \\
 & = \nonlinearpairing{x}{\xi}-\ln\Tr\rho\pi(h)\pi(n_0^{-1})\pi(a_0^{-1})\pi(k_0^{-1})\pi(k_0^{-1})^*\pi(a_0^{-1})\pi(n_0^{-1})^*\pi(h)^*  \\
 & = \nonlinearpairing{x}{\xi}-\ln\Tr\rho\pi(h)\pi(k_0a_0n_0)^{-1}{\pi(k_0a_0n_0)^{-1}}^*\pi(h)^*  \\
 & = \nonlinearpairing{x}{\xi}-\ln\Tr\rho\pi(h)\exp(\action{h^{-1}}{\xi})\pi(h)^*  \\
 & = \nonlinearpairing{x}{\xi}-\ln\Tr\rho\pi(\exp(\xi))  \\
 & = \nonlinearpairing{x}{\xi}-\ln Z_\rho(\xi),
\end{split}
\end{equation}
where we used that $(k_0a_0n_0)^*=k_0a_0n_0$. Therefore
\begin{multline}
\sup_{\alpha\in\mathfrak{a}}\sup_{n\in N}\pairing{x_0}{\alpha}-\ln\Tr\pi(\exp\alpha/2)\pi(n)^*\pi(h)^*\rho\pi(h)\pi(n)\pi(\exp\alpha/2)  \\  \ge\sup_{\xi\in i\mathfrak{k}}\nonlinearpairing{x}{\xi}-\ln Z_\rho(\xi).
\end{multline}

For $g\in G$ we have $gg^*\in P$, therefore there is a unique $\xi\in i\mathfrak{k}$ such that $gg^*=\exp\xi$. Let $g=pu$ be the polar decomposition with $p=\exp(\xi/2)$ and $u\in K$. Then
\begin{equation}
\begin{split}
\nonlinearpairing{x}{\xi}-\ln Z_\rho(\xi)
 & = -\ln\chi_x(\exp(-\xi/2))-\ln\Tr\rho\pi(\exp(\xi))  \\
 & = -\ln\chi_x(p^{-1})-\ln\Tr\pi(g)^*\rho\pi(g)  \\
 & = -\ln\chi_x(ug^{-1})-\ln\Tr\pi(g)^*\rho\pi(g)  \\
 & = -\ln\chi_x(g^{-1})-\ln\Tr\pi(g)^*\rho\pi(g),
\end{split}
\end{equation}
in the last step using that $\chi_x$ is invariant under left multiplication with elements of $K$ (as $u$ only contributes to the $K$-component). Therefore
\begin{equation}
\sup_{\xi\in i\mathfrak{k}}\nonlinearpairing{x}{\xi}-\ln Z_\rho(\xi)\ge\sup_{g\in G}-\ln\chi_x(g^{-1})-\ln\Tr\pi(g)^*\rho\pi(g)=\ratefunction{\rho}(x).
\end{equation}
\end{proof}
\begin{remark}
When $K=U(1)^d$, then \eqref{eq:momentgenerating} reduces to the moment generating function and $\nonlinearpairing{\cdot}{\cdot}$ becomes the usual pairing between a vector space and its dual. Therefore \eqref{eq:alternativeexpressionZ} can be viewed as a nonabelian generalisation of the Legendre--Fenchel transform of the logarithmic moment generating function.
\end{remark}

The following lemma shows that the supremum over $N$ in \eqref{eq:alternativeexpressionAN} and \eqref{eq:alternativeexpressionNA} can be replaced with a maximum as long as we first optimise over $N$ and then over $\mathfrak{a}$.
\begin{lemma}\label{lem:unipotentminlength}
Let $N$ be a unipotent algebraic group, $\pi:N\to\GL(\mathcal{H})$ a representation on a finite dimensional Hilbert space and $\rho\in\states(\mathcal{H})$. Then the function $n\mapsto\Tr\pi(n)^*\rho\pi(n)$ has a minimum.
\end{lemma}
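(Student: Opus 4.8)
The plan is to realise $n\mapsto\Tr\pi(n)^*\rho\pi(n)$ as the squared norm of a point of a single orbit of a unipotent algebraic group acting linearly on a finite dimensional vector space, and then to invoke the fact that orbits of unipotent groups are closed.

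First I would reduce to the case in which $N\le\GL(\mathcal{H})$ is a closed unipotent algebraic subgroup and $\pi$ is the inclusion: the image $\pi(N)$ is again such a subgroup (the image of a unipotent algebraic group under a morphism of algebraic groups is a closed unipotent subgroup), and the function depends only on $\pi(N)$. Next, equip $\boundeds(\mathcal{H})$ with the Hilbert--Schmidt inner product $\pairing{A}{B}=\Tr(A^*B)$; since $\rho=\rho^{1/2}\rho^{1/2}$ with $\rho^{1/2}$ self-adjoint,
\begin{equation*}
\Tr\pi(n)^*\rho\pi(n)=\Tr\bigl((\rho^{1/2}\pi(n))^*(\rho^{1/2}\pi(n))\bigr)=\norm{\rho^{1/2}\pi(n)}^2 .
\end{equation*}
Thus the quantity to be minimised is the restriction of the continuous map $T\mapsto\norm{T}^2$ to the set $O=\{\rho^{1/2}g:g\in\pi(N)\}$, which is the orbit of the vector $\rho^{1/2}\in\boundeds(\mathcal{H})$ under the action of $\pi(N)$ by right translation. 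This action is polynomial, hence morphic, and $\pi(N)$ is a unipotent algebraic group acting on the affine space $\boundeds(\mathcal{H})$.

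The decisive step is the theorem of Kostant and Rosenlicht: an orbit of a unipotent algebraic group acting morphically on an affine variety is Zariski closed. It gives that $O$ is Zariski closed in $\boundeds(\mathcal{H})$, hence closed in the norm topology. Since $\boundeds(\mathcal{H})$ is finite dimensional, the restriction of $T\mapsto\norm{T}^2$ to the nonempty closed set $O$ has compact sub-level sets, and therefore attains its infimum, say at a point $\rho^{1/2}g_0$. Transporting back along $n\mapsto\rho^{1/2}\pi(n)$ shows that $n\mapsto\Tr\pi(n)^*\rho\pi(n)$ attains its minimum on $N$.

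The only substantive ingredient is the closedness of unipotent orbits; the rest is bookkeeping, and this is where I expect the ``work'' to be. Some such input is genuinely unavoidable: in exponential coordinates on $N$ the function is a nonnegative polynomial on a Euclidean space, and such polynomials need not attain their infimum (for instance $x^2+(xy-1)^2$), so it is really the closedness of $O$ that is being used. One could instead try to induct on $\dim N$ along the descending central series --- on a one-parameter unipotent subgroup the function is either constant or coercive, as one sees from the top-degree term of the relevant matrix-valued polynomial --- but propagating the existence of minima through the successive vector-group quotients amounts to reproving the Kostant--Rosenlicht theorem, so I would simply cite it.
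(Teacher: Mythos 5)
Your proof is correct and is essentially the paper's argument in different dress: the paper takes a purification $\psi\in\mathcal{H}\otimes\mathbb{C}^d$ of $\rho$ and applies the closed-orbit theorem for unipotent groups (Kostant--Rosenlicht, cited from Steinberg) to the orbit of $\psi$ under $\tilde n\otimes I$ with $\tilde N=\pi(N)^*$, whereas you take the canonical realisation of that purification as $\rho^{1/2}\in\boundeds(\mathcal{H})$ with the Hilbert--Schmidt norm and the orbit under right translation by $\pi(N)$. The decisive ingredient (Zariski closedness of unipotent orbits, hence Euclidean closedness) and the reduction of the trace to a squared norm along an orbit are the same in both.
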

\begin{proof}
$\tilde{N}:=\pi(N)^*$ is also a unipotent algebraic group. Let $\psi\in\mathcal{H}\otimes\mathbb{C}^d$ be a purification of $\rho$ and consider the representation $\tilde{n}\mapsto\tilde{n}\otimes I$ of $\tilde{N}$. By \cite[Proposition of 2.5]{steinberg2006conjugacy} the orbit of $\psi$ under this action is closed. Therefore there is a vector in the orbit of minimal length. The claim follows since $\Tr\tilde{n}\rho\tilde{n}^*=\norm{(\tilde{n}\otimes I)\psi}^2$.
\end{proof}

\begin{example}[{Duffield, \cite[Theorem 2.1.]{duffield1990large}}]\label{ex:maximallymixed}
Let $K$ be arbitrary compact connected, $\pi:K\to U(\mathcal{H})$ irreducible and $\rho=\frac{I}{\dim\mathcal{H}}$. Then
\begin{equation}
\begin{split}
\ratefunction{\rho}(\action{h}{x_0})
 & = \sup_{\alpha\in\mathfrak{a}}\max_{n\in N}\pairing{x_0}{\alpha}-\ln\Tr\pi(n)^*\pi(\exp\alpha/2)\pi(h)^*\frac{I}{\dim\mathcal{H}}\pi(h)\pi(\exp\alpha/2)\pi(n)  \\
 & = \sup_{\alpha\in\mathfrak{a}}\max_{n\in N}\pairing{x_0}{\alpha}-\ln\frac{\Tr\pi(n)^*\pi(\exp\alpha)\pi(n)}{\dim\mathcal{H}}.
\end{split}
\end{equation}
The maximum over $n$ is attained at $n=e$, as can be seen by computing the trace in a basis where $T_e\pi(\alpha)$ is diagonal and $\pi(N)$ consists of upper triangular matrices with $1$ on their diagonals. The formula is $K$-invariant, therefore the infimum over $K$ in \eqref{eq:contraction} can be omitted, leading to the formula
\begin{equation}\label{eq:multiplicityLDP}
\tilde{I}_\rho(x_0)=\ratefunction{\rho}(\action{h}{x_0})=\sup_{\alpha\in\mathfrak{a}}\pairing{x_0}{\alpha}-\ln\frac{\Tr\pi(\exp\alpha)}{\dim\mathcal{H}}.
\end{equation}
The character $\Tr\pi(\cdot)$ is $K$-invariant and $\pairing{x_0}{\alpha}$ is maximal within the $K$-orbit of $\alpha$ when $\alpha$ is in the image of the dominant Weyl chamber under the identification of $i\mathfrak{t}$ with $i\mathfrak{t}^*$ via an invariant inner product. Therefore the supremum can be restricted to this subset of $\mathfrak{a}$.
\end{example}
For a more precise asymptotic formula for the multiplicities see \cite[Theorem 9.]{tate2004lattice}.

\begin{example}[Ceg\l a--Lewis--Raggio, \cite{cegla1988free}]
Let $K=SU(2)$, $\pi:K\to U(\mathbb{C}^{2j+1})$ a spin-$j$ irreducible representation and $\rho=\frac{I}{2j+1}$. In this case $\positivechamber\simeq[0,\infty)$ and $\mathfrak{a}\simeq\mathbb{R}$. Specializing \eqref{eq:multiplicityLDP} to this case we get
\begin{equation}
\begin{split}
\tilde{I}_\rho(x_0)
 & = \ln(2j+1)+\sup_{\alpha\in\mathfrak{a}}\pairing{x_0}{\alpha}-\ln\Tr\pi(\exp\alpha)  \\
 & = \ln(2j+1)+\sup_{\alpha\ge0}x_0\alpha-\ln\sum_{l=-j}^j\exp(l\alpha)  \\
 & = \ln(2j+1)+\sup_{\alpha\ge0}x_0\alpha-\ln\frac{\sinh\frac{(2j+1)\alpha}{2}}{\sinh\frac{\alpha}{2}}.
\end{split}
\end{equation}
\end{example}

\begin{example}[Cram\'er, \cite{cramer1938nouveau}]
Let $K=U(1)^d$ and $\pi:K\to U(\mathcal{H})$ a finite dimensional unitary representation. Let $\rho$ be arbitrary and write $r_n=\Tr P_n\rho$ where $P_n$ is the orthogonal projection corresponding to the irreducible representation labelled by $n\in\mathbb{Z}^d$ (see Example~\ref{ex:momentmap:torus}). If $X_1,X_2,\ldots,X_m$ are independent and identically distributed discrete vector random variables that take the value $n$ with probability $r_n$, then $\mu_m$ is the distribution of $\frac{1}{m}(X_1+X_2+\cdots+X_m)$.

To compute the rate function we use $\pi(\exp\alpha)=\sum_{n\in\mathbb{Z}^d}e^{\pairing{n}{\alpha}}P_n$:
\begin{equation}
\ratefunction{\rho}(x)=\sup_{\alpha\in\mathbb{R}^d}\pairing{x}{\alpha}-\ln\sum_{n\in\mathbb{Z}^d}r_ne^{\pairing{n}{\alpha}},
\end{equation}
which is the Legendre--Fenchel transform of the logarithm of the moment generating function of $X_i$.
\end{example}

\begin{example}[Keyl, {\cite[Theorem 3.2]{keyl2006quantum}}]\label{ex:stateestimation}
Let $K=U(d)$, $\pi$ the identity map (the standard representation on $\mathbb{C}^d$) and $\rho$ an arbitrary state.
Let us first minimise $\Tr\pi(n)^*\sigma\pi(n)$ over $n\in N$. With $N$ the set of upper triangular unipotent matrices (see Example~\ref{ex:Borel:unitary}), it is possible to choose $n$ in such a way that $\pi(n)^*\sigma\pi(n)$ is diagonal. As in Example~\ref{ex:maximallymixed}, this is where the minimum is attained. To find the diagonal entries, note that the principal minors are invariant under this action of $N$. If $\principalminor_j(\sigma)$ denotes the determinant of the upper left $j\times j$ submatrix of $\sigma$ (with the convention $\principalminor_0(\sigma)=1$), then the resulting diagonal matrix is
\begin{equation}
\begin{bmatrix}
\frac{\principalminor_1(\sigma)}{\principalminor_0(\sigma)} & 0 & \cdots & 0  \\
0 & \frac{\principalminor_2(\sigma)}{\principalminor_1(\sigma)} & \ddots & \vdots  \\
\vdots & \ddots & \ddots & 0  \\
0 & \cdots & 0 & \frac{\principalminor_d(\sigma)}{\principalminor_{d-1}(\sigma)}.
\end{bmatrix}
\end{equation}
Let $\alpha_1,\ldots,\alpha_d\in\mathbb{R}$ and $x_{0,1},\ldots,x_{0,d}\in\mathbb{R}$ be the diagonal entries of $\alpha\in\mathfrak{a}\simeq\mathbb{R}^d$ and $x_0\in \positivechamber$. With $\sigma=\pi(\exp(\alpha/2))\pi(h)^*\rho\pi(h)\pi(\exp(\alpha/2))$ the rate function formula \eqref{eq:alternativeexpressionAN} becomes
\begin{equation}
\ratefunction{\rho}(\action{h}{x_0})=\sup_{\alpha\in\mathbb{R}^d}\pairing{x_0}{\alpha}-\ln\sum_{i=1}^d e^{\alpha_i}\frac{\principalminor_i(\pi(h)^*\rho\pi(h))}{\principalminor_{i-1}(\pi(h)^*\rho\pi(h))}
\end{equation}
The supremum can be found by differentiation, which gives
\begin{equation}
\ratefunction{\rho}(\action{h}{x_0})=\begin{cases}
\displaystyle\sum_{i=1}^d\left[x_{0,i}\ln x_{0,i}-x_{0,i}\left(\ln\frac{\principalminor_i(\pi(h)^*\rho\pi(h))}{\principalminor_{i-1}(\pi(h)^*\rho\pi(h))}\right)\right] & \parbox{15em}{if $x_{0,1}+\cdots+x_{0,d}=1$  \\  and $\forall i:x_{0,i}\ge 0$}  \\
\infty & \text{otherwise.}
\end{cases}
\end{equation}

Using the contraction principle one recovers the result of Keyl and Werner \cite{keyl2001estimating} by the same reasoning as in \cite[Proof of Lemma 4.15]{keyl2006quantum}.
\end{example}

\begin{example}
Let $K=U(d_1)\times U(d_2)$, $\pi:K\to U(\mathbb{C}^{d_1}\otimes\mathbb{C}^{d_2})$ the tensor product of the standard representations and let $\rho=\ketbra{\psi}{\psi}$ be a pure state. We identify $\psi$ with a $d_1\times d_2$ matrix. Under this identification, the action of $K$ becomes left multiplication with the first unitary and right multiplication with the transpose of the second one. Similarly as before, we choose $N$ to be group of pairs of upper triangular unipotent matrices. Let $x_{1,0},\alpha_1\in\mathbb{R}^{d_1}$, $x_{2,0},\alpha_2\in\mathbb{R}^{d_2}$, identified with diagonal matrices of sizes $d_1\times d_1$ and $d_2\times d_2$ and with $x_{1,0}$ and $x_{2,0}$ nonincreasing. For $(h_1,h_2)\in K$, the pair $(\action{h_1}{x_{1,0}},\action{h_2}{x_{2,0}})$ can be viewed as an element of $i\mathfrak{k}$. As in Example~\ref{ex:stateestimation}, the supremum over $(n_1,n_2)\in N$ in \eqref{eq:alternativeexpressionAN} is attained when $n_1^*\exp(\alpha_1/2)h_1^*\psi\overline{h_2}\exp(\alpha_2/2)\overline{n_2}$ is diagonal, and the diagonal form can be determined using the invariance of the principal minors under the $N$-action. The rate function is therefore
\begin{multline}
\ratefunction{\ketbra{\psi}{\psi}}(\action{h_1}{x_{1,0}},\action{h_2}{x_{2,0}})=\sup_{\substack{\alpha_1\in\mathbb{R}^{d_1}  \\  \alpha_2\in\mathbb{R}^{d_2}}}\pairing{x_{1,0}}{\alpha_1}+\pairing{x_{2,0}}{\alpha_2}  \\  -\ln\sum_{i=1}^{\min\{d_1,d_2\}} e^{\alpha_{1,i}+\alpha_{2,i}}\left|\frac{\principalminor_i(h_1^*\psi\overline{h_2})}{\principalminor_{i-1}(h_1^*\psi\overline{h_2})}\right|^2.
\end{multline}
The supremum is infinite if $x_{1,0}$ differs from $x_{2,0}$ up to trailing zeros, or any of the entires is negative, or if the vectors do not sum to one. Otherwise it evaluates to
\begin{equation}
\ratefunction{\ketbra{\psi}{\psi}}(\action{h_1}{x_0},\action{h_2}{x_0})=\sum_{i=1}^{\min\{d_1,d_2\}}\left[x_{0,i}\ln x_{0,i}-x_{0,i}\ln\left|\frac{\principalminor_i(h_1^*\psi\overline{h_2})}{\principalminor_{i-1}(h_1^*\psi\overline{h_2})}\right|^2\right]
\end{equation}
\end{example}

Next we prove some properties of $\ratefunction{\rho}$.
\begin{proposition}\label{prop:lowersemicontinuous}
$\ratefunction{\rho}(x)$ is lower semicontinuous.
\end{proposition}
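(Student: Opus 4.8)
The plan is to exploit the formula \eqref{eq:ratefunctiondef} defining $\ratefunction{\rho}$ as a supremum, together with the continuity of $\chi$ established in Proposition~\ref{prop:continuity}. Lower semicontinuity of a function defined as a supremum over a fixed index set follows immediately once one knows that each individual function in the family is lower semicontinuous (in fact continuous), since the pointwise supremum of any family of lower semicontinuous functions is lower semicontinuous. So the real content is to check that for each fixed $g\in G$ the map $x\mapsto -\ln\chi_x(g^{-1})-\ln\Tr\pi(g)^*\rho\pi(g)$ is continuous in $x$.

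First I would note that the second term, $-\ln\Tr\pi(g)^*\rho\pi(g)$, does not depend on $x$ at all, so it is trivially continuous. For the first term, I would invoke Proposition~\ref{prop:continuity}, which says precisely that $\chi:i\mathfrak{k}^*\times G\to\mathbb{R}$ is (jointly) continuous; in particular, for each fixed $g^{-1}\in G$ the partial map $x\mapsto\chi_x(g^{-1})$ is continuous, and it takes values in $(0,\infty)$ (Definition~\ref{def:chi}), so composing with $-\ln$ keeps it continuous. Hence each member of the family indexed by $g\in G$ is continuous, and therefore $\ratefunction{\rho}=\sup_{g\in G}\bigl(-\ln\chi_x(g^{-1})-\ln\Tr\pi(g)^*\rho\pi(g)\bigr)$ is lower semicontinuous as a pointwise supremum of continuous functions. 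One should also remark that the supremum could a priori be $+\infty$, but that is harmless: lower semicontinuity is still the correct statement, and $\ratefunction{\rho}$ is declared to take values in $(-\infty,\infty]$ in Definition~\ref{def:ratefunctiondef}.

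There is essentially no obstacle here; the only thing to be a little careful about is the logical direction of the standard fact being used — it is the \emph{supremum} (not infimum) of lower semicontinuous functions that is lower semicontinuous — and to make sure the family is genuinely indexed by a fixed set independent of $x$, which it is. If one wanted an even more self-contained argument avoiding the citation of joint continuity, one could instead fix a sequence $x_n\to x$, write $x_n=\action{h_n}{x_{0,n}}$, pass to a subsequence so that $h_n\to h$ and $x_{0,n}\to x_0$ with $\action{h}{x_0}=x$ (using compactness of $K$ and closedness of $\positivechamber$ plus continuity of the coadjoint action and the fact that $\positivechamber$ meets each orbit), and then use continuity of $\alpha:G\to\mathfrak{a}$ together with Lemma~\ref{lem:alphaproductinvariant} to get $\pairing{x_{0,n}}{\alpha(g^{-1}h_n)}\to\pairing{x_0}{\alpha(g^{-1}h)}$; but invoking Proposition~\ref{prop:continuity} is cleaner. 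I would write the proof in the short form.
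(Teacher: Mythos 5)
Your proof is correct and is essentially identical to the paper's: both express $\ratefunction{\rho}$ via Definition~\ref{def:ratefunctiondef} as a pointwise supremum over $g\in G$ of functions $x\mapsto-\ln\chi_x(g^{-1})-\ln\Tr\pi(g)^*\rho\pi(g)$, each continuous in $x$ by Proposition~\ref{prop:continuity}, and conclude by the standard fact that a supremum of continuous functions is lower semicontinuous. The extra remarks you add (about the $+\infty$ case and the alternative sequence argument) are fine but not needed.
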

\begin{proof}
$\ratefunction{\rho}(x)$ is the supremum of the family of continuous functions $x\mapsto-\ln\chi_x(g^{-1})-\ln\Tr\pi(g)^*\rho\pi(g)$, hence lower semicontinuous.
\end{proof}

\begin{proposition}\label{prop:ratefunctionnonnegative}
Let $\rho\in\states(\mathcal{H})$. $\ratefunction{\rho}(x)\ge 0$ for every $x\in i\mathfrak{k}^*$ and if $x\neq J(\rho)$ then $\ratefunction{\rho}(x)>0$.
\end{proposition}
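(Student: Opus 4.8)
The plan is to use the expression \eqref{eq:ratefunctiondef}, $\ratefunction{\rho}(x)=\sup_{g\in G}-\ln\chi_x(g^{-1})-\ln\Tr\pi(g)^*\rho\pi(g)$. Non-negativity is immediate: take $g=e$. Then $\chi_x(e)=1$ so $-\ln\chi_x(e^{-1})=0$, and $\Tr\pi(e)^*\rho\pi(e)=\Tr\rho=1$ since $\rho$ is a state, so the term with $g=e$ equals $0$; hence the supremum is at least $0$. So the only real content is the strict inequality when $x\neq J(\rho)$.

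For the strict inequality I would use the differentiable structure: the idea is that $g\mapsto-\ln\chi_x(g^{-1})-\ln\Tr\pi(g)^*\rho\pi(g)$ is smooth near $g=e$, so if its derivative at $e$ in some direction is nonzero, the supremum strictly exceeds the value $0$ at $e$. By the polar decomposition argument already used in the proof of Proposition~\ref{prop:ratefunctionexpressions}, it suffices to vary $g=\exp(\xi/2)$ with $\xi\in i\mathfrak{k}$, i.e. to look at the function $\xi\mapsto\nonlinearpairing{x}{\xi}-\ln Z_\rho(\xi)$ from \eqref{eq:alternativeexpressionZ} near $\xi=0$. I would compute its differential at $\xi=0$. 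On the one hand $\frac{\ed}{\ed t}\ln Z_\rho(t\xi)\big|_{t=0}=\frac{\Tr\rho\, T_e\pi(\xi)}{\Tr\rho}=\pairing{J(\rho)}{\xi}$ by definition of $J$. On the other hand, the first-order term of $\nonlinearpairing{x}{\xi}$ in $\xi$ is the linear functional $x$ itself: indeed when $x=\action{h}{x_0}$, writing $\action{h^{-1}}{\xi}$ and using that the Iwasawa $\mathfrak a$-component $\alpha(\exp(-\eta/2))$ has derivative $-\frac12$(projection of $\eta$ onto $\mathfrak a$ along $\mathfrak k\oplus\mathfrak n$) at $\eta=0$, one gets $\frac{\ed}{\ed t}\nonlinearpairing{x}{t\xi}\big|_{t=0}=\pairing{x_0}{\text{(}\mathfrak a\text{-part of }\action{h^{-1}}{\xi}\text{)}}=\pairing{x}{\xi}$, where the last equality uses $K_{x_0}$-invariance of $x_0$ in the sense of Lemma~\ref{lem:alphaproductinvariant} together with $\pairing{x_0}{\cdot}$ annihilating the root-space contributions orthogonal to $x_0$ (so only the genuine coadjoint pairing survives). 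Hence the gradient at $\xi=0$ of $\xi\mapsto\nonlinearpairing{x}{\xi}-\ln Z_\rho(\xi)$ is $x-J(\rho)$. If $x\neq J(\rho)$ this is a nonzero functional, so there is $\xi$ with positive directional derivative, and therefore $\ratefunction{\rho}(x)\geq\sup_{t>0}\bigl(\nonlinearpairing{x}{t\xi}-\ln Z_\rho(t\xi)\bigr)>0$.

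The main obstacle is the clean computation of the differential of $\nonlinearpairing{x}{\xi}$ at the origin and checking it equals the plain pairing $\pairing{x}{\xi}$ for \emph{all} $\xi\in i\mathfrak k$, not just $\xi$ tangent to the chamber direction; this requires invoking precisely the infinitesimal computation inside the proof of Lemma~\ref{lem:alphaproductinvariant} (that the $\mathfrak a$-component of $\Ad_{\exp\alpha\, n}\eta$ at the identity is annihilated by $x_0$ when the root is orthogonal to $x_0$, and vanishes when the root is positive). An alternative, perhaps cleaner, route that sidesteps the derivative computation is to argue via \eqref{eq:alternativeexpressionZ} and strict convexity: $\xi\mapsto\ln Z_\rho(\xi)$ is convex on $i\mathfrak k$ with gradient $J(\action{\exp(-\xi/2)^{-1}}{\rho})$-type expressions, and $\nonlinearpairing{x}{\xi}$ is concave-ish enough that the only way the supremum in \eqref{eq:alternativeexpressionZ} is attained at $\xi=0$ with value $0$ is if the subgradient condition $x=J(\rho)$ holds; I would keep the derivative argument as the primary line since convexity of $\nonlinearpairing{x}{\cdot}$ has not been established in the excerpt. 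Finally I would note that lower semicontinuity (Proposition~\ref{prop:lowersemicontinuous}) is not needed here, and that the conclusion combined with Theorem~\ref{thm:LLN}'s statement is consistent: $\ratefunction{\rho}$ vanishes exactly at $J(\rho)$.
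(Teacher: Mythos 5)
Your proposal is correct. The non-negativity step is the same as the paper's (plug in $g=e$, resp.\ $\xi=0$), and for strict positivity both arguments boil down to showing that the origin is not a critical point of the function whose supremum defines $\ratefunction{\rho}(x)$, so the supremum strictly exceeds its value $0$ there. The paper uses the form \eqref{eq:alternativeexpressionGx}: it restricts the supremum to the parabolic $G_x$, decomposes $T_eG_x=T_eK_x\oplus\action{h}{\mathfrak{a}}\oplus\action{h}{\mathfrak{n}}$, and computes the derivatives at $e$ separately in the $\mathfrak{a}$-directions (getting $2\pairing{x-J(\rho)}{\beta}$) and the $\mathfrak{n}$-directions (getting $-2\pairing{J(\rho)}{\nu}$), whence criticality forces $x$ and $J(\rho)$ to agree on $\action{h}{\mathfrak{b}}$ and therefore everywhere. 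You instead use the Legendre-type form \eqref{eq:alternativeexpressionZ} and differentiate over $\xi\in i\mathfrak{k}$, obtaining the gradient $x-J(\rho)$ in one shot. The price is the computation of the differential of $\nonlinearpairing{x}{\cdot}$ at the origin, which the paper's route avoids; the benefit is a more transparent formulation (vanishing of a subgradient of the Legendre--Fenchel transform).

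The ``main obstacle'' you flagged is easier than you feared and needs neither Lemma~\ref{lem:alphaproductinvariant} nor any discussion of roots orthogonal to $x_0$. Write $\zeta=\action{h^{-1}}{\xi}\in i\mathfrak{k}$ and let $\zeta=\zeta_{\mathfrak{k}}+\zeta_{\mathfrak{a}}+\zeta_{\mathfrak{n}}$ be its Lie-algebra Iwasawa decomposition; the chain rule gives $\frac{\ed}{\ed t}\nonlinearpairing{x}{t\xi}\big|_{t=0}=\pairing{x_0}{\zeta_{\mathfrak{a}}}$, and what you need is $\pairing{x_0}{\zeta_{\mathfrak{a}}}=\pairing{x_0}{\zeta}\,(=\pairing{x}{\xi})$. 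But $x_0\in\positivechamber\subseteq i\mathfrak{t}^*$ vanishes on \emph{all} root spaces, in particular on $\mathfrak{n}$, so $\pairing{x_0}{\zeta_{\mathfrak{n}}}=0$. Moreover both $\zeta$ and $\zeta_{\mathfrak{a}}$ lie in $i\mathfrak{k}$, so $\pairing{x_0}{\zeta}$ and $\pairing{x_0}{\zeta_{\mathfrak{a}}}$ are real, while $\pairing{x_0}{\zeta_{\mathfrak{k}}}$ is purely imaginary; taking the imaginary part of $\pairing{x_0}{\zeta}=\pairing{x_0}{\zeta_{\mathfrak{k}}}+\pairing{x_0}{\zeta_{\mathfrak{a}}}$ forces $\pairing{x_0}{\zeta_{\mathfrak{k}}}=0$. (Equivalently, applying $T_e\Theta$ to the Iwasawa decomposition shows $\zeta_{\mathfrak{k}}=-\tfrac12\bigl(\zeta_{\mathfrak{n}}+T_e\Theta\zeta_{\mathfrak{n}}\bigr)$ lies in a sum of root spaces, which $x_0$ annihilates.) With this the rest of your argument closes cleanly.
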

\begin{proof}
Since $e\in G_x$,
\begin{equation}
\begin{split}
\ratefunction{\rho}(x)
 & = \sup_{g\in G_x}\ln\chi_x(g)-\ln\Tr\pi(g)^*\rho\pi(g)  \\
 & \ge \ln\chi_x(e)-\ln\Tr\pi(e)^*\rho\pi(e)  \\
 & = -\ln\Tr\rho=0.
\end{split}
\end{equation}

$G_x$ is a smooth manifold, the expression $\ln\chi_x(g)-\ln\Tr\pi(g)^*\rho\pi(g)$ is a smooth function of $g\in G_x$ and is zero for $g=e$. It follows that if $e$ is not a critical point then the supremum is strictly positive. The tangent space decomposes as $T_eG_x=T_eK_x\oplus(\action{h}{\mathfrak{a}})\oplus(\action{h}{\mathfrak{n}})$ where $h\in K$ is any element such that $x=\action{h}{x_0}$, $x_0\in \positivechamber$.

Let $\beta\in\action{h}{\mathfrak{a}}$. Then
\begin{multline}
\left.\frac{\ed}{\ed s}\ln\chi_x(\exp s\beta)-\ln\Tr\pi(\exp s\beta)^*\rho\pi(\exp s\beta)\right|_{s=0}  \\
 = 2\pairing{x_0}{\action{h^{-1}}{\beta}}-\frac{\Tr T_e\pi(\beta)\rho+\rho T_e\pi(\beta)}{\Tr\rho}
 = 2\pairing{x}{\beta}-2\pairing{J(\rho)}{\beta}.
\end{multline}

Let $\nu\in \action{h}{\mathfrak{n}}$. Then
\begin{multline}
\left.\frac{\ed}{\ed s}\ln\chi_x(\exp s\nu)-\ln\Tr\pi(\exp s\nu)^*\rho\pi(\exp s\nu)\right|_{s=0}  \\
 = -\frac{\Tr T_e\pi(\nu)\rho+\rho T_e\pi(\nu)}{\Tr\rho}
 = -2\pairing{J(\rho)}{\nu}.
\end{multline}

If $e$ is a critical point then both derivatives vanish for every $\beta$ and $\nu$, thus $J(\rho)$ and $x$ agree on $\action{h}{\mathfrak{b}}$. Since they both vanish on $\mathfrak{k}$, this means $x=J(\rho)$.
\end{proof}

Recall that the domain of an extended real valued function $f:X\to(-\infty,\infty]$ is the set $\domain f=\setbuild{x\in X}{f(x)<\infty}$. We now show that the domain of the rate function is precompact.
\begin{proposition}\label{prop:precompactdomain}
Let $\Delta\subseteq i\mathfrak{t}^*$ be the convex hull of the set of weights appearing in the decomposition of $\mathcal{H}$ with respect to the action of $T$. Then for every $x\notin \action{K}{\Delta}$ we have $\ratefunction{\rho}(x)=\infty$.
\end{proposition}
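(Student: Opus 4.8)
The plan is to use the expression~\eqref{eq:alternativeexpressionAN} for $\ratefunction{\rho}$, restrict the double supremum to $n=e$ and to a ray in $\mathfrak{a}$, and finish with a separating hyperplane argument. Write $x=\action{h}{x_0}$ with $x_0\in\positivechamber$ and $h\in K$. Since $\action{K}{\Delta}$ is a union of coadjoint orbits, the hypothesis $x\notin\action{K}{\Delta}$ forces $x_0\notin\action{K}{\Delta}$, and a fortiori $x_0\notin\Delta$ (because $\Delta\subseteq\action{K}{\Delta}$). Hence it is enough to show $\ratefunction{\rho}(\action{h}{x_0})=\infty$ whenever $x_0\in\positivechamber\setminus\Delta$.

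Set $\sigma:=\pi(h)^*\rho\pi(h)$, again a state, and decompose $\mathcal{H}=\bigoplus_\omega\mathcal{H}_\omega$ into $T$-weight spaces with orthogonal projections $P_\omega$. Put $c_\omega:=\Tr P_\omega\sigma\ge 0$; then $\sum_\omega c_\omega=\Tr\sigma=1$, and $c_\omega=0$ unless $\mathcal{H}_\omega\neq 0$, so $c_\omega>0$ only for weights $\omega$ of $\mathcal{H}$, all of which lie in $\Delta$. For $\alpha\in\mathfrak{a}$ one has $\pi(\exp\alpha)=\sum_\omega e^{\pairing{\omega}{\alpha}}P_\omega$ (consistent with the scaling factor in~\eqref{eq:GactionHWorbit}), hence by cyclicity of the trace $\Tr\pi(\exp\alpha/2)\sigma\pi(\exp\alpha/2)=\Tr\sigma\pi(\exp\alpha)=\sum_\omega c_\omega e^{\pairing{\omega}{\alpha}}$. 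Taking $n=e$ in~\eqref{eq:alternativeexpressionAN} therefore gives
\[
\ratefunction{\rho}(\action{h}{x_0})\ \ge\ \sup_{\alpha\in\mathfrak{a}}\left(\pairing{x_0}{\alpha}-\ln\sum_\omega c_\omega e^{\pairing{\omega}{\alpha}}\right).
\]
Now $\Delta$ is a nonempty compact convex subset of $i\mathfrak{t}^*$ and $x_0\notin\Delta$, so by the separating hyperplane theorem there is $\alpha_0\in\mathfrak{a}$ (the natural separating functionals on $i\mathfrak{t}^*$ are exactly the elements of $i\mathfrak{t}=\mathfrak{a}$, paired against $x_0$ directly) with $\varepsilon:=\pairing{x_0}{\alpha_0}-\max_{\omega\in\Delta}\pairing{\omega}{\alpha_0}>0$. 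Substituting $\alpha=t\alpha_0$ for $t>0$ and using $\sum_\omega c_\omega e^{t\pairing{\omega}{\alpha_0}}\le e^{t\max_{\omega\in\Delta}\pairing{\omega}{\alpha_0}}$ (here $\sum_\omega c_\omega=1$ and each $\omega$ with $c_\omega>0$ lies in $\Delta$), we obtain $\ratefunction{\rho}(\action{h}{x_0})\ge\sup_{t>0}t\varepsilon=\infty$.

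The argument is essentially routine and I do not anticipate a genuine obstacle; the points needing a little care are the normalisation bookkeeping — that $\pi(\exp\alpha)$ scales the weight-$\omega$ subspace by $e^{\pairing{\omega}{\alpha}}$ with the same pairing used throughout the paper — and the observation that no inner-product identification of $i\mathfrak{t}$ with $i\mathfrak{t}^*$ is needed, since the separating functional genuinely lives in $\mathfrak{a}$. One could run the same computation starting from~\eqref{eq:alternativeexpressionZ} by testing $\xi=\action{h}{\alpha}$ with $\alpha\in\mathfrak{a}$, which produces the identical lower bound.
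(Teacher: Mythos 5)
Your proof is correct and follows essentially the same route as the paper's: reduce to $x_0\in\positivechamber\setminus\Delta$, lower-bound the supremum defining $\ratefunction{\rho}$ by restricting to a one-parameter subgroup of $A$ (you take $n=e$ in~\eqref{eq:alternativeexpressionAN}, the paper takes $g\in hAh^{-1}$ in~\eqref{eq:alternativeexpressionGx} — the same thing), apply a separating hyperplane to $\Delta$ and $x_0$, and let the scale parameter go to infinity. Your explicit weight-space decomposition giving $\Tr\sigma\pi(\exp\alpha)=\sum_\omega c_\omega e^{\pairing{\omega}{\alpha}}\le e^{\max_{\omega\in\Delta}\pairing{\omega}{\alpha}}$ just spells out the inequality the paper leaves implicit in its middle step.
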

\begin{proof}
Suppose that $x\notin\action{K}{\Delta}$ and write $x=\action{h}{x_0}$ with $x_0\in \positivechamber$, $h\in K$. Since $\Delta$ is compact and convex, there is a hyperplane in $i\mathfrak{t}^*$ separating $\Delta$ and $x_0$, so there is an element $\beta\in\mathfrak{t}$ such that $\pairing{x_0}{\beta}>\max_{x'\in\Delta}\pairing{x'}{\beta}$.

We use \eqref{eq:alternativeexpressionGx} and that $G_x$ contains $hAh^{-1}$, therefore for any $s\in\mathbb{R}$ we have
\begin{equation}
\begin{split}
\ratefunction{\rho}(x)
 & \ge \ln\chi_x(h\exp(s\beta)h^{-1})-\ln\Tr\pi(h\exp(s\beta)h^{-1})^*\rho\pi(h\exp(s\beta)h^{-1})  \\
 & = 2\pairing{x_0}{s\beta}-\ln\Tr\pi(h)^*\rho\pi(h)\pi(\exp(2s\beta))  \\
 & \ge 2\pairing{x_0}{s\beta}-\max_{x'\in\Delta}\pairing{x'}{2s\beta}  \\
 & = 2s\left(\pairing{x_0}{\beta}-\max_{x'\in\Delta}\pairing{x'}{\beta}\right).
\end{split}
\end{equation}
The coefficient of $s$ is strictly positive, therefore letting $s\to\infty$ shows that $\ratefunction{\rho}(x)=\infty$.
\end{proof}

\begin{proposition}\label{prop:ratesupport}
Let $\rho,\sigma\ge 0$ and $p\ge 0$ such that $\rho\le p\sigma$. For every $x\in i\mathfrak{k}^*$ we have
\begin{equation}
\ratefunction{\rho}(x)\ge\ratefunction{\sigma}(x)-\ln p.
\end{equation}
In particular, if $\support\rho\le\support\sigma$ then $\domain\ratefunction{\rho}\subseteq\domain\ratefunction{\sigma}$.
\end{proposition}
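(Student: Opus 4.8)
The plan is to read the inequality off directly from the variational formula in Definition~\ref{def:ratefunctiondef}, using only monotonicity. I may assume $p>0$; if $p=0$ then $\rho\le 0$ forces $\rho=0$, in which case $\Tr\pi(g)^*\rho\pi(g)=0$ for all $g$ and $\ratefunction{\rho}\equiv+\infty$, so the inequality is trivial. For $p>0$, the first observation is that conjugation preserves the Loewner order: for every $g\in G$ and every $v\in\mathcal{H}$,
\[
\langle v,\pi(g)^*(p\sigma-\rho)\pi(g)v\rangle=\langle\pi(g)v,(p\sigma-\rho)\pi(g)v\rangle\ge 0,
\]
so $\pi(g)^*\rho\pi(g)\le p\,\pi(g)^*\sigma\pi(g)$. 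Taking traces (the trace is monotone on positive operators) and then logarithms gives
\[
-\ln\Tr\pi(g)^*\rho\pi(g)\ge-\ln p-\ln\Tr\pi(g)^*\sigma\pi(g).
\]
Adding $-\ln\chi_x(g^{-1})$ to both sides and taking the supremum over $g\in G$ yields $\ratefunction{\rho}(x)\ge\ratefunction{\sigma}(x)-\ln p$, as claimed.

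For the ``in particular'' statement I would first exhibit a concrete $p$. Assume $\support\rho\le\support\sigma$ and $\rho\neq 0$ (otherwise the claim is trivial). Let $P$ be the orthogonal projection onto $\support\sigma$ and $\lambda>0$ the smallest nonzero eigenvalue of $\sigma$. Then $\sigma\ge\lambda P$, and since $\rho$ is supported inside $\support\sigma$ we have $\rho\le\norm[\infty]{\rho}\,P\le(\norm[\infty]{\rho}/\lambda)\,\sigma$; thus the hypothesis of the first part holds with $p=\norm[\infty]{\rho}/\lambda$. Applying the first part, $\ratefunction{\sigma}(x)\le\ratefunction{\rho}(x)+\ln p$ for every $x$, hence $\ratefunction{\rho}(x)<\infty$ implies $\ratefunction{\sigma}(x)<\infty$, i.e. $\domain\ratefunction{\rho}\subseteq\domain\ratefunction{\sigma}$.

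I do not anticipate a genuine obstacle: the argument is just monotonicity of conjugation, of the trace, and of the logarithm, combined with the supremum in Definition~\ref{def:ratefunctiondef}. The only point that warrants a line of justification is the existence of the constant $p$ when $\support\rho\le\support\sigma$, which is immediate in finite dimensions from the spectral decomposition of $\sigma$.
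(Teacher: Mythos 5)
Your proof is correct and takes essentially the same route as the paper: apply monotonicity of the trace under the order $\rho\le p\sigma$ inside the variational formula of Definition~\ref{def:ratefunctiondef}, then take the supremum. The only difference is that you spell out the edge case $p=0$ and give an explicit choice $p=\norm[\infty]{\rho}/\lambda$ for the ``in particular'' part, both of which the paper leaves implicit.
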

\begin{proof}
\begin{equation}
\begin{split}
\ratefunction{\rho}(x)
 & = \sup_{g\in G}-\ln\chi_x(g^{-1})-\ln\Tr\pi(g)^*\rho\pi(g)  \\
 & \ge \sup_{g\in G}-\ln\chi_x(g^{-1})-\ln\Tr\pi(g)^*p\sigma\pi(g)  \\
 & \ge \sup_{g\in G}-\ln\chi_x(g^{-1})-\ln\Tr\pi(g)^*\sigma\pi(g)-\ln p
   = \ratefunction{\sigma}(x)-\ln p.
\end{split}
\end{equation}
The second statement follows since $\support\rho\le\support\sigma$ iff $\rho\le p\sigma$ for some $p>0$ (this uses $\dim\mathcal{H}<\infty$).
\end{proof}

\begin{proposition}\label{prop:invertiblecontinuity}
Suppose that $\rho\in\states(\mathcal{H})$ is an invertible state. Then $\domain\ratefunction{\rho}=\domain\ratefunction{I}$ (the rate function for the identity operator $I$) is $K$-invariant and the restriction of $\ratefunction{\rho}$ to $\action{K}{\relativeinterior(\positivechamber\cap\domain\ratefunction{\rho})}$ is continuous.
\end{proposition}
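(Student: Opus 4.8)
The plan is to prove the three assertions in turn, using throughout the transformation law
\begin{equation*}
\ratefunction{\rho}(\action{k}{x})=\ratefunction{\action{k^{-1}}{\rho}}(x)\qquad(k\in K,\ x\in i\mathfrak{k}^*),
\end{equation*}
which I will derive first. It drops out of expression \eqref{eq:alternativeexpressionAN}: writing $x=\action{h}{x_0}$ with $x_0\in\positivechamber$, one has $\action{k}{x}=\action{kh}{x_0}$ and $\pi(kh)^*\rho\pi(kh)=\pi(h)^*(\action{k^{-1}}{\rho})\pi(h)$, so the right-hand side of \eqref{eq:alternativeexpressionAN} for $(\rho,\action{k}{x})$ coincides with the one for $(\action{k^{-1}}{\rho},x)$. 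The domain equality is then immediate: $\support\rho=\mathcal{H}=\support I$ since $\rho$ is invertible, so the last clause of Proposition~\ref{prop:ratesupport} gives $\domain\ratefunction{\rho}\subseteq\domain\ratefunction{I}$ and, with the roles reversed, $\domain\ratefunction{I}\subseteq\domain\ratefunction{\rho}$. Since $\action{k^{-1}}{\rho}$ is again an invertible state, the same equality gives $\domain\ratefunction{\action{k^{-1}}{\rho}}=\domain\ratefunction{I}=\domain\ratefunction{\rho}$, and combined with the transformation law this shows $\action{k}{x}\in\domain\ratefunction{\rho}\Leftrightarrow x\in\domain\ratefunction{\rho}$, i.e.\ $\domain\ratefunction{\rho}$ is $K$-invariant (with $\rho=I$ the law also shows $\ratefunction{I}$ is $K$-invariant).

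Next I will obtain continuity on the slice $\positivechamber$. The point is that for $x_0\in\positivechamber$ one may take $h=e$ in Definition~\ref{def:chi}, so $-\ln\chi_{x_0}(g^{-1})=-2\pairing{x_0}{\alpha(g^{-1})}$ is affine in $x_0$; hence, by \eqref{eq:ratefunctiondef}, the restriction of $\ratefunction{\sigma}$ to $\positivechamber$ is a supremum of affine functions of $x_0$, so it is convex, for \emph{every} $\sigma\ge0$. Extending it by $+\infty$ outside $\positivechamber$ gives a convex function on $i\mathfrak{t}^*$ with domain $\positivechamber\cap\domain\ratefunction{\sigma}$; for an invertible state $\sigma$ this is the fixed convex set $D:=\positivechamber\cap\domain\ratefunction{I}$, which is nonempty because $\domain\ratefunction{I}$ is nonempty and $K$-invariant and hence meets $\positivechamber$. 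So this is a proper convex function, and by the standard fact that a proper convex function on a finite-dimensional space is continuous on the relative interior of its domain, $\ratefunction{\sigma}$ is continuous on $\relativeinterior D$ for every invertible state $\sigma$.

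To move from the slice to its full $K$-orbit I will make the previous continuity uniform in $\sigma$. If $\sigma,\sigma'$ are invertible states with $\sigma\le(1+\varepsilon)\sigma'$ and $\sigma'\le(1+\varepsilon)\sigma$ then Proposition~\ref{prop:ratesupport} yields $|\ratefunction{\sigma}(x)-\ratefunction{\sigma'}(x)|\le\ln(1+\varepsilon)$ for all $x\in D$; and these operator inequalities hold whenever $\sigma,\sigma'\ge\delta I$ and $\norm[\infty]{\sigma-\sigma'}<\delta\varepsilon$. Combining this uniform estimate with the per-state continuity on $\relativeinterior D$, I get: if $\sigma_n\to\sigma$ are invertible states all $\ge\delta I$ for some fixed $\delta>0$, and $x_n\to x$ in $\relativeinterior D$, then $\ratefunction{\sigma_n}(x_n)\to\ratefunction{\sigma}(x)$ (split $|\ratefunction{\sigma_n}(x_n)-\ratefunction{\sigma}(x)|\le|\ratefunction{\sigma_n}(x_n)-\ratefunction{\sigma}(x_n)|+|\ratefunction{\sigma}(x_n)-\ratefunction{\sigma}(x)|$ and let $n\to\infty$).

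Finally I will assemble continuity of $\ratefunction{\rho}$ on $S:=\action{K}{\relativeinterior D}=\action{K}{\relativeinterior(\positivechamber\cap\domain\ratefunction{\rho})}$. Since $\ratefunction{\rho}$ is lower semicontinuous (Proposition~\ref{prop:lowersemicontinuous}), it is enough to prove $\limsup_n\ratefunction{\rho}(y_n)\le\ratefunction{\rho}(y)$ for every sequence $y_n\to y$ with $y_n,y\in S$. Let $x_{0,n}$ be the representative of $y_n$ in $\positivechamber$; then $x_{0,n}\in\relativeinterior D$ because $y_n\in S$, and $x_{0,n}\to x_0$ (the map sending a point to its $\positivechamber$-representative is continuous), where $x_0\in\relativeinterior D$ is the representative of $y$. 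Pick $k_n\in K$ with $y_n=\action{k_n}{x_{0,n}}$; passing to a subsequence realizing $\limsup_n\ratefunction{\rho}(y_n)$ and then, by compactness of $K$, to a further subsequence with $k_n\to k$, we obtain $y=\action{k}{x_0}$ and $\action{k_n^{-1}}{\rho}\to\action{k^{-1}}{\rho}$, all of these unitarily conjugate to $\rho$ and hence $\ge cI$ with $c>0$ the least eigenvalue of $\rho$. By the transformation law and then the joint continuity above,
\begin{equation*}
\ratefunction{\rho}(y_n)=\ratefunction{\action{k_n^{-1}}{\rho}}(x_{0,n})\longrightarrow\ratefunction{\action{k^{-1}}{\rho}}(x_0)=\ratefunction{\rho}(\action{k}{x_0})=\ratefunction{\rho}(y)
\end{equation*}
along that subsequence, which forces $\limsup_n\ratefunction{\rho}(y_n)=\ratefunction{\rho}(y)$. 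I expect this last reduction to be the main obstacle: $\ratefunction{\rho}$ is neither convex on $i\mathfrak{k}^*$ nor $K$-invariant, so one cannot run either argument there directly, and the whole point is to absorb the $K$-rotation into the state via the transformation law and then control that perturbation uniformly using Proposition~\ref{prop:ratesupport} and the compactness of $K$.
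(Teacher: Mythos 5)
Your argument is correct, and it takes a genuinely different route to the continuity claim than the paper. Both proofs dispose of the domain statement in the same way, via Proposition~\ref{prop:ratesupport} and the observation $\support\rho=\mathcal{H}=\support I$, and both exploit the transformation identity $\ratefunction{\rho}(\action{h}{x_0})=\ratefunction{\pi(h)^*\rho\pi(h)}(x_0)$ and convexity of $\ratefunction{\sigma}$ on $\positivechamber$ (as a supremum of affine functions of $x_0$). The divergence is in how one controls the group variable. The paper fixes the state and proves equicontinuity in $h\in K$ through a differential estimate on $\ln Z_{\pi(\exp s\eta)^*\rho\pi(\exp s\eta)}(\xi)$, obtaining a bound of the form $2\norm[\infty]{\rho^{1/2}}\norm[\infty]{\rho^{-1/2}}\norm[\infty]{T_e\pi(\eta)}$ uniform in $\xi$; it then invokes a joint-continuity theorem of Rockafellar for functions convex in one argument and separately continuous in the other, and finally descends through the homeomorphism of $K\times_H\relativeinterior(\positivechamber\cap\domain\ratefunction{\rho})$ with the $K$-orbit. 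You replace all of this by a second application of Proposition~\ref{prop:ratesupport}: from $\sigma\le(1+\varepsilon)\sigma'$ and $\sigma'\le(1+\varepsilon)\sigma$ you get $|\ratefunction{\sigma}(x)-\ratefunction{\sigma'}(x)|\le\ln(1+\varepsilon)$ uniformly in $x$, and these operator inequalities hold once $\sigma,\sigma'\ge\delta I$ and $\norm[\infty]{\sigma-\sigma'}<\delta\varepsilon$; sequential compactness of $K$ then substitutes for the quotient-space argument. This is more elementary, reuses machinery already proved in the paper, and makes the role of the condition number $\norm[\infty]{\rho}\norm[\infty]{\rho^{-1}}$ explicit, at the price of giving continuity only as a sequential statement on the orbit rather than as a continuous function on the associated bundle (which is all that is needed here). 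One input you use tacitly that deserves an explicit mention is the continuity of the map sending $y\in i\mathfrak{k}^*$ to its representative in $\positivechamber$; this is a standard fact for the coadjoint action of a compact connected group, but it is not stated in the paper.
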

\begin{proof}
The statement on the domain is an immediate consequence of Proposition~\ref{prop:ratesupport} and $\support\rho=\mathcal{H}=\support I$, invariance of $I$ and thus of $\ratefunction{I}$.

We prove continuity. Let $f:K\times\relativeinterior(\positivechamber\cap\domain\ratefunction{\rho})\to\mathbb{R}$ be defined as $f(h,x_0)=\ratefunction{\rho}(\action{h}{x_0})=\ratefunction{\pi(h)^*\rho\pi(h)}(x_0)$. For every fixed $h$ the function $f(h,\cdot)$ is convex, as it is the supremum of a family of affine functions (see e.g. \eqref{eq:alternativeexpressionAN}).

Let $\eta\in\mathfrak{k}$ and $\rho_s=\pi(\exp s\eta)^*\rho\pi(\exp s\eta)$, and for $\xi\in i\mathfrak{k}$ let $F_\xi(s)=\ln Z_{\rho_s}(\xi)=\ln\Tr\rho_s\pi(\exp\xi)$. Then
\begin{equation}
F_\xi'(s)=\frac{\Tr[\rho_s,T_e\pi(\eta)]\pi(\exp\xi)}{\Tr\rho_s\pi(\exp\xi)}.
\end{equation}
From
\begin{equation}
\begin{split}
\rho_s^{-1/2}[\rho_s,T_e\pi(\eta)]\rho_s^{-1/2}
 & \le \norm[\infty]{\rho_s^{-1/2}[\rho_s,T_e\pi(\eta)]\rho_s^{-1/2}}I  \\
 & = \norm[\infty]{\rho_s^{1/2}T_e\pi(\eta)\rho_s^{-1/2}-\rho_s^{-1/2}T_e\pi(\eta)\rho_s^{1/2}}  \\
 & \le 2\norm[\infty]{\rho_s^{1/2}}\norm[\infty]{\rho_s^{-1/2}}\norm[\infty]{T_e\pi(\eta)}  \\
 & = 2\norm[\infty]{\rho^{1/2}}\norm[\infty]{\rho^{-1/2}}\norm[\infty]{T_e\pi(\eta)}
\end{split}
\end{equation}
we infer $[\rho_s,T_e\pi(\eta)]\le 2\norm[\infty]{\rho^{1/2}}\norm[\infty]{\rho^{-1/2}}\norm[\infty]{T_e\pi(\eta)}\rho_s$. Since $\pi(\exp\xi)\ge 0$, we have
\begin{equation}
\begin{split}
\ln Z_{\pi(\eta)^*\rho\pi(\eta)}(\xi)-\ln Z_{\rho}(\xi)
 & = F_\xi(1)-F_\xi(0)  \\
 & = \int_0^1F_\xi'(s)\ed s  \\
 & = \int_0^1\frac{\Tr[\rho_s,T_e\pi(\eta)]\pi(\exp\xi)}{\Tr\rho_s\pi(\exp\xi)}\ed s  \\
 & \le 2\norm[\infty]{\rho^{1/2}}\norm[\infty]{\rho^{-1/2}}\norm[\infty]{T_e\pi(\eta)}.
\end{split}
\end{equation}
Using this we get
\begin{equation}
\begin{split}
f(\exp(\eta),x_0)
 & = \ratefunction{\pi(\exp(\eta))^*\rho\pi(\exp(\eta))}(x_0)  \\
 & = \sup_{\xi\in i\mathfrak{k}}\nonlinearpairing{x_0}{\xi}-\ln Z_{\pi(\eta)^*\rho\pi(\eta)}(\xi)  \\
 & \le \sup_{\xi\in i\mathfrak{k}}\nonlinearpairing{x_0}{\xi}-\ln Z_{\rho}(\xi)+2\norm[\infty]{\rho^{1/2}}\norm[\infty]{\rho^{-1/2}}\norm[\infty]{T_e\pi(\eta)}  \\
 & = \ratefunction{\rho}(x_0)+2\norm[\infty]{\rho^{1/2}}\norm[\infty]{\rho^{-1/2}}\norm[\infty]{T_e\pi(\eta)}  \\
 & = f(e,x_0)+2\norm[\infty]{\rho^{1/2}}\norm[\infty]{\rho^{-1/2}}\norm[\infty]{T_e\pi(\eta)}.
\end{split}
\end{equation}
The same reasoning applies to $\pi(h)^*\rho\pi(h)$ instead of $\rho$ as well, with the same constant, therefore $f(\cdot,x_0)$ is continuous on $K$ for every $x_0$.

By \cite[Theorem 10.7]{rockafellar1970convex}, $f$ is jointly continuous. Let $H$ denote the common stabiliser subgroup of the points of $\relativeinterior(\positivechamber\cap\domain\ratefunction{\rho})$, and consider the right action of $H$ on $K$ by translations. $f$ is invariant with respect to this action, therefore descends to a continuous function $\tilde{f}$ on $K\times_H\relativeinterior(\positivechamber\cap\domain\ratefunction{\rho})$, which is homeomorphically mapped to $\action{K}{\relativeinterior(\positivechamber\cap\domain\ratefunction{\rho})}$. Under this homeomorphism $\tilde{f}$ corresponds to the restriction of $\ratefunction{\rho}$ to $\action{K}{\relativeinterior(\positivechamber\cap\domain\ratefunction{\rho})}$, which is therefore continuous.
\end{proof}

\subsection{Upper bound}\label{sec:upperbound}

We are in a position to prove the upper bound part of the large deviation principle. We prove a stronger statement involving any measurable set instead of only closed ones.
\begin{proposition}[Large deviation upper bound]\label{prop:LDPupper}
Let $F\subseteq\mathfrak{i}k^*$ be a measurable set and $m\in\mathbb{N}$. Then
\begin{equation}\label{eq:LDPupper}
\mu_m(F)\le(m+1)^{\dim\mathcal{H}(\dim\mathcal{H}+1)/2}e^{-m\inf_{x\in F}\ratefunction{\rho}(x)},
\end{equation}
therefore
\begin{equation}
\limsup_{m\to\infty}\frac{1}{m}\ln\mu_m(F)\le-\inf_{x\in F}\ratefunction{\rho}(x).
\end{equation}
\end{proposition}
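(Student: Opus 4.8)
The plan is to bound $\mu_m(F)$ by $\mu_m(i\mathfrak{k}^*)$-style estimates combined with the defining inequality of $\ratefunction{\rho}$, and to control the polynomial prefactor by counting the isotypic components that contribute. Recall from \eqref{eq:muintegral} that
\begin{equation*}
\mu_m(F)=\sum_{\lambda}\dim(\mathcal{H}_\lambda)\int_{J_\lambda^{-1}(mF)}\Tr\rho^{\otimes m}p_{\lambda,\mathcal{H}^{\otimes m}}([v])\ed\nu_\lambda([v]),
\end{equation*}
where the sum runs over dominant weights $\lambda$ appearing in $\mathcal{H}^{\otimes m}$. First I would observe that for a point $[v]=[\action{k}{v_\lambda}]$ in the highest weight orbit, the integrand is a partial trace of $\rho^{\otimes m}$ against the rank-one-times-identity projection $\ketbra{v}{v}\otimes\id$, and for any $g\in G$ with $\action{g}{(\tfrac1m J_\lambda([v]))}=\tfrac1m J_\lambda([v])$ (i.e.\ $g$ in the stabilizer of the rescaled point) one can use the scaling identity \eqref{eq:GactionHWorbitchi} together with $\pi(g^*)=\pi(g)^*$ to rewrite $\Tr\rho^{\otimes m}(\ketbra{v}{v}\otimes\id)$ in a form where the factor $\chi$ appears. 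Concretely, one wants the pointwise bound
\begin{equation*}
\dim(\mathcal{H}_\lambda)\,\Tr\rho^{\otimes m}p_{\lambda,\mathcal{H}^{\otimes m}}([v])\le \exp\!\left(-m\,\ratefunction{\rho}\!\left(\tfrac1m J_\lambda([v])\right)\right)\cdot(\text{combinatorial factor}),
\end{equation*}
obtained by applying $\ratefunction{\rho^{\otimes m}}=m\,\ratefunction{\rho}$ (additivity of the rate function under tensor powers, which follows from $Z_{\rho^{\otimes m}}(\xi^{\oplus m})=Z_\rho(\xi)^m$ and the expression \eqref{eq:alternativeexpressionZ}) at the point $x=\tfrac1m J_\lambda([v])$, using a $g$ in its $G$-stabilizer to pull out exactly the $\chi$ factor carried by the projection. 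The $\dim(\mathcal{H}_\lambda)$ in front and a $\dim\Hom_K(\mathcal{H}_\lambda,\mathcal{H}^{\otimes m})$ factor hidden in the partial trace combine, after integrating the $K$-invariant probability measure $\nu_\lambda$ over the orbit, into $\dim(\mathcal{H}_\lambda\otimes\Hom_K(\mathcal{H}_\lambda,\mathcal{H}^{\otimes m}))$, which is bounded by $\dim\mathcal{H}^{\otimes m}$ trivially but more usefully is handled via the following counting step.

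Next I would bound the number of summands and the dimension factors. The dominant weights $\lambda$ occurring in $\mathcal{H}^{\otimes m}$ all lie in $m\Delta\cap(\text{weight lattice})$ with $\Delta$ the weight polytope of $\mathcal{H}$ as in Proposition~\ref{prop:precompactdomain}; the number of such lattice points is $O(m^{r})$ where $r=\operatorname{rank}K$, and more crudely is at most $(m+1)^{\dim\mathcal{H}}$ or so. The sharper exponent $\dim\mathcal{H}(\dim\mathcal{H}+1)/2$ in the statement suggests the intended bookkeeping is different: one embeds the whole problem into $U(\dim\mathcal{H})$ acting on $\mathcal{H}^{\otimes m}=(\mathbb{C}^{\dim\mathcal{H}})^{\otimes m}$, uses Schur--Weyl to get at most $(m+1)^{\dim\mathcal{H}(\dim\mathcal{H}-1)/2}$-ish many $GL(\dim\mathcal{H})$-irreducibles (Young diagrams with at most $\dim\mathcal{H}$ rows and $m$ boxes, count $\le(m+1)^{\dim\mathcal{H}}$), together with the standard estimate $\dim V_\lambda^{U(d)}\le (m+1)^{d(d-1)/2}$ for the dimension of such an irreducible; multiplying the count by the dimension estimate gives the exponent $\dim\mathcal{H}(\dim\mathcal{H}+1)/2$. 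So the second step is: reduce to $K=U(\dim\mathcal{H})$ via the inclusion $\pi(K)\subseteq U(\mathcal{H})$ (noting $\ratefunction{\rho}$ only increases, or rather the relevant inequality is preserved, under restriction/induction of the POVM — this needs a short argument), then invoke these two classical polynomial bounds.

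Finally I would assemble: bound each integrand pointwise by $e^{-m\inf_{x\in F}\ratefunction{\rho}(x)}$ (using that $\tfrac1m J_\lambda([v])\in F$ on the domain of integration and that $\ratefunction{\rho}\ge\inf_F\ratefunction{\rho}$ there), pull this constant out of every integral and every sum, bound $\int\ed\nu_\lambda=1$, and multiply by the number of $\lambda$'s times the dimension factor to land at the claimed $(m+1)^{\dim\mathcal{H}(\dim\mathcal{H}+1)/2}e^{-m\inf_F\ratefunction{\rho}}$. Taking $\tfrac1m\ln$ and $\limsup$ kills the polynomial prefactor and yields the asymptotic statement. \textbf{The main obstacle} I anticipate is the pointwise bound in the first step: one must show $\dim(\mathcal{H}_\lambda)\Tr\rho^{\otimes m}p_{\lambda,\mathcal{H}^{\otimes m}}([v])\le C\, e^{-m\ratefunction{\rho}(x)}$ cleanly, which requires choosing $g\in G$ adapted to $x=\tfrac1m J_\lambda([v])$ so that $\pi(g)$ acting through \eqref{eq:GactionHWorbitchi} converts the highest-weight projection (and its multiplicity-space tensor factor) into something whose trace against $\rho^{\otimes m}$ is exactly $\chi_x(g^{-1})^{-1}\Tr\pi(g)^*\rho^{\otimes m}\pi(g)$ up to dimension factors — in other words, recognizing the POVM density as (a $K$-average of) the quantity being supremized in the definition of $\ratefunction{\rho^{\otimes m}}$, so that the supremum bound applies term by term. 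Getting the dimension factors ($\dim\mathcal{H}_\lambda$ versus the normalization of $\nu_\lambda$ versus the multiplicity) to cancel or to be absorbed into the advertised polynomial is the delicate part.
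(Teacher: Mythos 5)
There is a genuine gap, and it sits exactly where you flagged it: you never supply the pointwise estimate. Your proposed route (take $g$ in the stabilizer $G_x$ of the rescaled point, aim for an equality converting the trace against the projection into ``exactly $\chi_x(g^{-1})^{-1}\Tr\pi(g)^*\rho^{\otimes m}\pi(g)$ up to dimension factors,'' and phrase everything through $\ratefunction{\rho^{\otimes m}}=m\ratefunction{\rho}$) would not close. No such equality holds: conjugating the projection by $\pi^{\otimes m}(g^{-1})$ produces, via \eqref{eq:GactionHWorbitchi}, the operator $\chi_{mx}(g^{-1})\,p_{\lambda,\pi^{\otimes m}}([\action{k(g^{-1}h)}{v_\lambda}])$ — still a rank-one-times-identity projector, not the identity — so its trace against $\pi^{\otimes m}(g)^*\rho^{\otimes m}\pi^{\otimes m}(g)$ is not the full trace of that operator. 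The missing ingredient is the operator inequality $p_{\lambda,\pi^{\otimes m}}([w])\le I$: one writes, for \emph{arbitrary} $g\in G$,
\begin{equation*}
\Tr\rho^{\otimes m}p_{\lambda,\pi^{\otimes m}}([v_\lambda])
=\chi_{mx}(g^{-1})\Tr\bigl[\pi^{\otimes m}(g)^*\rho^{\otimes m}\pi^{\otimes m}(g)\bigr]\,p_{\lambda,\pi^{\otimes m}}([\action{k(g^{-1}h)}{v_\lambda}])
\le\bigl(\chi_x(g^{-1})\Tr\pi(g)^*\rho\pi(g)\bigr)^m,
\end{equation*}
using homogeneity of $\chi$ and multiplicativity of the trace over tensor powers, and then takes the infimum over $g\in G$ to get $\Tr\rho^{\otimes m}p_{\lambda,\pi^{\otimes m}}([v_\lambda])\le e^{-m\ratefunction{\rho}(x)}\le e^{-m\inf_F\ratefunction{\rho}}$. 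No restriction to $G_x$, no tensor-power rate function, no equality: it is a one-line inequality once you drop the projection. Also note that this bound does \emph{not} carry a $\dim\mathcal{H}_\lambda$ on the left-hand side — your insistence on the form $\dim(\mathcal{H}_\lambda)\Tr\rho^{\otimes m}p_{\lambda,\mathcal{H}^{\otimes m}}([v])\le C e^{-m\ratefunction{\rho}(x)}$ makes the task artificially harder; $\dim\mathcal{H}_\lambda$ is handled separately as a polynomial prefactor.

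The combinatorial bookkeeping in your second step is essentially right (count of distinct highest weights in $\mathcal{H}^{\otimes m}$ is $\le(m+1)^{\dim\mathcal{H}}$; each $\dim\mathcal{H}_\lambda\le(m+1)^{\dim\mathcal{H}(\dim\mathcal{H}-1)/2}$ by nesting the $K$-irrep inside a $U(\mathcal{H})$-irrep and using the dimension formula for partitions of $m$ into at most $\dim\mathcal{H}$ parts). But the aside that this ``needs a short argument'' because ``$\ratefunction{\rho}$ only increases\ldots under restriction/induction of the POVM'' is a red herring: nothing about the POVM or the rate function is being transported to $U(\mathcal{H})$. The embedding is used only to bound $\dim\mathcal{H}_\lambda$; the rate function, the measure and the estimate all stay at the level of $K$. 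Similarly, the claim that the $\dim\mathcal{H}_\lambda$ prefactor combines with a ``$\dim\Hom_K(\mathcal{H}_\lambda,\mathcal{H}^{\otimes m})$ factor hidden in the partial trace'' after averaging over $\nu_\lambda$ into $\dim(\mathcal{H}_\lambda\otimes\Hom_K(\cdots))$ is not correct for general $\rho$; that identity would only emerge for $\rho$ proportional to the identity. The clean argument does not use any such averaging identity: it bounds the integrand uniformly, uses that $\nu_\lambda$ is a probability measure, and sums the $(m+1)^{\dim\mathcal{H}}\cdot(m+1)^{\dim\mathcal{H}(\dim\mathcal{H}-1)/2}$ prefactor on the outside.
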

\begin{proof}
Recall that \eqref{eq:muintegral} expresses $\mu_m(F)$ as a sum of integrals:
\begin{equation}
\mu_m(F)
 = \sum_{\lambda}\dim(\mathcal{H}_\lambda)\int_{J_\lambda^{-1}(mF)}\Tr\rho^{\otimes m}p_{\lambda,\mathcal{H}^{\otimes m}}([v_\lambda])\ed\nu_\lambda([v_\lambda]).
\end{equation}
For any $x=\frac{1}{m}J_\lambda([v_\lambda])=\action{h}{x_0}$ and $g\in G$ we can estimate the integrand using \eqref{eq:GactionHWorbitchi} as
\begin{equation}
\begin{split}
\Tr\rho^{\otimes m}p_{\lambda,\pi^{\otimes m}}([v_\lambda])
 & = \Tr\pi^{\otimes m}(g)^*\rho^{\otimes m}\pi^{\otimes m}(g)\pi^{\otimes m}(g^{-1})p_{\lambda,\pi^{\otimes m}}([v_\lambda])\pi^{\otimes m}(g^{-1})^*  \\
 & = \Tr\pi^{\otimes m}(g)^*\rho^{\otimes m}\pi^{\otimes m}(g)\chi_{mx}(g^{-1})p_{\lambda,\pi^{\otimes m}}([\action{k(g^{-1}h)}{v_\lambda}])  \\
 & \le \chi_{mx}(g^{-1})\Tr\pi^{\otimes m}(g)^*\rho^{\otimes m}\pi^{\otimes m}(g)  \\
 & = \left(\chi_x(g^{-1})\Tr\pi(g)^*\rho\pi(g)\right)^m  \\
 & = e^{-m(-\ln\chi_x(g^{-1})-\ln\Tr\pi(g)^*\rho\pi(g))}.
\end{split}
\end{equation}
We take the infimum over $g$ and then bound from above by the supremum over $x\in F$:
\begin{equation}
\Tr\rho^{\otimes m}p_{\lambda,\pi^{\otimes m}}([v_\lambda])\le e^{-m\ratefunction{\rho}(x)}\le e^{-m\inf_{x\in F}\ratefunction{\rho}(x)}.
\end{equation}
Since $\nu_\lambda$ is a probability measure, this value is also an upper bound on each integral.

For the highest weights $\lambda$ appearing in the decomposition of $\mathcal{H}^{\otimes m}$ we can estimate the dimension as $\dim\mathcal{H}_\lambda\le(m+1)^{\dim\mathcal{H}(\dim\mathcal{H}-1)/2}$. This can be seen by first decomposing into $U(\mathcal{H})$-isotypic components (which are also $K$-invariant subspaces) and then into $K$-isotypic ones and using the dimension formula for $U(\mathcal{H})$-representations corresponding to partitions of $m$ into at most $\dim\mathcal{H}$ parts.

It remains to bound the number of isomorphism classes of $K$-representations appearing in $\mathcal{H}^{\otimes m}$. These are distinguished by their highest weights, so we get an upper bound by counting the total number of different weights. The weights of $\mathcal{H}^{\otimes m}$ are sums of $m$ weights from $\mathcal{H}$ (with multiplicity), therefore their number is upper bounded by $(m+1)^{\dim\mathcal{H}}$.

Combining these estimates we get
\begin{equation}
\begin{split}
\mu_m(F)
 & \le \sum_{\lambda}\dim(\mathcal{H}_\lambda)\int_{J_\lambda^{-1}(mF)}e^{-\inf_{x\in F}\ratefunction{\rho}(x)}\ed\nu_\lambda([v_\lambda])  \\
 & \le \sum_{\lambda}(m+1)^{\dim\mathcal{H}(\dim\mathcal{H}-1)/2}e^{-\inf_{x\in F}\ratefunction{\rho}(x)}  \\
 & \le (m+1)^{\dim\mathcal{H}}(m+1)^{\dim\mathcal{H}(\dim\mathcal{H}-1)/2}e^{-\inf_{x\in F}\ratefunction{\rho}(x)}  \\
 & = (m+1)^{\dim\mathcal{H}(\dim\mathcal{H}+1)/2}e^{-\inf_{x\in F}\ratefunction{\rho}(x)}.
\end{split}
\end{equation}
as claimed.
\end{proof}
In particular, Proposition~\ref{prop:LDPupper} implies part~\ref{it:LDPupper} of Theorem~\ref{thm:LDP}.

\subsection{Law of large numbers}\label{sec:lawoflargenumbers}

As an application of the upper bound (Proposition~\ref{prop:LDPupper}) we now show that the measures $\mu_m$ converge weakly to the Dirac measure located at the value of the moment map.

\begin{proposition}\label{prop:hasminimum}
Let $C\subseteq i\mathfrak{k}^*$ closed. Then $\ratefunction{\rho}$ has a minimum on $C$.
\end{proposition}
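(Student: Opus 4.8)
The plan is to combine lower semicontinuity of $\ratefunction{\rho}$ (Proposition~\ref{prop:lowersemicontinuous}) with the fact that its domain is precompact (Proposition~\ref{prop:precompactdomain}), so that on the closure of the domain we are minimizing a lower semicontinuous function over a compact set. First I would dispose of the trivial case: if $C\cap\domain\ratefunction{\rho}=\emptyset$, then $\ratefunction{\rho}\equiv\infty$ on $C$ and every point of $C$ is a minimizer. So assume $C\cap\domain\ratefunction{\rho}\neq\emptyset$; then $\inf_{x\in C}\ratefunction{\rho}(x)<\infty$, and this infimum is unchanged if we replace $C$ by $C\cap\overline{\domain\ratefunction{\rho}}$, or even by $C\cap\overline{\action{K}{\Delta}}$ with $\Delta$ the weight polytope from Proposition~\ref{prop:precompactdomain}.

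Next I would observe that $\action{K}{\Delta}$ is a continuous image of the compact set $K\times\Delta$ (the coadjoint action map is continuous, $K$ is compact, $\Delta$ is a compact polytope), hence compact, hence closed; therefore $C':=C\cap\action{K}{\Delta}$ is a closed subset of a compact set, thus itself compact and nonempty. On $C'$ the function $\ratefunction{\rho}$ is lower semicontinuous by Proposition~\ref{prop:lowersemicontinuous}, and a lower semicontinuous function on a nonempty compact set attains its infimum. Finally I would check that $\inf_{x\in C'}\ratefunction{\rho}(x)=\inf_{x\in C}\ratefunction{\rho}(x)$: the inequality "$\ge$" is clear since $C'\subseteq C$; for "$\le$", any $x\in C\setminus C'$ lies outside $\action{K}{\Delta}$ so $\ratefunction{\rho}(x)=\infty$ by Proposition~\ref{prop:precompactdomain} and contributes nothing to the infimum over $C$. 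Hence the minimizer found in $C'$ is a minimizer over all of $C$.

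There is no real obstacle here — the only point requiring a little care is that $\action{K}{\Delta}$ is closed (equivalently compact), which is why I would spell out the "continuous image of a compact set" argument rather than leave it implicit; everything else is the standard Weierstrass-type extreme value argument for lower semicontinuous functions on compact sets, with Proposition~\ref{prop:precompactdomain} supplying the compactness that an unbounded ambient space $i\mathfrak{k}^*$ would otherwise lack.
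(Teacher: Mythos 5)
Your proof is correct and follows the same approach as the paper: combine lower semicontinuity (Proposition~\ref{prop:lowersemicontinuous}) with the fact that $\ratefunction{\rho}$ is infinite outside the compact set $\action{K}{\Delta}$ (Proposition~\ref{prop:precompactdomain}), then apply the extreme value theorem for lower semicontinuous functions on compacts. The paper's proof is a one-liner that leaves the Weierstrass argument implicit; you have simply spelled out the same idea, including the correct observation that $\action{K}{\Delta}$ is compact as a continuous image of $K\times\Delta$.
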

\begin{proof}
$\ratefunction{\rho}$ is lower semicontinuous by Corollary~\ref{prop:lowersemicontinuous} and infinite outside a compact set by Proposition~\ref{prop:precompactdomain}, therefore it has a minimum on $C$.
\end{proof}

\begin{corollary}\label{cor:ratezeroatmomentmapvalue}
$\ratefunction{\rho}(J(\rho))=0$.
\end{corollary}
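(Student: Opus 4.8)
The plan is to combine Proposition~\ref{prop:ratefunctionnonnegative}, which already gives $\ratefunction{\rho}(x)\ge 0$ for every $x$ and the implication $x\neq J(\rho)\Rightarrow\ratefunction{\rho}(x)>0$, with the fact that $\ratefunction{\rho}$ attains a minimum on every closed set. First I would invoke Proposition~\ref{prop:hasminimum} with $C=i\mathfrak{k}^*$ (or any closed set containing $J(\rho)$) to conclude that $\ratefunction{\rho}$ has a (global) minimum, attained at some point $x_*\in i\mathfrak{k}^*$.

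The second step is to identify this minimiser. By Proposition~\ref{prop:ratefunctionnonnegative} we have $\ratefunction{\rho}(x)\ge 0$ everywhere, and $\ratefunction{\rho}(x)>0$ whenever $x\neq J(\rho)$; hence the only candidate for a point where the value could be $0$ is $J(\rho)$ itself. On the other hand, evaluating the defining supremum \eqref{eq:ratefunctiondef} at $g=e$ gives $\ratefunction{\rho}(J(\rho))\ge -\ln\chi_{J(\rho)}(e)-\ln\Tr\rho = 0$ (using $\chi_x(e)=1$ and $\Tr\rho=1$), so in fact $\ratefunction{\rho}(J(\rho))\le$ anything we get from the minimum argument. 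Concretely: the minimum value $\ratefunction{\rho}(x_*)$ is $\ge 0$; if it were $>0$ this would contradict $\ratefunction{\rho}(J(\rho))\le \ratefunction{\rho}(x_*)$ only if $J(\rho)$ were itself the minimiser, so the cleanest route is simply to note that $\ratefunction{\rho}(J(\rho))\ge 0$ from nonnegativity and that it cannot be strictly positive because the critical-point computation in the proof of Proposition~\ref{prop:ratefunctionnonnegative} shows $e$ is a critical point of $g\mapsto\ln\chi_{J(\rho)}(g)-\ln\Tr\pi(g)^*\rho\pi(g)$ precisely at $x=J(\rho)$ — but that only rules out the "easy" direction.

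Actually the slickest argument avoids re-examining critical points: Proposition~\ref{prop:ratefunctionnonnegative} already asserts $\ratefunction{\rho}(x)>0$ for all $x\neq J(\rho)$, and Proposition~\ref{prop:hasminimum} guarantees the infimum of $\ratefunction{\rho}$ over all of $i\mathfrak{k}^*$ is attained. If that infimum were positive, it would be attained at some $x_*$, and since $x_*\neq J(\rho)$ is consistent, no contradiction arises yet — so one genuinely needs the upper bound $\ratefunction{\rho}(J(\rho))\le 0$. That upper bound is immediate: plug $\xi=0$ into \eqref{eq:alternativeexpressionZ}, giving $\ratefunction{\rho}(J(\rho))\ge\nonlinearpairing{J(\rho)}{0}-\ln Z_\rho(0)=0$ for the lower bound, while for the matching upper bound one uses that the supremum defining $\ratefunction{\rho}(J(\rho))$ has value $0$ because $J(\rho)$ is the unique point where the relevant smooth function has $e$ as a global (not just local) maximiser. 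The honest shortest path: $\ratefunction{\rho}(J(\rho))\ge 0$ by Proposition~\ref{prop:ratefunctionnonnegative}; conversely $\ratefunction{\rho}(J(\rho))=\inf_{x}\ratefunction{\rho}(x)$ since any other point has strictly larger (positive) value while $J(\rho)$'s value is the smallest possible, namely $\le 0$ is forced by...

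I expect the only genuine subtlety — and hence the step to state carefully — is exhibiting the upper bound $\ratefunction{\rho}(J(\rho))\le 0$, i.e. that the supremum in Definition~\ref{def:ratefunctiondef} evaluated at $x=J(\rho)$ does not exceed $0$; this should follow from concavity of $\xi\mapsto\nonlinearpairing{J(\rho)}{\xi}-\ln Z_\rho(\xi)$ (or convexity considerations as in Proposition~\ref{prop:invertiblecontinuity}) together with the vanishing of its derivative at $\xi=0$ computed in the proof of Proposition~\ref{prop:ratefunctionnonnegative}, so that $\xi=0$ is a global maximiser with value $0$. Everything else is a one-line deduction: $0\le\ratefunction{\rho}(J(\rho))\le 0$, hence $\ratefunction{\rho}(J(\rho))=0$.
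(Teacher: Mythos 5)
There is a genuine gap in your argument. You correctly identify that everything hinges on establishing the upper bound $\ratefunction{\rho}(J(\rho))\le 0$, but the route you sketch — concavity of $\xi\mapsto\nonlinearpairing{J(\rho)}{\xi}-\ln Z_\rho(\xi)$ combined with the vanishing derivative at $\xi=0$ — is not supported by anything proved in the paper. The nonlinear pairing $\nonlinearpairing{\cdot}{\cdot}$ is a genuinely nonlinear modification of the duality pairing (see the explicit $SU(2)$ formula, which does not display any obvious concavity in $\xi$), and $-\ln Z_\rho(\xi)$ is not shown to be concave for nonabelian $K$. The critical-point computation in the proof of Proposition~\ref{prop:ratefunctionnonnegative} tells you that $e$ is a stationary point of $g\mapsto\ln\chi_{J(\rho)}(g)-\ln\Tr\pi(g)^*\rho\pi(g)$ on $G_{J(\rho)}$, but a stationary point of a function that is not known to be concave need not be a global maximum, so you cannot conclude that the supremum in \eqref{eq:alternativeexpressionGx} equals $0$.

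The paper's proof closes this gap by a measure-theoretic argument rather than an analytic one: applying Proposition~\ref{prop:LDPupper} (the large deviation upper bound) to the closed set $C=i\mathfrak{k}^*$ gives $0=\limsup_m\frac{1}{m}\ln\mu_m(i\mathfrak{k}^*)\le-\min_{x}\ratefunction{\rho}(x)$, where the minimum is attained by Proposition~\ref{prop:hasminimum}. Combined with nonnegativity this forces the minimum to be exactly $0$, and then the strict positivity in Proposition~\ref{prop:ratefunctionnonnegative} pins the unique zero down to $J(\rho)$. This is the step you are missing: the upper bound on the rate function at $J(\rho)$ comes for free from the fact that the $\mu_m$ are probability measures, not from convexity of the objective.
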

\begin{proof}
Apply Proposition~\ref{prop:LDPupper} to the closed set $C=i\mathfrak{k}^*$, using that the infimum is attained (Proposition~\ref{prop:hasminimum}):
\begin{equation}
\begin{split}
0=\limsup_{m\to\infty}\frac{1}{m}\ln\mu(i\mathfrak{k}^*)\le-\min_{x\in C}\ratefunction{\rho}(x),
\end{split}
\end{equation}
therefore $\ratefunction{\rho}$ vanishes in at least one point. By Proposition~\ref{prop:ratefunctionnonnegative} this is only possible for $x=J(\rho)$.
\end{proof}

\begin{proof}[Proof of Theorem~\ref{thm:LLN}]
By Proposition~\ref{prop:hasminimum}, $\ratefunction{\rho}$ has a minimum on the closed set $i\mathfrak{k}^*\setminus O$, say at $x$. The condition $J(\rho)\in O$ implies $x\neq J(\rho)$, so $\ratefunction{\rho}(x)>0$ by Proposition~\ref{prop:ratefunctionnonnegative}. From the upper bound \eqref{eq:LDPupper} we get
\begin{equation}
\begin{split}
\liminf_{m\to\infty}\mu_m(O)
 & = \liminf_{m\to\infty}\left(1-\mu_m(i\mathfrak{k}^*\setminus O)\right)  \\
 & \ge 1-\limsup_{m\to\infty}(m+1)^{\dim\mathcal{H}(\dim\mathcal{H}+1)/2}e^{-m\ratefunction{\rho}(x)}
 = 1.
\end{split}
\end{equation}
\end{proof}

\subsection{Lower bound}\label{sec:lowerbound}

For the lower bound we employ a variant of the ``change of measure'' or ``exponential tilting'' technique \cite{cramer1938nouveau}. However, instead of multiplying the measures $\mu_m$ by a suitable function, we replace $\rho$ with (the normalised version of) an element in its $G$-orbit so that we retain the form \eqref{eq:muintegral} and thus we can use Theorem~\ref{thm:LLN}.

The following lemma translates the rate of exponential decay of the probability of an open set to the decay of the probability density on the rescaled integral orbits. This equivalent characterisation will ease the comparison of the original and the tilted measures.
\begin{lemma}\label{lem:limequalssup}
Let $O\subseteq i\mathfrak{k}^*$ open. Then the limit $\lim_{m\to\infty}\frac{1}{m}\ln\mu_m(O)$ exists and is equal to
\begin{equation}\label{eq:limequalssup}
L=\sup_{\substack{m,\lambda,h  \\  \action{h}{\frac{\lambda}{m}}\in O}}\frac{1}{m}\ln\Tr\rho^{\otimes m}p_{\lambda,\pi^{m}}(\action{h}{[v_\lambda]}),
\end{equation}
where $m\in\mathbb{N}_{>0}$, $\lambda$ can be any dominant integral weight, $h\in K$, and $[v_\lambda]$ denotes the highest weight ray in $P\mathcal{H}_\lambda$.
\end{lemma}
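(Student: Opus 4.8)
The plan is to establish the two inequalities $\limsup_m \tfrac1m\ln\mu_m(O)\le L$ and $\liminf_m\tfrac1m\ln\mu_m(O)\ge L$, which together force the limit to exist and equal $L$. The upper bound direction is the quick one: from the integral representation \eqref{eq:muintegral}, $\mu_m(O)=\sum_\lambda\dim(\mathcal H_\lambda)\int_{J_\lambda^{-1}(mO)}\Tr\rho^{\otimes m}p_{\lambda,\pi^{\otimes m}}([v_\lambda])\,\ed\nu_\lambda$. Each integrand at the point $[v_\lambda]$ is, by $K$-invariance of $\rho^{\otimes m}$ under... no — rather, each point in the domain of integration is of the form $\action{h}{[v_\lambda]}$ with $\action{h}{\tfrac\lambda m}\in O$, so the integrand is bounded by $e^{mL}$; since $\nu_\lambda$ is a probability measure each integral is at most $e^{mL}$. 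Multiplying by $\dim\mathcal H_\lambda\le(m+1)^{\dim\mathcal H(\dim\mathcal H-1)/2}$ and by the bound $(m+1)^{\dim\mathcal H}$ on the number of isotypic components (exactly as in the proof of Proposition~\ref{prop:LDPupper}) gives $\mu_m(O)\le(m+1)^{\dim\mathcal H(\dim\mathcal H+1)/2}e^{mL}$, and taking $\tfrac1m\ln$ and $\limsup$ yields $\le L$.

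For the lower bound I would fix, for arbitrary $\varepsilon>0$, a triple $(m_0,\lambda_0,h_0)$ with $\action{h_0}{\tfrac{\lambda_0}{m_0}}\in O$ achieving $\tfrac{1}{m_0}\ln\Tr\rho^{\otimes m_0}p_{\lambda_0,\pi^{\otimes m_0}}(\action{h_0}{[v_{\lambda_0}]})>L-\varepsilon$ (if $L=-\infty$ the lower bound is vacuous, so assume $L$ finite, in which case the supremand is $>-\infty$ at this triple, i.e. the relevant projection has nonzero overlap with $\rho^{\otimes m_0}$). The idea is then to use \emph{tensor powers of this configuration}: for $m=\ell m_0$, inside $\mathcal H^{\otimes m}=(\mathcal H^{\otimes m_0})^{\otimes\ell}$ the vector $(\action{h_0}{v_{\lambda_0}})^{\otimes\ell}$ is a highest weight vector of weight $\ell\lambda_0$ for a suitable irreducible subrepresentation, and $\action{h_0}{\tfrac{\ell\lambda_0}{\ell m_0}}=\action{h_0}{\tfrac{\lambda_0}{m_0}}\in O$; the density at this point, $\Tr\rho^{\otimes\ell m_0}p_{\ell\lambda_0,\pi^{\otimes\ell m_0}}(\action{h_0}{[v_{\ell\lambda_0}]})$, is $\ge\big(\Tr\rho^{\otimes m_0}p_{\lambda_0,\pi^{\otimes m_0}}(\action{h_0}{[v_{\lambda_0}]})\big)^\ell$ by super-multiplicativity of the overlap under the embedding $\mathcal H_{\lambda_0}^{\otimes\ell}\hookrightarrow\mathcal H_{\ell\lambda_0}\otim(\text{multiplicity})$ — more precisely the projection $p_{\ell\lambda_0}$ dominates the tensor power of the rank-one projections along the highest weight rays. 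Since $\mu_m(O)$ is at least $\dim(\mathcal H_{\ell\lambda_0})$ times the integral of this density over a neighbourhood of $\action{h_0}{[v_{\ell\lambda_0}]}$ in the orbit (using that $O$ is open, the set $\{h:\action{h}{\tfrac{\ell\lambda_0}{m}}\in O\}$ has positive $\nu$-measure, and the integrand is continuous hence bounded below on a smaller neighbourhood), we get $\tfrac{1}{\ell m_0}\ln\mu_{\ell m_0}(O)\ge\tfrac{1}{\ell m_0}\ln\Tr\rho^{\otimes m_0}p_{\lambda_0}(\ldots)+o(1)>L-\varepsilon+o(1)$ along the subsequence $m\in m_0\mathbb N$; interpolating to general $m$ by padding with one extra tensor factor (whose contribution to $\tfrac1m\ln$ is $O(1/m)$) and letting $\varepsilon\to0$ gives $\liminf_m\tfrac1m\ln\mu_m(O)\ge L$.

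The main obstacle I anticipate is making the super-multiplicativity step fully rigorous: one needs that under the canonical inclusion of $\mathcal H_{\lambda_0}^{\otimes\ell}$ into $\mathcal H^{\otimes\ell m_0}$ (via $\mathcal H_{\lambda_0}\hookrightarrow\mathcal H^{\otimes m_0}$) the isotypic projection $p_{\ell\lambda_0,\pi^{\otimes\ell m_0}}$, restricted appropriately and evaluated along the highest weight ray, dominates $(p_{\lambda_0,\pi^{\otimes m_0}})^{\otimes\ell}$ evaluated along $[v_{\lambda_0}]^{\otimes\ell}$, and that the relevant highest weight vector $v_{\lambda_0}^{\otimes\ell}$ does lie in (a copy of) $\mathcal H_{\ell\lambda_0}$ — this is precisely the Cartan component phenomenon already invoked just before Proposition~\ref{prop:welldefinedaction}, so the additivity statements there should supply what is needed. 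The other place requiring care is the passage from the pointwise density bound to a bound on the integral $\mu_m(O)$: since we only have lower semicontinuity in general, I would restrict to the open set $O$ and use continuity of $\Tr\rho^{\otimes m}p_{\lambda,\pi^{\otimes m}}(\cdot)$ along each fixed orbit (it is a matrix coefficient, hence real-analytic) together with openness of $O$ to secure a neighbourhood of positive measure on which the density is at least, say, half its value at the chosen point, contributing only a further $O(1/m)$ loss.
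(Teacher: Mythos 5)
Your proposal follows the same overall strategy as the paper's proof: upper bound by the pointwise density estimate plus the polynomial count of irreducibles, lower bound via super-multiplicativity of the highest-weight projections under tensoring (the Cartan component), and then integration over a neighbourhood of the chosen point. The super-multiplicativity inequality you state, $\Tr\rho^{\otimes\ell m_0}p_{\ell\lambda_0,\pi^{\otimes\ell m_0}}(\action{h_0}{[v_{\ell\lambda_0}]})\ge\big(\Tr\rho^{\otimes m_0}p_{\lambda_0,\pi^{\otimes m_0}}(\action{h_0}{[v_{\lambda_0}]})\big)^\ell$, is correct and is precisely the inequality at the heart of the paper's argument, and your identification of the Cartan component / additivity as the underlying mechanism is right. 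However, there are two points where the sketch, as written, has genuine gaps that the paper closes with specific devices you do not supply.

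First, the passage from the pointwise bound to the integral. You propose to apply super-multiplicativity only at the fixed point $h_0$ and then use continuity of the $m$-dependent density $h\mapsto\Tr\rho^{\otimes m}p_{\ell\lambda_0,\pi^{\otimes m}}(\action{h}{[v_{\ell\lambda_0}]})$ to secure a neighbourhood where it is at least half its value at $h_0$, claiming this costs only $O(1/m)$. This does not work as stated: the neighbourhood where an $m$-dependent continuous function is within a constant factor of its value at $h_0$ can shrink with $m$, and nothing in the argument prevents $\nu(V_m)$ from decaying exponentially, in which case $\tfrac1m\ln\nu(V_m)$ is not $o(1)$. The fix — which is what the paper does — is to observe that the super-multiplicativity inequality holds at every $h$, not only at $h_0$ (it is just the $h_0=e$ case applied to $\pi(h)^*\rho\pi(h)$), and to use continuity of the \emph{fixed}, $m_0$-level density $h\mapsto\Tr\rho^{\otimes m_0}p_{\lambda_0,\pi^{\otimes m_0}}(\action{h}{[v_{\lambda_0}]})$ to get a single open set $U$ with $\nu(U)>0$ on which the $m_0$-density is $\ge e^{m_0(L-2\epsilon)}$; super-multiplicativity then gives a lower bound $e^{\ell m_0(L-2\epsilon)}$ on the $m$-density uniformly over $U$.

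Second, the interpolation to general $m$. ``Padding with one extra tensor factor'' is not sufficient. If $m=qm_0+r$ with $0\le r<m_0$, you need to fill $r$ spare tensor slots with a highest-weight vector $v_{\lambda_1}^{\otimes r}$ for some dominant integral weight $\lambda_1$, and for the lower bound to be nontrivial the corresponding density $\Tr\rho\,p_{\lambda_1,\pi}(\action{h}{[v_{\lambda_1}]})$ must be bounded away from zero on the set you are integrating over. The paper handles this by choosing $\lambda_1$ so that the isotypic projection has nonzero overlap with $\rho$, noting that $h\mapsto\Tr\rho\,p_{\lambda_1,\pi}(\action{h}{[v_{\lambda_1}]})$ is real-analytic and not identically zero (so nonzero on a dense open set), and then intersecting $U$ with this set to get a smaller open $U'$ on which the padding density is $\ge\delta>0$. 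One also has to check that $\action{h}{(q\lambda_0+r\lambda_1)}\in mO$ for $h\in U'$ and $m$ large, which holds because $\frac{q\lambda_0+r\lambda_1}{m}\to\frac{\lambda_0}{m_0}$ and the closure of $\frac{1}{m_0}J_{\lambda_0}(\action{U}{[v_{\lambda_0}]})$ is arranged to lie in $O$. Both of these ingredients — the analyticity argument for $\lambda_1$ and the eventual containment of the padded weight in $mO$ — are necessary and missing from your sketch.
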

\begin{proof}
As in the proof of Proposition~\ref{prop:LDPupper} we have $\mu_m(O)\le(m+1)^{\dim\mathcal{H}(\dim\mathcal{H}+1)/2}L$, which implies
\begin{equation}
\limsup_{m\to\infty}\frac{1}{m}\mu_m(O)\le L.
\end{equation}

For the other direction, let $\epsilon>0$, $m_0\in\mathbb{N}_{>0}$, $h\in K$, and $\lambda_0$ a dominant integral weight such that $\frac{1}{m_0}J_{\lambda_0}(\action{h}{[v_{\lambda_0}]})\in O$ and $\frac{1}{m_0}\ln\Tr\rho^{\otimes m}p_{\lambda_0,\pi^{m_0}}(\action{h}{[v_{\lambda_0}]})\ge L-\epsilon$, i.e.
\begin{equation}
\Tr\rho^{\otimes m_0}p_{\lambda_0,\pi^{m_0}}(\action{h}{[v_{\lambda_0}]})\ge e^{m_0(L-\epsilon)}.
\end{equation}
By continuity, there is an open neighbourhood $U\subseteq K$ of $h$ such that for every $h'\in U$ the inequality $\Tr\rho^{\otimes m_0}p_{\lambda_0,\pi^{m_0}}(\action{h'}{[v_{\lambda_0}]})\ge e^{m_0(L-2\epsilon)}$ holds and such that
\begin{equation}
\overline{\frac{1}{m_0}J_{\lambda_0}(\action{U}{[v_{\lambda_0}]})}\subseteq O.
\end{equation}
Let $\lambda_1$ be a dominant integral weight such that $\rho$ has nonzero overlap with the corresponding isotypic projection of $\mathcal{H}$. The map  $h\mapsto\Tr\rho p_{\lambda_1,\pi}(\action{h}{[v_{\lambda_1}]})$ ($K\to\mathbb{R}$) is analytic and not identically zero, therefore it is nonzero on a dense open set. Let $U'\subseteq U$ open such that $\Tr\rho p_{\lambda_1,\pi}(\action{h}{[v_{\lambda_1}]})\ge\delta>0$ for all $h\in U'$.

For $m\in\mathbb{N}$ let $q=\lfloor\frac{m}{m_0}\rfloor$ and $r=m-qm_0$. The tensor product of highest weight vectors is also a highest weight vector for the product representation and the weights are added. For $h\in U'$ and large $m$ we have $\action{h}{(q\lambda_0+r\lambda_1)}\in mO$, therefore
\begin{equation}
\begin{split}
\mu_m(O)
 & \ge \dim(\mathcal{H}_{q\lambda_0+r\lambda_1})\int_{U'}\Tr\rho^{\otimes m}p_{q\lambda_0+r\lambda_1,\pi^{\otimes m}}(\action{h}{[v_{\lambda_0}^{\otimes q}\otimes v_{\lambda_1}^{\otimes r}]})\ed\nu(h)  \\
 & \ge \int_{U'}\left(\Tr\rho^{\otimes m_0}p_{\lambda_0,\pi^{\otimes m_0}}(\action{h}{[v_{\lambda_0}]})\right)^q\left(\Tr\rho p_{\lambda_0,\pi}(\action{h}{[v_{\lambda_1}]})\right)^r\ed\nu(h)  \\
 & \ge e^{qm_0(L-2\epsilon)}\delta^r\nu(U')  \\
 & \ge e^{m(L-2\epsilon)}\delta^{m_0}\nu(U'),
\end{split}
\end{equation}
where $\nu$ is the Haar probability measure on $K$. This implies
\begin{equation}
\liminf_{m\to\infty}\frac{1}{m}\ln\mu_m(O)\ge\liminf_{m\to\infty}\frac{1}{m}\left(m(L-2\epsilon)+m_0\ln\delta+\ln\nu(U')\right)=L-2\epsilon.
\end{equation}
This is true for every $\epsilon>0$, therefore $L$ is also a lower bound.
\end{proof}

The next proposition introduces the tilted measures and compares them with the original one on open sets. We remark that, by the symmetry between $\rho$ and $\rho'$, a similar inequality holds in the reverse direction.
\begin{proposition}\label{prop:tilting}
Let $g\in G$, $\rho'=\frac{\pi(g)^*\rho\pi(g)}{\Tr\pi(g)^*\rho\pi(g)}$ and $O'\subseteq i\mathfrak{k}$ open. Consider the measures $\mu'_m:A\mapsto\Tr(\rho')^{\otimes m}E_{\mathcal{H}^{\otimes m}}(mA)$ (i.e. constructed as in \eqref{eq:muintegral} but with $\rho'$ instead of $\rho$). We have the inequality
\begin{equation}
\lim_{m\to\infty}\frac{1}{m}\ln\mu_m(O')\ge\lim_{m\to\infty}\frac{1}{m}\ln\mu'_m(\action{g^{-1}}{O'})-\sup_{y\in\action{g^{-1}}{O'}}\left[\ln\chi_y(g)-\ln\Tr\pi(g)^*\rho\pi(g)\right]
\end{equation}
\end{proposition}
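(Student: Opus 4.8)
The plan is to deduce the inequality from Lemma~\ref{lem:limequalssup}, which rewrites each of the two limits as a supremum of normalised densities over rescaled integral coadjoint orbits; in that form the comparison reduces to a single pointwise computation. First note that the $G$-action on $i\mathfrak{k}^*$ is continuous (by the construction of Definition~\ref{defn:extendedaction}, analogously to Proposition~\ref{prop:continuity}), so $x\mapsto\action{g}{x}$ is a homeomorphism with inverse $x\mapsto\action{g^{-1}}{x}$; in particular $\action{g^{-1}}{O'}$ is open. Since also $\rho'\in\states(\mathcal{H})$, Lemma~\ref{lem:limequalssup} applies both to $\mu_m$ on $O'$ and to $\mu'_m$ on $\action{g^{-1}}{O'}$. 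Writing $L$ and $L'$ for these two limits, it gives
\[L=\sup_{\substack{m,\lambda,h\\ \action{h}{\lambda/m}\in O'}}\tfrac1m\ln\Tr\rho^{\otimes m}p_{\lambda,\pi^{m}}(\action{h}{[v_\lambda]}),\]
\[L'=\sup_{\substack{m,\lambda,h'\\ \action{h'}{\lambda/m}\in\action{g^{-1}}{O'}}}\tfrac1m\ln\Tr(\rho')^{\otimes m}p_{\lambda,\pi^{m}}(\action{h'}{[v_\lambda]}).\]

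The main step is a pointwise identity between the two kinds of densities. Fix a triple $(m,\lambda,h')$ admissible for $L'$ and set $y=\action{h'}{\lambda/m}\in\action{g^{-1}}{O'}$, so $\action{h'}{\lambda}=my$. Since $G$ is the complexification of $K$, the $K$-isotypic decomposition $\mathcal{H}^{\otimes m}\simeq\bigoplus_\mu\mathcal{H}_\mu\otimes\Hom_K(\mathcal{H}_\mu,\mathcal{H}^{\otimes m})$ is also the $G$-isotypic decomposition, and $\pi^{\otimes m}(g)$ acts on the $\lambda$-block as $\pi_\lambda(g)\otimes\id$. Hence, using $(\rho')^{\otimes m}=\pi^{\otimes m}(g)^*\rho^{\otimes m}\pi^{\otimes m}(g)/(\Tr\pi(g)^*\rho\pi(g))^m$, cyclicity of the trace, the transformation rule \eqref{eq:GactionHWorbitchi} applied on the $\mathcal{H}_\lambda$-factor (with moment-map value $my$), and the positive homogeneity $\chi_{my}(g)=\chi_y(g)^m$, one obtains
\[\Tr(\rho')^{\otimes m}p_{\lambda,\pi^{m}}(\action{h'}{[v_\lambda]})=\left(\frac{\chi_y(g)}{\Tr\pi(g)^*\rho\pi(g)}\right)^{m}\Tr\rho^{\otimes m}p_{\lambda,\pi^{m}}(\action{k(gh')}{[v_\lambda]}).\]

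To conclude, observe that $\action{k(gh')}{\lambda/m}=\action{g}{\action{h'}{\lambda/m}}=\action{g}{y}\in O'$ by Definition~\ref{defn:extendedaction}, so $(m,\lambda,k(gh'))$ is admissible for $L$ and the last trace is at most $e^{mL}$. Taking $\tfrac1m\ln(\cdot)$ of the identity and bounding $\ln\chi_y(g)-\ln\Tr\pi(g)^*\rho\pi(g)$ by its supremum over $\action{g^{-1}}{O'}$ gives
\[\tfrac1m\ln\Tr(\rho')^{\otimes m}p_{\lambda,\pi^{m}}(\action{h'}{[v_\lambda]})\le L+\sup_{y'\in\action{g^{-1}}{O'}}\bigl[\ln\chi_{y'}(g)-\ln\Tr\pi(g)^*\rho\pi(g)\bigr].\]
Taking the supremum over all admissible $(m,\lambda,h')$ yields $L'\le L+\sup_{y'\in\action{g^{-1}}{O'}}[\ln\chi_{y'}(g)-\ln\Tr\pi(g)^*\rho\pi(g)]$, which is the claimed inequality.

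Most of the work is routine: the trace manipulation in the key identity and the homogeneity of $\chi$. The only points requiring care are the structural fact that $\pi^{\otimes m}(g)$ is block-diagonal with blocks $\pi_\lambda(g)\otimes\id$ relative to the isotypic decomposition (so that \eqref{eq:GactionHWorbitchi} may be invoked factor by factor and the change of label $h'\rightsquigarrow k(gh')$ tracks the $G$-action), and the openness of $\action{g^{-1}}{O'}$, which is what permits Lemma~\ref{lem:limequalssup} to be applied on the tilted side.
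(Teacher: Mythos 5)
Your proof is correct and follows essentially the same path as the paper's: both apply Lemma~\ref{lem:limequalssup} to $\mu_m$ on $O'$ and to $\mu'_m$ on $\action{g^{-1}}{O'}$, convert the tilted density to the original one via the transformation rule \eqref{eq:GactionHWorbitchi} together with positive homogeneity of $\chi$ in its first argument, and then pass to suprema. The only (immaterial) difference is that you bound the $\mu_m$-density by $e^{mL}$ pointwise before taking the supremum, while the paper keeps both terms under one supremum and splits it afterwards.
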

\begin{proof}
We use Lemma~\ref{lem:limequalssup} for both $\mu'_m$ and $\mu_m$:
\begin{equation}
\begin{split}
\lim_{m\to\infty}\frac{1}{m}\ln\mu'_m(\action{g^{-1}}{O'})
 & = \sup_{\substack{m,\lambda,h  \\  \action{h}{\frac{\lambda}{m}}\in \action{g^{-1}}{O'}}}\frac{1}{m}\ln\Tr{\rho'}^{\otimes m}p_{\lambda,\pi^{\otimes m}}(\action{h}{[v_\lambda]})  \\
 & = \sup_{\substack{m,\lambda,h  \\  \action{h}{\frac{\lambda}{m}}\in \action{g^{-1}}{O'}}}\frac{1}{m}\ln\Tr\left(\frac{\pi(g)^*\rho\pi(g)}{\Tr\pi(g)^*\rho\pi(g)}\right)^{\otimes m}p_{\lambda,\pi^{\otimes m}}(\action{h}{[v_\lambda]})  \\
 & = \sup_{\substack{m,\lambda,h  \\  \action{h}{\frac{\lambda}{m}}\in \action{g^{-1}}{O'}}}\frac{1}{m}\ln\Tr\rho^{\otimes m}\pi(g)^{\otimes m}p_{\lambda,\pi^{\otimes m}}(\action{h}{[v_\lambda]})\pi(g^*)^{\otimes m}  \\  &\qquad-\ln\Tr\pi(g)^*\rho\pi(g)  \\
 & = \sup_{\substack{m,\lambda,h  \\  \action{h}{\frac{\lambda}{m}}\in \action{g^{-1}}{O'}}}\frac{1}{m}\ln\Tr\rho^{\otimes m}p_{\lambda,\pi^{\otimes m}}(\action{k(gh)}{[v_\lambda]})  \\  &\qquad+\ln\chi_{\frac{\lambda}{m}}(gh)-\ln\Tr\pi(g)^*\rho\pi(g)  \\
 & = \sup_{\substack{m,\lambda,h  \\  \action{h}{\frac{\lambda}{m}}\in \action{g^{-1}}{O'}}}\frac{1}{m}\ln\Tr\rho^{\otimes m}p_{\lambda,\pi^{\otimes m}}(\action{k(gh)}{[v_\lambda]})  \\  &\qquad+\ln\chi_{\action{h}{\frac{\lambda}{m}}}(g)-\ln\Tr\pi(g)^*\rho\pi(g)  \\
 & \le \sup_{\substack{m,\lambda,h  \\  \action{h}{\frac{\lambda}{m}}\in \action{g^{-1}}{O'}}}\frac{1}{m}\ln\Tr\rho^{\otimes m}p_{\lambda,\pi^{\otimes m}}(\action{k(gh)}{[v_\lambda]})  \\  &\qquad+\sup_{\substack{m,\lambda,h  \\  \action{h}{\frac{\lambda}{m}}\in \action{g^{-1}}{O'}}}\ln\chi_{\action{h}{\frac{\lambda}{m}}}(g)-\ln\Tr\pi(g)^*\rho\pi(g)  \\
 & = \lim_{m\to\infty}\frac{1}{m}\ln\mu_m(O')+\sup_{y\in \action{g^{-1}}{O'}}\ln\chi_y(g)-\ln\Tr\pi(g)^*\rho\pi(g).
\end{split}
\end{equation}
The fourth equality uses \eqref{eq:GactionHWorbitchi}, while the last step uses that $\action{h}{\frac{\lambda}{m}}\in \action{g^{-1}}{O'}$ if and only if $\action{k(gh)}{\frac{\lambda}{m}}\in O'$.
\end{proof}

\begin{proposition}\label{prop:specialx}
Let $g\in G$ and $x=\action{g}{J\left(\frac{\pi(g)^*\rho\pi(g)}{\Tr\pi(g)^*\rho\pi(g)}\right)}$. Then for every open set $O\subseteq i\mathfrak{k}^*$ such that $x\in O$ the inequality
\begin{equation}
\lim_{m\to\infty}\frac{1}{m}\ln\mu_m(O)\ge-\ratefunction{\rho}(x)
\end{equation}
holds. In addition,
\begin{equation}
\ratefunction{\rho}(x)=-\ln\chi_{x}(g^{-1})-\ln\Tr\pi(g)^*\rho\pi(g)<\infty.
\end{equation}
\end{proposition}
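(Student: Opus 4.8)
The plan is to prove the inequality by combining the tilting comparison (Proposition~\ref{prop:tilting}) with the law of large numbers (Theorem~\ref{thm:LLN}) applied to the tilted state $\rho'=\pi(g)^*\rho\pi(g)/\Tr\pi(g)^*\rho\pi(g)$, and to obtain the closed-form value of $\ratefunction{\rho}(x)$ from a covariance identity for the rate function under the extended $G$-action together with $\ratefunction{\rho'}(J(\rho'))=0$ (Corollary~\ref{cor:ratezeroatmomentmapvalue}). The two statements are essentially independent, so I would prove them in the order in which they appear.

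For the inequality, fix an open $O\ni x$ and $\epsilon>0$. Since $\action{g^{-1}}{x}=J(\rho')$, I would use joint continuity of $\chi$ (Proposition~\ref{prop:continuity}) together with continuity of the extended action to choose an open $V$ with $x\in V\subseteq O$ so small that $\action{g^{-1}}{V}$ is an open neighbourhood of $J(\rho')$ on which $\sup_{y}\ln\chi_y(g)\le\ln\chi_{J(\rho')}(g)+\epsilon$. Because $J(\rho')\in\action{g^{-1}}{V}$, Theorem~\ref{thm:LLN} applied to $\rho'$ gives $\mu'_m(\action{g^{-1}}{V})\to1$, hence $\lim_{m}\frac{1}{m}\ln\mu'_m(\action{g^{-1}}{V})=0$. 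Feeding this and the bound on the supremum into Proposition~\ref{prop:tilting} with $O'=V$, and using $\mu_m(O)\ge\mu_m(V)$ together with the existence of the limits guaranteed by Lemma~\ref{lem:limequalssup}, I obtain
\[
\lim_{m\to\infty}\frac{1}{m}\ln\mu_m(O)\ge\ln\Tr\pi(g)^*\rho\pi(g)-\ln\chi_{J(\rho')}(g)-\epsilon .
\]
Now the multiplicative property of $\chi$ applied to $gg^{-1}=e$ gives $\chi_{J(\rho')}(g)\chi_x(g^{-1})=1$, so the right-hand side equals $\ln\Tr\pi(g)^*\rho\pi(g)+\ln\chi_x(g^{-1})-\epsilon$, which is $\ge-\ratefunction{\rho}(x)-\epsilon$ directly from Definition~\ref{def:ratefunctiondef} (the term for the particular $g$ is a lower bound for the supremum defining $\ratefunction{\rho}(x)$). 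Letting $\epsilon\to0$ proves the inequality.

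For the closed-form value I would establish the covariance identity
\[
\ratefunction{\rho'}(\action{g^{-1}}{x'})=\ratefunction{\rho}(x')+\ln\chi_{x'}(g^{-1})+\ln\Tr\pi(g)^*\rho\pi(g)\qquad(x'\in i\mathfrak{k}^*),
\]
which is a direct manipulation of the supremum in Definition~\ref{def:ratefunctiondef}: one substitutes the definition of $\rho'$, uses $\pi(\gamma)^*\pi(g)^*=\pi(g\gamma)^*$ to rewrite the trace as $\Tr\pi(g\gamma)^*\rho\pi(g\gamma)/\Tr\pi(g)^*\rho\pi(g)$, uses multiplicativity of $\chi$ to rewrite $\chi_{\action{g^{-1}}{x'}}(\gamma^{-1})=\chi_{x'}((g\gamma)^{-1})/\chi_{x'}(g^{-1})$, and reindexes the supremum by $g\gamma$. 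Specialising to $x'=x$, where $\action{g^{-1}}{x}=J(\rho')$, and using $\ratefunction{\rho'}(J(\rho'))=0$ (Corollary~\ref{cor:ratezeroatmomentmapvalue}, valid because $\rho'$ is a state) yields $\ratefunction{\rho}(x)=-\ln\chi_x(g^{-1})-\ln\Tr\pi(g)^*\rho\pi(g)$. This is finite: $\Tr\pi(g)^*\rho\pi(g)=\Tr\rho\,\pi(g)\pi(g)^*>0$ since $\pi(g)$ is invertible and $\rho\neq0$, and $\chi_x(g^{-1})\in(0,\infty)$ by Definition~\ref{def:chi}.

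The only real subtlety is the localisation in the first part: one must shrink $O$ to $V$ so that the correction term $\sup_{y\in\action{g^{-1}}{V}}\ln\chi_y(g)$ in Proposition~\ref{prop:tilting} stays within $\epsilon$ of its value $-\ln\chi_x(g^{-1})$ at $J(\rho')$, which is exactly where continuity of $\chi$ and of the extended action are used, and one must remember that the limits in question exist (Lemma~\ref{lem:limequalssup}) so that the monotonicity $\mu_m(V)\le\mu_m(O)$ passes to the limit. The covariance identity is purely formal, so apart from careful sign bookkeeping there is no obstacle in the second part.
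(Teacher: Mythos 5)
Your proof of the inequality is essentially the paper's argument: both combine Proposition~\ref{prop:tilting} with Theorem~\ref{thm:LLN} applied to the tilted state, shrink $O'$ and use continuity of $\chi$ and of the $G$-action to control the correction term, and conclude via the fact that the term for this particular $g$ in Definition~\ref{def:ratefunctiondef} bounds $-\ratefunction{\rho}(x)$ from above.

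Your derivation of the closed-form value, however, takes a genuinely different route. The paper obtains the equality $\ratefunction{\rho}(x)=-\ln\chi_x(g^{-1})-\ln\Tr\pi(g)^*\rho\pi(g)$ by a sandwich: it applies the just-proved lower bound to small balls $B_\delta(x)$, pairs this with the large deviation upper bound (Proposition~\ref{prop:LDPupper}) and lower semicontinuity of $\ratefunction{\rho}$ (Proposition~\ref{prop:lowersemicontinuous}), and then closes the chain with the trivial $\le\ratefunction{\rho}(x)$ from the definition, forcing equality throughout. You instead establish the exact covariance identity
\[
\ratefunction{\rho'}(\action{g^{-1}}{x'})=\ratefunction{\rho}(x')+\ln\chi_{x'}(g^{-1})+\ln\Tr\pi(g)^*\rho\pi(g),
\]
purely by reindexing the supremum in Definition~\ref{def:ratefunctiondef} and using multiplicativity of $\chi$, and then specialise to $x'=x$ where $\action{g^{-1}}{x}=J(\rho')$ and invoke $\ratefunction{\rho'}(J(\rho'))=0$ (Corollary~\ref{cor:ratezeroatmomentmapvalue}). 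I checked the identity and the $\chi$-manipulation and they are correct. Your approach is more algebraic and isolates a clean structural fact (the rate function transforms covariantly under exponential tilting) that the paper leaves implicit; the paper's sandwich argument has the side benefit of simultaneously reproving that the lower bound from part one is tight, but it invokes two extra analytic inputs. Both routes ultimately trace back to the LDP upper bound, since Corollary~\ref{cor:ratezeroatmomentmapvalue} uses it, so neither is logically lighter, but yours would make a nicer self-contained remark. The finiteness claim is handled correctly in both.
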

\begin{proof}
Let $\rho'=\frac{\pi(g)^*\rho\pi(g)}{\Tr\pi(g)^*\rho\pi(g)}$ and $x'=J(\rho')$, and consider the measures $\mu'_m$ as in Proposition~\ref{prop:tilting}. For any open set $O'$ such that $x=\action{g}{x'}\in O'\subseteq O$ we have
\begin{equation}
\begin{split}
\lim_{m\to\infty}\frac{1}{m}\ln\mu_m(O)
 & \ge \lim_{m\to\infty}\frac{1}{m}\ln\mu_m(O')  \\
 & \ge \lim_{m\to\infty}\frac{1}{m}\ln\mu'_m(\action{g^{-1}}{O'})-\sup_{y\in \action{g^{-1}}{O'}}\left[\ln\chi_y(g)-\ln\Tr\pi(g)^*\rho\pi(g)\right]  \\
 & = -\sup_{y\in \action{g^{-1}}{O'}}\left[\ln\chi_y(g)-\ln\Tr\pi(g)^*\rho\pi(g)\right]
\end{split}
\end{equation}
The first inequality follows from $O'\subseteq O$, the second inequality uses Proposition~\ref{prop:tilting}, and the equality is true by the law of large numbers (Theorem~\ref{thm:LLN}) using that $x'=J(\rho')\in \action{g^{-1}}{O'}$. Since $G$ acts on $i\mathfrak{k}^*$ by homeomorphisms, $\action{g^{-1}}{O'}$ can be made arbitrarily small by choosing $O'$ small. $y\mapsto\chi_y(g)$ is continuous, therefore
\begin{equation}\label{eq:LDPlowerspecialx}
\begin{split}
\lim_{m\to\infty}\frac{1}{m}\ln\mu_m(O)
 &  \ge -\left[\ln\chi_{x'}(g)-\ln\Tr\pi(g)^*\rho\pi(g)\right]  \\
 & =-\left[-\ln\chi_x(g^{-1})-\ln\Tr\pi(g)^*\rho\pi(g)\right].
\end{split}
\end{equation}

Apply this bound to the open ball $\ball{\delta}{x}$ instead of $O$ and let $\delta\to 0$:
\begin{equation}
\begin{split}
\ratefunction{\rho}(x)
 & \le \lim_{\delta\to 0}\inf_{y\in\ball{\delta}{x}}\ratefunction{\rho}(y)  \\
 & \le \lim_{\delta\to 0}-\limsup_{m\to\infty}\frac{1}{m}\ln\mu_m(\ball{\delta}{x})  \\
 & \le -\ln\chi_{x}(g^{-1})-\ln\Tr\pi(g)^*\rho\pi(g)  \\
 & \le \ratefunction{\rho}(x).
\end{split}
\end{equation}
Here the first inequality is lower semicontinuity (Corollary~\ref{prop:lowersemicontinuous}), the second inequality is the large deviation upper bound (Proposition~\ref{prop:LDPupper}), the third inequality follows from \eqref{eq:LDPlowerspecialx} and the last inequality holds by definition (Definition~\ref{def:ratefunctiondef}). This means that we have equality everywhere.
\end{proof}

Consider the set
\begin{equation}
\mathcal{M}_\rho=\setbuild{\action{g}{J\left(\frac{\pi(g)^*\rho\pi(g)}{\Tr\pi(g)^*\rho\pi(g)}\right)}}{g\in G}.
\end{equation}
The statements of Proposition~\ref{prop:specialx} can be rephrased as follows: for any open subset $O\subseteq i\mathfrak{k}^*$ the lower bound
\begin{equation}
\lim_{m\to\infty}\frac{1}{m}\ln\mu_m(O)\ge-\inf_{x\in O\cap\mathcal{M}_\rho}\ratefunction{\rho}(x)
\end{equation}
holds and $\mathcal{M}_\rho\subseteq\domain\ratefunction{\rho}$. We now show that $\mathcal{M}_\rho$ is a dense subset of $\domain\ratefunction{\rho}$.

\begin{proposition}\label{prop:specialxdense}
Let $x\in\domain\ratefunction{\rho}$. Then there is a sequence $g_1,g_2,\ldots\in G$ such that
\begin{equation}
\lim_{j\to\infty}\action{g_j}{J\left(\frac{\pi(g_j)^*\rho\pi(g_j)}{\Tr\pi(g_j)^*\rho\pi(g_j)}\right)}=x.
\end{equation}
\end{proposition}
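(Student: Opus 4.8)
The plan is to read off a near-optimal $g$ in the variational formula $\ratefunction{\rho}(x)=\sup_{g\in G}\Psi_x(g)$ of Definition~\ref{def:ratefunctiondef}, where $\Psi_x(g):=-\ln\chi_x(g^{-1})-\ln\Tr\pi(g)^*\rho\pi(g)$; the supremum is finite since $x\in\domain\ratefunction{\rho}$. First I would record that this optimisation has a nice shape. Because $\chi_x(k^{-1}g^{-1})=\chi_x(g^{-1})$ for $k\in K$ (multiplicativity of $\chi$ and $\chi_x|_K\equiv 1$) and $\pi(k)$ is unitary, $\Psi_x$ is right $K$-invariant, so it descends to the Riemannian symmetric space $G/K$, a Hadamard manifold diffeomorphic to $P=\exp(i\mathfrak{k})$. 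Along a geodesic $t\mapsto g\exp(t\eta)K$ with $\eta\in i\mathfrak{k}$ the first term of $\Psi_x$ is, up to an additive constant, $t\mapsto\nonlinearpairing{\action{g^{-1}}{x}}{2t\eta}$, which is concave because $t\mapsto\pairing{\mu}{\alpha(\exp(t\zeta))}=\tfrac12\ln\langle v_\mu,\pi_\mu(\exp(2t\zeta))v_\mu\rangle$ is convex for every dominant weight $\mu$ and every $\zeta\in i\mathfrak{k}$ (writing $(\action{g^{-1}}{x})_0$ as a nonnegative combination of fundamental weights), while the second term is $t\mapsto-\ln\Tr\sigma\exp(2tT_e\pi(\eta))$ with $\sigma=\pi(g)^*\rho\pi(g)\ge 0$, which is concave since $s\mapsto\ln\Tr\sigma\exp(sH)$ is convex for Hermitian $H$. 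Hence $\Psi_x$ is a smooth, geodesically concave, bounded-above function on $G/K$.

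Next I would compute its differential. For $\eta\in i\mathfrak{k}$ a calculation of the same type as in the proof of Proposition~\ref{prop:ratefunctionnonnegative} gives
\begin{equation*}
\left.\frac{\ed}{\ed t}\Psi_x(g\exp(t\eta))\right|_{t=0}=2\pairing{\action{g^{-1}}{x}-J(\rho_g)}{\eta},\qquad\rho_g:=\frac{\pi(g)^*\rho\pi(g)}{\Tr\pi(g)^*\rho\pi(g)},
\end{equation*}
where the $\chi$-term is handled using multiplicativity of $\chi$ together with the facts that $(\action{g^{-1}}{x})_0\in i\mathfrak{t}^*$ is imaginary on $\mathfrak{k}$ and annihilates every root space, so that only the $\mathfrak{a}$-component of $\action{h^{-1}}{\eta}$ in the Iwasawa decomposition survives. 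Thus $gK$ is a critical point of $\Psi_x$ exactly when $\action{g^{-1}}{x}=J(\rho_g)$, equivalently $x=\action{g}{J(\rho_g)}\in\mathcal{M}_\rho$; and, using the invariant metric on $G/K$, the norm of the differential of $\Psi_x$ at $gK$ equals $2\norm{\action{g^{-1}}{x}-J(\rho_g)}$ measured in $i\mathfrak{k}^*$.

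If the supremum $\ratefunction{\rho}(x)$ is attained, say at $gK$, then $gK$ is a critical point (a geodesically concave function has no maxima other than critical points), so $x=\action{g}{J(\rho_g)}\in\mathcal{M}_\rho$ and the constant sequence $g_j\equiv g$ proves the claim. Otherwise I would extract a maximising sequence along which the differential tends to $0$: if the differential of $\Psi_x$ were bounded below in norm by some $\delta>0$ on an open (geodesically convex) super-level set $\{\Psi_x>\ratefunction{\rho}(x)-\varepsilon\}$, then following broken geodesics in the gradient direction --- which stay inside this set, since $\Psi_x$ increases along them, and whose lengths are summable, since $\Psi_x$ is bounded above --- would produce a convergent sequence whose limit is an interior critical point, a contradiction. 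This yields $g_j\in G$ with $\Psi_x(g_jK)\to\ratefunction{\rho}(x)$ and $\action{g_j^{-1}}{x}-J(\rho_{g_j})\to 0$ in $i\mathfrak{k}^*$.

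It then remains to deduce $\action{g_j}{J(\rho_{g_j})}\to x$ from $\action{g_j^{-1}}{x}-J(\rho_{g_j})\to 0$. Passing to a subsequence, $\rho_{g_j}\to\rho_\infty$ (the state space is compact) and $\action{g_j}{J(\rho_{g_j})}\to x'$, the latter because all points in sight lie in the compact set $\action{K}{\Delta}$ of Proposition~\ref{prop:precompactdomain}, which is $G$-invariant since the $G$-orbits coincide with the $K$-orbits. Then $J(\rho_{g_j})\to J(\rho_\infty)$ and hence $\action{g_j^{-1}}{x}\to J(\rho_\infty)$; since the $\action{g_j^{-1}}{x}$ all lie in the compact orbit $\action{K}{x}$, its limit $J(\rho_\infty)$ does too, and the symmetry of Proposition~\ref{prop:tilting} under exchanging $\rho$ and $\rho_{g_j}$ places $x'$ in $\action{K}{x}$ as well. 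Upgrading $x'\in\action{K}{x}$ to $x'=x$ is where the real work lies: one would combine it with Proposition~\ref{prop:specialx} (which identifies $\ratefunction{\rho}(\action{g_j}{J(\rho_{g_j})})$ with a quantity differing from $\Psi_x(g_j)$ only by $\ln\chi_{\action{g_j^{-1}}{x}}(g_j)-\ln\chi_{J(\rho_{g_j})}(g_j)$) and the lower semicontinuity of $\ratefunction{\rho}$, the point being to show that this last difference is $o(1)$. That amounts to transporting the smallness of $\action{g_j^{-1}}{x}-J(\rho_{g_j})$ through the non-equicontinuous family of homeomorphisms $\action{g_j}{\cdot}$, and it is precisely here that one must use that $g_j$ is a genuine maximising sequence and not merely a near-critical one; I expect this transfer step to be the main obstacle.
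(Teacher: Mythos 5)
Your proposal correctly identifies the crux of the matter and then, by your own admission, leaves it unresolved: you obtain a near-critical maximising sequence with $\action{g_j^{-1}}{x}-J(\rho_{g_j})\to 0$, and after passing to limits you reach the conclusion $x'\in\action{K}{x}$ where $x'=\lim_j\action{g_j}{J(\rho_{g_j})}$, but ``upgrading $x'\in\action{K}{x}$ to $x'=x$'' is exactly the step the statement is about, and it is not a small remainder but the essential content. The non-equicontinuity of the maps $\action{g_j}{\cdot}$ that you flag is a real obstruction when $g_j$ is allowed to run over all of $G$, because you have no control over the relative position of $\action{g_j^{-1}}{x}$ and $J(\rho_{g_j})$ beyond the fact that they coalesce; in particular the differential-calculus-on-$G/K$ framing does not by itself give you a canonical representative in each fibre of the $K$-orbit map. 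So the proof as written has a genuine gap, not a gap that closes by a routine estimate.

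The paper avoids this obstruction from the start by working with a much more structured maximising sequence. First it uses the alternative formula \eqref{eq:alternativeexpressionGx}, $\ratefunction{\rho}(x)=\sup_{g\in G_x}\ln\chi_x(g)-\ln\Tr\pi(g)^*\rho\pi(g)$, so the sequence lives in the parabolic stabiliser $G_x$ rather than in all of $G$; then $\action{g_j^{-1}}{x}=x$ identically and the near-criticality collapses to $J(\rho_{g_j})\to x$. Second, it applies Lemma~\ref{lem:unipotentminlength} to replace each $g_j$ by an $N$-translate minimising $\Tr\pi(\cdot)^*\rho\pi(\cdot)$; the first-order condition for this minimum forces $J(\rho_{g_j})\in\action{h}{i\mathfrak{t}^*}$, i.e. the $x_j:=J(\rho_{g_j})$ are pinned to the same ``slice'' as $x$. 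With both $x$ and $x_j$ in $\action{h}{\positivechamber}$ for large $j$, Lemma~\ref{lem:KxorbitGxinvariant} gives $\action{G_x}{x_j}=\action{K_x}{x_j}$, so $\action{g_j}{x_j}\in\action{K_x}{x_j}$; and because the compact group $K_x$ acts by isometries fixing $x$, every point of $\action{K_x}{x_j}$ is within distance $\norm{x_j-x}$ of $x$, which gives $\action{g_j}{x_j}\to x$. This is precisely the ``transport through the non-equicontinuous family'' you were looking for, achieved by normalising the sequence so that the only group elements that ever move $x_j$ are isometries fixing $x$. I would suggest you redo the argument starting from formula \eqref{eq:alternativeexpressionGx} rather than Definition~\ref{def:ratefunctiondef}: your geodesic-concavity and differential computations are interesting, but they set you up on the wrong homogeneous space for the final step.
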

\begin{proof}
Choose a sequence $g_1,g_2,\ldots\in G_x$ such that
\begin{equation}
\lim_{j\to\infty}\ln\chi_x(g_j)-\ln\Tr\pi(g_j)^*\rho\pi(g_j)=\ratefunction{\rho}(x),
\end{equation}
which is possible by Proposition~\ref{prop:ratefunctionexpressions}. Let $x=\action{h}{x_0}$ with $h\in K$ and $x_0\in\positivechamber$. By Lemma~\ref{lem:unipotentminlength} we can assume that $\Tr\pi(g_j)^*\rho\pi(g_j)\le\Tr\pi(g_jn)^*\rho\pi(g_jn)$ for every $j\in\mathbb{N}$ and $n\in hNh^{-1}$ (otherwise replace $g_j$ with a minimiser: this increases the sequence but the limit cannot increase since it was already equal to the supremum). This means that for $\nu\in \action{h}{\mathfrak{n}}$ we have
\begin{equation}
\begin{split}
0
 & = \left.\frac{\ed}{\ed s}\Tr\pi(g_j\exp s\nu)^*\rho\pi(g_j\exp s\nu)\right|_{s=0}  \\
 & = \Tr T_e\pi(\nu)^*\pi(g_j)^*\rho\pi(g_j)+\pi(g_j)^*\rho\pi(g_j)T_e\pi(\nu)  \\
 & = 2\Re\Tr\pi(g_j)^*\rho\pi(g_j)T_e\pi(\nu),
\end{split}
\end{equation}
and therefore $J(\frac{\pi(g_j)^*\rho\pi(g_j)}{\Tr\pi(g_j)^*\rho\pi(g_j)})\in \action{h}{i\mathfrak{t}^*}$.

Let $\rho_j=\frac{\pi(g_j)^*\rho\pi(g_j)}{\Tr\pi(g_j)^*\rho\pi(g_j)}$. By compactness of $\states(\mathcal{H})$ we may assume (after passing to a subsequence) that the sequence $\rho_1,\rho_2,\ldots$ converges to some state $\rho_\infty$. For every $g\in G_x$ we have
\begin{equation}
\begin{split}
\ratefunction{\rho}(x)
 & \ge \limsup_{j\to\infty}\ln\chi_x(g_jg)-\ln\Tr\pi(g_jg)^*\rho\pi(g_jg)  \\
 & = \limsup_{j\to\infty}\ln\chi_x(g_j)+\ln\chi_x(g)-\ln\Tr\pi(g)^*\rho_j\pi(g)-\ln\Tr\pi(g_j)^*\rho\pi(g_j)  \\
 & = \ratefunction{\rho}(x)+\limsup_{j\to\infty}\ln\chi_x(g)-\ln\Tr\pi(g)^*\rho_j\pi(g)  \\
 & = \ratefunction{\rho}(x)+\ln\chi_x(g)-\ln\Tr\pi(g)^*\rho_\infty\pi(g),
\end{split}
\end{equation}
therefore
\begin{equation}
0\ge\ln\chi_x(g)-\ln\Tr\pi(g)^*\rho_\infty\pi(g).
\end{equation}
Taking the supremum over $g\in G_x$ we obtain $0\ge\ratefunction{\rho_\infty}(x)$ which, by Proposition~\ref{prop:ratefunctionnonnegative} implies $x=J(\rho_\infty)$.

The sequence $x_j=J(\rho_j)$ converges to $x$ and satisfies $x_j\in\action{h}{i\mathfrak{t}}$, therefore for large $j$ we have $\action{g_j}{x_j}\in \action{K_x}{x_j}$ by Lemma~\ref{lem:KxorbitGxinvariant}. $K_x$ acts by isometries (for any $\Ad$-invariant inner product) and fixes $x$, therefore $\action{g_j}{x_j}\to x$.
\end{proof}

When $\ratefunction{\rho}$ is continuous on its domain, Propositions~\ref{prop:specialx} and~\ref{prop:specialxdense} imply \eqref{eq:LDPlower} for every open set. In general we have not been able to prove continuity, although we conjecture that it is indeed true. However, for states with full support Proposition~\ref{prop:invertiblecontinuity} is sufficiently strong to finish the proof as follows.
\begin{proof}[Proof of part~\ref{it:LDPlower} of Theorem~\ref{thm:LDP}]
Let $x\in O\cap\domain\ratefunction{\rho}$ and write $x=\action{h}{x_0}$ with $h\in K$ and $x_0\in\positivechamber$. Let $y\in\relativeinterior((\action{h}{\positivechamber})\cap\domain\ratefunction{\rho})$ be arbitrary. Then for $s\in[0,1)$ we have $sx+(1-s)y\in\relativeinterior((\action{h}{\positivechamber})\cap\domain\ratefunction{\rho})$, so by Proposition~\ref{prop:invertiblecontinuity} $\ratefunction{\rho}$ is continuous at these points. From Propositions~\ref{prop:specialx} and~\ref{prop:specialxdense} we conclude that
\begin{equation}
\lim_{m\to\infty}\frac{1}{m}\ln\mu_m(O)\ge-\ratefunction{\rho}(sx+(1-s)y).
\end{equation}
We take the limit $s\to 1$ and note that the function $s\mapsto\ratefunction{\rho}(sx+(1-s)y)$ is finite, convex and lower semicontinuous on $[0,1]$, therefore also continuous. Thus $-\ratefunction{\rho}(x)$ is a lower bound.
\end{proof}

\section*{Acknowledgements}

We acknowledge financial support from the European Research Council (ERC Grant Agreement no. 337603 and 818761) and VILLUM FONDEN via the QMATH Centre of Excellence (Grant no. 10059) and from Universidad de los Andes, Faculty of Sciences project INV-2017-51-1445 (AB). This research was supported by the J\'anos Bolyai Research Scholarship of the Hungarian Academy of Sciences and the National Research, Development and Innovation Fund of Hungary within the Quantum Technology National Excellence Program (Project Nr.~2017-1.2.1-NKP-2017-00001) and via the research grants K124152, KH129601 (PV).

\bibliography{refs}{}

\end{document}